  \pgfplotsset{
        compat=1.8}
\newtheorem{theorem}{Theorem}[section]
\newtheorem{lemma}[theorem]{Lemma}
\theoremstyle{definition}
\newcommand{\LL}{\ensuremath{\mathcal{L}}}
\newcommand{\NN}{\ensuremath{\mathcal{N}}}
\newcommand{\DD}{\ensuremath{\mathcal{D}}}
\newcommand{\fD}{\ensuremath{\mathfrak{D}}}
\def \be{\begin{equation}}
\def \ee{\end{equation}}
\def \ba{\begin{eqnarray}}
\def \ea{\end{eqnarray}}
\title{The 0th and 2nd Laws of Black Hole Mechanics in Einstein-Maxwell-Scalar Effective Field Theory}
\author{Iain Davies\\
{\small Department of Applied Mathematics and Theoretical Physics, University of Cambridge,} \\ {\small Wilberforce Road, Cambridge CB3 0WA, United Kingdom} \\
{\small id318@cam.ac.uk}
}
\begin{document}

\maketitle

\begin{abstract}

There has been recent progress in extending the 0th and 2nd laws of black hole mechanics to gravitational effective field theories (EFTs).  We generalize these results to a much larger class of EFTs describing gravity coupled to electromagnetism and a real scalar field. We also show that the 0th law holds for the EFT of gravity coupled to electromagnetism and a charged scalar field.

\end{abstract}

\tableofcontents

\section{Introduction}

The Laws of Black Hole Mechanics are a set of theorems determining the classical properties of black holes. Their striking resemblance to the Laws of Thermodynamics leads to an interpretation of black holes as thermodynamic objects, which is made concrete through the mechanism of Hawking radiation.

The original proofs of the laws \cite{Hawking:1973} \cite{Bardeen:1973} require that the theory of gravity is the 2-derivative Einstein-Hilbert action, with a theory of matter, such as Maxwell theory or a minimally coupled scalar field, that satisfies suitable energy conditions. However, we know that 2-derivative Einstein-Maxwell theory cannot be the complete description of gravity and electromagnetism on all scales as it is not a UV complete theory. Generically we expect any low energy limit of a UV theory of gravity to come with higher derivative corrections, which will invalidate the standard proofs of the Laws of Black Hole Mechanics. Since we do not expect these corrections to change the physical interpretation of black holes as thermodynamic objects, this is a problem.

There have been a variety of attempts to reconcile the laws in higher derivative theories of gravity, some of which are reviewed by Sarkar in 2019 in \cite{Sarkar:2019}. In particular, Wald proved in \cite{Wald:1993nt} that a modified, but still geometric definition, of black hole entropy could be used to prove the "equilibrium state" version of the 1st Law in any diffeomorphism-invariant theory of gravity with arbitrary matter fields. However, this definition of the entropy fails to satisfy a 2nd Law, and he had to assume that the 0th Law holds via the assumption of a bifurcate Killing horizon.

Recently however, there were several developments in the proving the 0th, 1st and 2nd laws in the setting of effective field theory (EFT). This setting requires assuming two things: (a) the Lagrangian is a series of terms with increasing derivatives coming with coefficients that scale in appropriate powers of some UV length scale $l$, and (b) any time or length scale $L$ associated with the solution satisfies $L\gg l$. The first assumption is physically reasonable if we view our theory as a low energy limit of some UV complete theory of gravity. The second assumption means that higher derivative terms are less important, and so our solution remains in the regime of validity of the EFT.  

Let us briefly review the spate of recent results.

\textbf{0th Law}: Bhattacharyya et al \cite{Bhat:2022} proved that the 0th Law holds for any diffeomorphism-invariant EFT of gravity without matter. The 0th Law states that the surface gravity, $\kappa$, of a stationary black hole is constant across the horizon\footnote{Note that in order to define the surface gravity, the horizon of the black hole must be a Killing horizon. Hawking proved this is always the case for 2-derivative GR in his rigidity theorem \cite{Hawking:1972}, under the assumption of analyticity. Recent work by Hollands et al \cite{Hollands:2022ajj} has proved the rigidity theorem also holds in the EFT of gravity with no matter, also under the assumption of analyticity.}. The proof uses Gaussian Null Co-ordinates and the concept of "boost weight" to show that derivatives of $\kappa$ tangent to the horizon are proportional to a component of the equations of motion, which is set to 0. 

\textbf{1st Law}: Biswas, Dhivakara and Kundu \cite{Biswas:2022} generalized the Wald entropy \cite{Wald:1993nt} to prove the "physical process" version of the 1st Law for an arbitrary higher derivative diffeomorphism-invariant, gauge-independent theory of gravity, electromagnetism and a real scalar field. The physical process version of the 1st Law concerns a stationary black hole that is perturbed by some matter before settling down to a new stationary configuration. It relates the change in entropy $\delta S$ to the mass $\delta M$, angular momentum $\delta J$ and charge $\delta Q$ of the matter perturbation:
\be
    \frac{\kappa}{2 \pi} \delta S = \delta M - \Omega_H \delta J - \Phi_{bh} \delta Q
\ee
where $\Omega_H$ is the angular velocity of the horizon and $\Phi_{bh}$ is the electrostatic potential.

\textbf{2nd Law}: Hollands, Kovacs and Reall (HKR) \cite{Hollands:2022}, following on from work by Wall \cite{Wall:2015} and Bhattacharyya et al \cite{Bhattacharyya:2021jhr}, proved a version of the 2nd Law for any diffeomorphism-invariant EFT of gravity and a real scalar field. The 2nd Law states that the entropy of a dynamical black hole is non-decreasing in time, $\dot{S}\geq 0$. HKR consider a dynamical black hole settling down to an equilibrium stationary state, and that remains in the regime of validity of the EFT as described above. They were able to define an entropy which is non-decreasing to quadratic order in perturbations around the stationary state, up to $O(l^N)$ terms, where $l^N$ is the order up to which we know our EFT. Furthermore, this entropy reduces to the Wald entropy in equilibrium.

Finally, in the companion to this paper \cite{DaviesReall:2023}, Davies and Reall were able to show that a further extension of the HKR procedure can strengthen the 2nd Law result significantly by dropping its perturbative nature. They define an entropy which satisfies a \textit{non-perturbative} 2nd Law in vacuum gravity EFT, up to $O(l^N)$ terms. This entropy reduces to the Wald entropy in equilibrium, satisfies the 1st Law, and is purely geometrically defined for theories with up to 6 derivatives.

Taken together, these results mean we now have a much better understanding of the Laws of Black Hole Mechanics in EFT. However, the results we have for the 0th Law and 2nd Law are only applicable to gravity with minimal matter couplings, or to gravity with the simplest matter field, a scalar field. Here we ask, are these results robust to the addition of non-minimal couplings of some more complicated matter field? The only field other than the metric and scalar field for which we know the classical approximation may be valid is the Maxwell field, and hence this seems like an important addition to make.

In this paper we extend the aforementioned works by completing the story for the EFT of gravity, electromagnetism and a real (uncharged) scalar field. We prove a Generalized 0th Law holds exactly, and that the 2nd Law holds in the sense of Davies and Reall. Taken all together, this means there is now a 0th, 1st and 2nd Law for such theories. Along the way we will also show the 0th Law still holds even if the scalar field is charged, and discuss how the 2nd Law could be generalized in this case. This gives further evidence that these proofs are robust to more complicated matter models, and that even in higher derivative theories of gravity, black holes will still obey the laws.

The paper is broken down as follows. In Section 2, we define our Einstein-Maxwell-Scalar EFT. In Section 3, we define two distinct choices of Gaussian Null Co-ordinates (GNCs) and the notion of "boost weight". We will work in GNCs throughout the paper. In Section 4, we state the Generalized 0th Law and sketch its proof for Einstein-Maxwell-Scalar EFT. Section 5 contains the details of this proof. We also show how the proof can be modified if the scalar field is charged. In Section 6, we make precise the scenario in which we will prove the 2nd Law, and review the previous work on the matter. In Section 7, we prove the 2nd Law for Einstein-Maxwell-Scalar EFT, and discuss its generalization if the scalar is charged.

\section{Einstein-Maxwell-Scalar EFT}

We consider the EFT of gravity, electromagnetism and a real\footnote{The EFT of a charged, complex scalar field is discussed in Section \ref{ChargedScalar}.} scalar field $\phi$, which we shall refer to as Einstein-Maxwell-Scalar EFT. In EFT, the Lagrangian is a sum of terms ordered by their number of derivatives. We assume diffeomorphism invariance and electromagnetic gauge invariance\footnote{The assumption that the Lagrangian is invariant under an electromagnetic gauge transformation $A_{\alpha}\rightarrow A_{\alpha}+\nabla_{\alpha}{\chi}$ will rule out, for example, Chern-Simons terms, which aren't themselves gauge invariant but do produce gauge invariant equations of motion. See very recent work \cite{Deo:2023} for a linearized 2nd Law for Chern-Simons terms. }, so that the Lagrangian consists only of contractions of $R_{\alpha \beta \gamma \delta}$, $F_{\alpha \beta}$, $\phi$ and their covariant derivatives,

\begin{equation}
    \mathcal{L} = \mathcal{L}(g_{\alpha \beta}, R_{\alpha \beta \gamma \delta}, \nabla_{\alpha}R_{\alpha \beta \gamma \delta}, ... , F_{\alpha \beta}, \nabla_{\alpha}F_{\alpha \beta}, ... , \phi, \nabla_{\alpha}\phi, ...  )
\end{equation}

The most general Lagrangian of this form with up to 2 derivatives can be written as\footnote{We have included the 0-derivative term $V(\phi)$ in the 2-derivative Lagrangian $\LL_2$. Naively in EFT, we should expect $V(\phi)$ to come with a factor $1/l^2$. However, we assume $V(\phi)$ is comparable to the cosmological constant $\Lambda$, which is extremely small for somewhat mysterious reasons. More precisely, we assume $|V|\leq 1/L^2$, where $L$ is any typical length scale of the solution, and so $V(\phi)$ is of no larger scale than the 2-derivative terms.}

\begin{equation} \label{0and2-deriv}
    \mathcal{L}_{2} = R - V(\phi) -\frac{1}{2} \nabla_{\alpha}{\phi} \nabla^{\alpha}{\phi} - \frac{1}{4} c_1(\phi) F_{\alpha\beta} F^{\alpha\beta} + c_2(\phi) F_{\alpha \beta} F_{\gamma \delta} \epsilon^{\alpha \beta \gamma \delta}
\end{equation}
An arbitrary function of $\phi$ multiplying $R$ can be eliminated by redefining the metric, whilst an arbitrary function of $\phi$ multiplying $\nabla_{\alpha}{\phi} \nabla^{\alpha}{\phi}$ can be eliminated by redefining $\phi$ \cite{Weinberg:2008}. Here we have taken units with $16\pi G = 1$ and rescaled $F_{\alpha \beta}$ appropriately. The final term only appears in $d=4$, where the volume form $\epsilon_{\alpha \beta \gamma \delta}$ has 4 indices; in higher dimensions, it is taken that this term is not present. 

The only condition we put on the arbitrary functions $V(\phi), c_1(\phi)$ and $c_2(\phi)$ is that $c_1(\phi)>0$. This is a sufficient condition for the energy-momentum tensor of the leading order 2-derivative theory to satisfy the Null Energy Condition (NEC). For Einstein-Maxwell theory without a scalar field, $c_1=1$, so this positivity condition is also motivated on the grounds that we do not expect the scalar field to change the sign of $c_1$. If we were additionally to impose $V(\phi)\geq 0$ then the 2-derivative energy-momentum tensor would also satisfy the Dominant Energy Condition (DEC), which is the condition assumed in the original proof of the 0th Law by Bardeen, Carter and Hawking. However as noted in e.g. \cite{Dey:2021}, a weaker version called the Null Dominant Energy Condition\footnote{The Null Dominant Energy Condition states that the vector $W^{\alpha} = -T^{\alpha}_{\,\,\beta} V^{\beta}$ is future-directed casual for all future-directed null vectors $V^{\alpha}$.} is in fact sufficient for the proof, which is satisfied by our 2-derivative theory regardless of the sign of $V(\phi)$ (and also includes cases such as Anti de-Sitter that are excluded by the DEC). Indeed, in the following proofs we will require no condition on $V(\phi)$.

In the full EFT action, higher derivative terms come with a factor of some UV scale $l$ for each extra derivative:

\begin{equation} \label{action}
    S = \int \text{d}^d x \sqrt{-g} \left(\LL_{2} + \sum_{n=1}^{\infty} l^{n} \mathcal{L}_{n+2} \right)
\end{equation}
where $\mathcal{L}_{n+2}$ contains all terms with $n+2$ derivatives. 

The equations of motion for this action are $E_{\alpha \beta}=0, E_{\alpha}=0, E=0$, where
\begin{equation} \label{EoMs}
\begin{split}
    E_{\alpha\beta} \equiv \frac{1}{\sqrt{-g}} \frac{\delta S}{\delta g^{\alpha \beta}} = E^{(0)}_{\alpha \beta} + \sum_{n=1}^{\infty} l^{n} E^{(n)}_{\alpha \beta},  \quad & \quad  E_\alpha \equiv -\frac{1}{\sqrt{-g}} g_{\alpha \beta} \frac{\delta S}{\delta A_{\beta}} = E^{(0)}_{\alpha} + \sum_{n=1}^{\infty} l^{n} E^{(n)}_{\alpha}\\
    E \equiv \frac{1}{\sqrt{-g}} \frac{\delta S}{\delta \phi} = & \, E^{(0)} + \sum_{n=1}^{\infty} l^{n} E^{(n)}
\end{split}
\end{equation}
where $E^{(0)}_{\alpha \beta}, E^{(0)}_{\alpha}, E^{(0)}$ are the result of varying the 2-derivative terms from $\LL_{2}$, i.e.
\begin{equation} \label{0th order EoM 1}
    E^{(0)}_{\alpha\beta}= R_{\alpha \beta} - \frac{1}{2} \nabla_{\alpha} \phi \nabla_{\beta} \phi -\frac{1}{2} c_1(\phi) F_{\alpha \delta} F_{\beta}^{\,\,\,\,\delta} - \frac{1}{2} g_{\alpha \beta} \left( R - V(\phi) -\frac{1}{2} \nabla_{\gamma}{\phi} \nabla^{\gamma}{\phi} - \frac{1}{4} c_1(\phi) F_{\gamma \delta} F^{\gamma\delta} \right)
\end{equation}
\begin{equation}
    E^{(0)}_\alpha = \nabla^{\beta}{\Big[c_1(\phi)F_{\alpha\beta} - 4 c_2(\phi) F^{\gamma \delta} \epsilon_{\alpha \beta \gamma \delta}\Big]}
\end{equation}
\begin{equation} \label{0th order EoM 2}
    E^{(0)} = \nabla^{\alpha}{\nabla_{\alpha}{\phi}}-V'(\phi)-\frac{1}{4} c'_1(\phi) F_{\alpha \beta} F^{\alpha \beta} + c'_2(\phi) F_{\alpha \beta} F_{\gamma \delta} \epsilon^{\alpha \beta \gamma \delta}
\end{equation}

\section{Gaussian Null Coordinates}
We will be concerned with quantities on the event horizon $\NN$ of a black hole, which is a null hypersurface. We assume $\mathcal{N}$ is smooth and has generators that extend to infinite affine parameter to the future. The smoothness assumption will always be true in the stationary setting of the 0th Law, but is not generally true when considering dynamical black holes, as in the 2nd Law. However, it seems a reasonable assumption for the situation of a black hole settling down to equilibrium, as envisioned in \cite{Wall:2015}, \cite{Bhattacharyya:2021jhr} and \cite{Hollands:2022}.

To describe quantities near $\NN$, we will use two appropriate choices of Gaussian Null Coordinates (GNCs). The first applies to both the stationary and dynamical setting, whilst the second will only be used in the stationary case to prove the 0th Law.

\subsection{Affinely Parameterized GNCs} \label{APGNCs}

Here we use the same notation as \cite{Hollands:2022} and \cite{Davies:2023}. Assume all generators intersect a spacelike cross-section $C$ exactly once, and take $x^A$ to be a co-dimension 2 co-ordinate chart on $C$. Let the null geodesic generators have affine parameter $v$ and future directed tangent vector $l^\alpha$ such that $l = \partial_{v}$ and $v=0$ on $C$. We can transport $C$ along the null geodesic generators a parameter distance $v$ to obtain a foliation $C(v)$ of $\mathcal{N}$. Finally, we uniquely define the null vector field $n^\alpha$ by $n \cdot (\partial/\partial x^A) =0$ and $n \cdot l = 1$. The co-ordinates $(r, v, x^A)$ are then assigned to the point affine parameter distance $r$ along the null geodesic starting at the point on $\mathcal{N}$ with coordinates $(v,x^A)$ and with tangent $n^\alpha$ there. The metric in these GNCs is given by 
\begin{equation}
    g = 2 \text{d}v \text{d}r - r^2 \alpha(r,v,x^C) \text{d}v^2 -2 r\beta_{A}(r,v,x^C) \text{d}v \text{d}x^A + \mu_{A B}(r,v,x^C) \text{d}x^A \text{d}x^B, \quad l = \partial_{v}, \quad n=\partial_{r}
\end{equation}
This choice of coordinates will be referred to as \textit{affinely parameterized GNCs}. $\mathcal{N}$ is the surface $r=0$, and $C$ is the surface $r=v=0$. The inverse of $\mu_{A B}$ is denoted by $\mu^{A B}$, and we raise and lower $A, B, C, ...$ indices with $\mu^{A B}$ and $\mu_{A B}$. We denote the induced volume form on $C(v)$ by $\epsilon_{A_1 ... A_{d-2}} = \epsilon_{r v A_1 ... A_{d-2}}$ where $d$ is the dimension of the spacetime. The covariant derivative on $C(v)$ with respect to $\mu_{A B}$ is denoted by $D_{A}$. We also define
\begin{equation}
    K_{A B} \equiv \frac{1}{2} \partial_{v}{\mu_{A B}}, \quad \bar{K}_{A B} \equiv \frac{1}{2} \partial_{r}{ \mu_{A B} }, \quad K \equiv K^{A}\,_{A}, \quad \bar{K} \equiv \bar{K}^{A}\,_{A}
\end{equation}
$K_{AB}$ describes the expansion and shear of the horizon generators. $\bar{K}_{AB}$ describes the expansion and shear of the ingoing null geodesics orthogonal to a horizon cut $C(v)$.

Affinely parameterized GNCs are not unique: we are free to change the affine parameter on each generator of $\NN$ by $v'=v/a(x^A)$ with arbitrary $a(x^A)>0$. This will lead to a change $(v,r, x^A)\rightarrow (v',r', x'^A)$ with $v' = v/a(x^A) + O(r)$, $r'=a(x^A)r+O(r^2)$, $x'^A = x^A + O(r)$ near the horizon. Details of how this transformation changes the quantities above are given in \cite{Hollands:2022}. The remaining freedom in our affinely parameterized GNCs is to change our coordinate chart $x^A$ on $C$, however all calculations in this paper are manifestly covariant in $A, B,...$ indices and so this freedom will not change any of the expressions.

\subsubsection{Boost Weight}\label{BW}
An important concept in this set of GNCs is the \textit{boost weight} of a quantity. Suppose we take $a$ to be constant and consider the rescaling $v' = v / a, r'=a r$, which preserves the form of the GNCs above. If a quantity $T$ transforms as $T'=a^b T$, then $T$ is said to have boost weight $b$. See \cite{Hollands:2022} for a full definition. Some important facts are stated here:
\begin{itemize}
    \item A tensor component $T^{\mu_1 ... \mu_n}_{\beta_1 ... \beta_m}$ has boost weight given by the sum of $+1$ for each $v$ subscript and each $r$ superscript and $-1$ for each $r$ subscript and $v$ superscript. $A, B, ...$ indices contribute 0. E.g. $T^{A}_{v v r B}$ has boost weight $+1$.
    \item $\alpha$, $\beta_{A}$, and $\mu_{A B}$ have boost weight 0. $K_{A B}$ and $\bar{K}_{A B}$ have boost weight $+1$ and $-1$ respectively.
    \item  If $T$ has boost weight $b$, then $D_{A_1}{... D_{A_n}{ \partial_{v}^{p} \partial_{r}^q T } }$ has boost weight $b+p-q$.
    \item If $X_{i}$ has boost weight $b_i$ and $T= \prod_{i} X_{i}$, then $T$ has boost weight $b=\sum_{i} b_{i}$.
\end{itemize}

In Lemma 2.1 of \cite{Hollands:2022}, it is proved that boost weight is independent of the choice of affinely parameterized GNCs on $\NN$. More precisely, a quantity of certain boost weight in $(r, v, x^A)$ GNCs on $\NN$ can be written as the sum of terms of the same boost weight in $(r', v', x'^A)$ GNCs on $\NN$, where $v'=v/a(x^A)$ on $\NN$.

\subsection{Killing Vector GNCs} \label{KVGNCs}

In the stationary setting of the 0th Law we will also use another choice of GNCs, hereafter referred to as \textit{Killing vector GNCs}. 

In standard 2-derivative GR with a wide range of matter models, it can be proved that the future event horizon $\NN$ of a stationary analytic black hole spacetime is a Killing horizon whose normal is some Killing vector $\xi$ \cite{Hawking:1973}. Recent work by Hollands et al. \cite{Hollands:2022ajj} has extended this result to arbitrary higher derivative effective field theories of gravity with no matter fields present. Here we assume this result still holds for our Einstein-Maxwell-Scalar EFT, and that we can drop the analyticity assumption.

Therefore we can take a similar construction to the above, except with the null geodesic generators having non-affinely parameterized, future-directed tangent vectors $\xi=\partial_{\tau}$. In the notation of \cite{Bhat:2022}, this leads to coordinates $(\rho, \tau, x^A)$ with metric
\begin{equation}
    g = 2 \text{d}\tau \text{d}\rho - \rho X(\rho,x^C) \text{d}\tau^2 +2 \rho \omega_{A}(\rho,x^C) \text{d}\tau \text{d}x^A + h_{A B}(\rho,x^C) \text{d}x^A \text{d}x^B, \quad \xi = \partial_{\tau}, \quad \chi=\partial_{\rho}
\end{equation}
$\mathcal{N}$ is the surface $\rho=0$, and $C$ is the surface $\rho=\tau=0$. In these co-ordinates we raise $A, B, C,...$ indices with $h^{A B}$ and $h_{A B}$ and denote the induced volume form on $C(\tau)$ by $\varepsilon_{A_1 ... A_{d-2}} = \epsilon_{\rho \tau A_1 ... A_{d-2}}$. The covariant derivative on $C(\tau)$ with respect to $h_{A B}$ is denoted by $\DD_{A}$. 

The differences between the two GNCs are two-fold. Firstly, since $\xi=\partial_{\tau}$ is a Killing vector, the unknown metric coefficients $X, \omega_A$ and $h_{A B}$ are independent of $\tau$. Secondly, the fact that $\tau$ is not necessarily an affine parameter means the coefficient of $\text{d}\tau^2$ only comes with a factor of $\rho$, whereas $\text{d}v^2$ comes with a factor of $r^2$ in the affinely parameterized GNCs.

The relationship between these two forms of GNCs is crucial to prove the 0th Law for this theory, following the method of \cite{Bhat:2022}.

\section{The Generalized 0th Law}
We proceed to prove a Generalized 0th Law of Black Hole Mechanics for this theory. The 0th Law concerns stationary black hole solutions $(g_{\alpha \beta}, F_{\alpha \beta}, \phi)$ to the equations of motion above. 

\subsection{Assumptions}

We make the following assumptions in order to prove the Generalized 0th Law:

1. The rigidity theorem of \cite{Hollands:2022ajj} can be extended to this theory, i.e. that the future event horizon $\NN$ of the black hole is a Killing horizon with Killing vector $\xi$.

2. The matter fields are invariant under this Killing vector, i.e. 
\begin{equation}
    \mathcal{L}_{\xi} F = 0, \quad\quad \mathcal{L}_{\xi} \phi = 0
\end{equation}
In Killing vector GNCs, these imply that $\partial_{\tau} F_{\mu \nu}=0$ and $\partial_{\tau} \phi=0$.

3. The BH solution is analytic in $l$, i.e. we can write
    \begin{equation}
       \begin{split}
           g_{\alpha \beta} &= g_{\alpha \beta}^{(0)} + l g_{\alpha \beta}^{(1)} + l^2 g_{\alpha \beta}^{(2)}+ \ldots\\
           F_{\alpha \beta} &= F_{\alpha \beta}^{(0)} + l F_{\alpha \beta}^{(1)} + l^2 F_{\alpha \beta}^{(2)}+ \ldots\\
           \phi &= \phi^{(0)} + l \phi^{(1)} + l^2 \phi^{(2)}+ \ldots
       \end{split}
    \end{equation}
    
    In particular, we can write the Killing vector GNC metric components as series in $l$: 
\be
\begin{split}
    X&=X^{(0)}+lX^{(1)} + l^2 X^{(2)}+...\\
    \omega_A&=\omega_A^{(0)}+l\omega_A^{(1)} + l^2 \omega_A^{(2)}+...\\
    h_{A B}&=h_{A B}^{(0)}+lh_{A B}^{(1)} + l^2 h_{A B}^{(2)}+...
\end{split}
\ee
    
4. Any spacelike cut, $C$, of the horizon is compact and simply connected. The second assumption implies every closed 1-form on $C$ is exact. These assumptions hold e.g. if $C$ has spherical $S^{d-2}$ topology with $d\geq 4$ but not e.g. if $C$ has the topology of a black ring $S^1\times S^{d-3}$.

\subsection{Statement of Generalized 0th Law}

The surface gravity of the horizon $\NN$ of a stationary black hole is defined by

\begin{equation}
    \xi^\beta \nabla_\beta \xi_\alpha \Big|_\NN = \kappa \xi_\alpha
\end{equation}

The 0th Law of Black Hole Mechanics is the statement that $\kappa$ is constant on $\NN$. One can compute both sides of this equation in the Killing vector GNCs of Section (\ref{KVGNCs}) and find

\begin{equation}
    \kappa = \frac{1}{2} X(\rho, x^C)\Big|_{\rho=0}
\end{equation}

From this we see that $\kappa$ is clearly independent of $\tau$. Therefore, to prove the 0th Law we must show that

\begin{equation}
    \partial_{A} X(\rho, x^C)\Big|_{\rho=0} = 0
\end{equation}

When electromagnetic fields are included in a black hole theory, the 0th Law is usually generalized to include a statement about their behaviour on the horizon. Our \textit{Generalized 0th Law} formulation is 

\begin{equation} \label{0thLawCond}
    \partial_{A} X(\rho, x^C)\Big|_{\rho=0} = 0, \quad \text{and} \quad F_{\tau A}(\rho, x^C)\Big|_{\rho=0} = 0
\end{equation}

The interpretation of the second condition can be seen as follows. By Cartan's formula, $\LL_\xi F = \text{d} (\iota_\xi F ) + \iota_\xi \text{d}F$. $\text{d}F=0$ because $F$ is a Maxwell field. We have also assumed $\LL_\xi F = 0$ above. Hence $\text{d} (\iota_\xi F ) = 0$, and so at least locally, $\iota_\xi F = \text{d} \Phi$ for some scalar $\Phi$. This scalar is the \textit{electric potential} from the definition of the 1st Law. The condition $F_{\tau A}\big|_{\rho=0} = 0$ is then equivalent to $\partial_A \Phi\big|_{\rho=0}=0$, which says that the electric potential is constant on the horizon.

In the course of the proof, we will see that the two conditions in (\ref{0thLawCond}) are not independent. In fact, we will need to show $F_{\tau A}\Big|_{\rho=0} = 0$ in order to prove $\partial_{A} X\Big|_{\rho=0} = 0$.

\subsection{Plan of the Proof}

In \cite{Bhat:2022}, Bhattacharyya et al prove the 0th Law for gravitational EFTs without matter. Here we generalize their method to apply to our Einstein-Maxwell-Scalar EFT. The gravitational parts go through largely unchanged, whilst additional steps are needed to deal with the Maxwell and scalar fields. Here, we sketch the main ideas of the proof.

Let $\Phi_I$ denote the collection of fields $(g_{\mu \nu}, F_{\mu \nu}, \phi)$. We have assumed we can write this as a series in $l$: $\Phi_I= \Phi^{(0)}_I+l\Phi^{(1)}_I+l^2 \Phi^{(2)}_I+...$ . Let $E_I[\Phi_J]$ denote the collection of variations of the action $S$ defined in (\ref{EoMs}). The equations of motion are

\be \label{EoMs by l}
    E_I[\Phi_J] \equiv E^{(0)}_{I}[\Phi_J]+\sum_{n=1}^{\infty}l^n E^{(n)}_{I}[\Phi_J]=0
\ee

where $l^n E^{(n)}_{I}[\Phi_J]$ comes from varying $l^n \LL_{n+2}$. This equation must hold to each order in $l$ individually. The order $l^0$ part is simply

\be
    E_{I}^{(0)}[\Phi_J^{(0)}] = 0
\ee

We will show that in Killing vector GNCs on the horizon, $E_{\tau A}^{(0)}$ and $E_{\tau \tau}^{(0)}$ evaluate to

\be
\begin{split} \label{EtauA and Etautau}
       E_{\tau A}^{(0)}[\Phi_{J}]\Big|_{\rho=0} =& -\frac{1}{2}\partial_{A}X - \frac{1}{2} c_1(\phi)\left( F_{A B} h^{B C} - F_{\tau \rho} \delta_{A}^{C} \right) F_{\tau C}\\
       E_{\tau \tau}^{(0)}[\Phi_{J}]\Big|_{\rho=0} =& - \frac{1}{2} c_1(\phi) F_{\tau A} F_{\tau B} h^{A B}
\end{split}
\ee
From the first component, we see that $E_{\tau A}^{(0)}[\Phi_{J}^{(0)}]\Big|_{\rho=0} = 0$ implies $\partial_{A} X^{(0)}\Big|_{\rho=0} = 0$ if $F^{(0)}_{\tau A}\Big|_{\rho=0} = 0$. But from the second component, $E_{\tau \tau}^{(0)}[\Phi_{J}^{(0)}]\Big|_{\rho=0} = 0$ implies $F^{(0)}_{\tau A}\Big|_{\rho=0} = 0$ because $h^{(0) A B}$ is positive definite and we assumed $c_1>0$. Thus the Generalized 0th Law holds to zeroth order in $l$.

The proof will then proceed by induction. We will assume that $\partial_{A} X^{(n)}\Big|_{\rho=0} = 0$ and $F^{(n)}_{\tau A}\Big|_{\rho=0} = 0$ for $n< k$, and then prove that $\partial_{A} X^{(k)}\Big|_{\rho=0} = 0$ and $F^{(k)}_{\tau A}\Big|_{\rho=0} = 0$.

To do this, we will consider the order $l^k$ part of two components of (\ref{EoMs by l}). In a similar fashion to Bhattacharyya et al, we will show that $E_{\tau A}[\Phi_J]$ greatly simplifies on the horizon, regardless of the higher order terms in the EFT. In particular, its order $l^k$ part is of the form
\be \label{EtauA}
    \text{At order} \,\, l^k, \quad E_{\tau A}[\Phi_J]\Big|_{\rho=0} = -\frac{1}{2} l^k \partial_{A} X^{(k)} + l^k M_{A}^{\,\,\,\,C} F^{(k)}_{\tau C} = 0
\ee
where $M_{A}^{\,\,\,\,C}$ is a function only of the lowest order fields $\Phi_I^{(0)}$. From this we can see that the two statements in the Generalized 0th Law are not independent: if we can show that $F^{(k)}_{\tau A}\Big|_{\rho=0} = 0$, then we immediately have $\partial_{A} X^{(k)}\Big|_{\rho=0} = 0$. 

To do this final step we must look at another component of the equation of motion. We will show that the order $l^k$ part of $E_{\tau}[\Phi_J]\Big|_{\rho=0}$ can be brought into the form
\be \label{Etau}
    \text{At order} \,\, l^k, \quad E_{\tau}[\Phi_J]\Big|_{\rho=0} = l^k \DD^{(0)}_A \left[ N^{A B} F_{\tau B}^{(k)} \right] = 0
\ee
where $N^{A B}$ is a function only of the lowest order fields $\Phi_I^{(0)}$, and $\DD^{(0)}_A$ is the covariant derivative with respect to $h_{A B}^{(0)}$. We will show that this equation has no non-trivial solutions for $F^{(k)}_{\tau A}\Big|_{\rho=0}$ if every closed 1-form on $C(\tau)$ is exact. This condition follows from our assumptions on the topology of $C(\tau)$, and thus the Generalized 0th Law is proved in this case.

In order to simplify $E_{\tau A}[\Phi_J]\Big|_{\rho=0}$ and $E_{\tau}[\Phi_J]\Big|_{\rho=0}$ to the forms in (\ref{EtauA}) and (\ref{Etau}), we will need to prove a crucial fact: the Generalized 0th Law implies\footnote{In case it is confusing why we will assume the Generalized 0th Law whilst in the middle of proving it, we will be using this result up to and including order $l^{k-1}$ to complete an inductive loop at order $l^k$.} that all positive boost weight, gauge-independent quantities vanish on the horizon $\NN$. In \cite{Bhat:2022}, it is shown that a relation between Killing vector GNCs and affinely parameterized GNCs can be used to show the 0th Law implies that positive boost weight quantities built only out of metric components vanish on $\NN$. In Section \ref{PBWQH} we will show this relation can also be applied to positive boost weight quantities built out of the Maxwell field $F_{\mu \nu}$ and the scalar field $\phi$.

\section{Proof of the Generalized 0th Law}
\subsection{The Base Case: The Generalized 0th Law for 2-Derivative Einstein-Maxwell-Scalar Theory}\label{base case}

The first step in our proof will be to show that the Generalized 0th Law holds at lowest order in $l$, i.e. that
\begin{equation}
    \partial_{A} X^{(0)}\Big|_{\rho=0} = 0, \quad \text{and} \quad F_{\tau A}^{(0)}\Big|_{\rho=0} = 0
\end{equation}
This is equivalent to proving the Generalized 0th Law for the 2-derivative Einstein-Maxwell-Scalar Lagrangian $\LL_{2}$ given in (\ref{0and2-deriv}). Bardeen, Carter and Hawking's original proof of the 0th Law would achieve this, because the 2-derivative theory satisfies the Null Dominant Energy Condition as defined above. Here we give a reformulation of the proof that motivates many of the steps used in the later proof for full Einstein-Maxwell-Scalar EFT.

We proceed by studying $E^{(0)}_I[\Phi_J]$, which is the part of the equation of motion arising from $\LL_{2}$. Rewriting here for convenience, the $(\alpha \beta)$ component is
\be
    E^{(0)}_{\alpha\beta}[\Phi_J]= R_{\alpha \beta} - \frac{1}{2} \nabla_{\alpha} \phi \nabla_{\beta} \phi -\frac{1}{2} c_1(\phi) F_{\alpha \delta} F_{\beta}^{\,\,\,\,\delta} - \frac{1}{2} g_{\alpha \beta} \left( R - V(\phi) -\frac{1}{2} \nabla_{\gamma}{\phi} \nabla^{\gamma}{\phi} - \frac{1}{4} c_1(\phi) F_{\gamma \delta} F^{\gamma\delta} \right)
\ee
We will evaluate two components in Killing vector GNCs on the horizon. Firstly we will evaluate $E^{(0)}_{\tau A}[\Phi_J]\Big|_{\rho=0}$. The Ricci component $R_{\tau A}\Big|_{\rho=0}$ is evaluated in \cite{Bhat:2022}:
\be
    R_{\tau A}\Big|_{\rho=0} = -\frac{1}{2} \partial_{A} X\Big|_{\rho=0}
\ee
The second term, $\nabla_{\tau}\phi \nabla_{A}\phi$ vanishes because we assumed the scalar field is invariant under the Killing vector $\xi$, which implied $ \partial_{\tau} \phi =0$. The third term $c_1(\phi) F_{\tau \gamma} F_{A \delta} g^{\gamma \delta}$ simplifies on the horizon where $g^{\gamma \delta}$ is particularly straightforward:
\be
\begin{split} 
    c_1(\phi) F_{\tau \gamma} F_{A \delta} g^{\gamma \delta}\Big|_{\rho=0} =& c_1(\phi) \left( F_{\tau \rho} F_{A \tau} + F_{\tau C} F_{A B} h^{B C} \right)\\
    =& c_1(\phi) \left( F_{A B} h^{B C} - F_{\tau \rho} \delta^{C}_{A} \right) F_{\tau C}
\end{split}
\ee
The final bracketed term also vanishes on the horizon because the prefactor $g_{\tau A}\Big|_{\rho=0} = 0$. This leaves us with the first equation from (\ref{EtauA and Etautau}), and as discussed, we can substitute this into the order $l^0$ part of the equation of motion $E_{I}^{(0)}[\Phi_J^{(0)}] = 0$ to get that $\partial_{A}X^{(0)}\big|_{\rho=0} = 0$ if $F^{(0)}_{\tau C}\big|_{\rho=0} = 0$.

In pursuit of proving $F^{(0)}_{\tau C}\big|_{\rho=0} = 0$, let us now evaluate another component on the horizon, $E_{\tau \tau}^{(0)}[\Phi_{J}]\Big|_{\rho=0}$. The first term is $R_{\tau \tau}\Big|_{\rho=0} = R_{\mu \nu} \xi^\mu \xi^\nu \Big|_{\NN}$ which vanishes by Raychaudhuri's equation. The second term, $\nabla_{\tau}\phi \nabla_{\tau}\phi$ once again vanishes by $\partial_{\tau} \phi =0$. The third term is
\be
\begin{split}
    c_1(\phi) F_{\tau \gamma} F_{\tau \delta} g^{\gamma \delta}\Big|_{\rho=0} =& c_1(\phi) F_{\tau A} F_{\tau B} h^{A B}
\end{split}
\ee
The bracketed term vanishes again since $g_{\tau \tau}\big|_{\rho=0} = 0$. Therefore we retrieve the second equation from (\ref{EtauA and Etautau}), which we can substitute into the $l^0$ part of the equation of motion to get
\be \label{F0squared}
    c_1(\phi^{(0)}) F^{(0)}_{\tau A} F^{(0)}_{\tau B} h^{(0) A B} = 0
\ee
Now, $h_{A B}$ is the induced metric on the spacelike cut $C(\tau)$, therefore it is positive definite. This implies $h^{(0)}_{A B}$ is also positive definite since $h^{(0)}_{A B} = h_{A B}\big|_{l=0}$. Furthermore, we assumed $c_1>0$. Therefore (\ref{F0squared}) implies $F^{(0)}_{\tau A}\big|_{\rho=0}=0$, and so we have proved the Generalized 0th Law to order $l^0$.

\subsection{The Inductive Step}

We prove the Generalized 0th Law to all orders in $l$ by induction. Let us assume $\partial_{A} X^{(n)}\Big|_{\rho=0} = 0$ and $F^{(n)}_{\tau A}\Big|_{\rho=0} = 0$ for $n< k$. We now aim to prove that $\partial_{A} X^{(k)}\Big|_{\rho=0} = 0$ and $F^{(k)}_{\tau A}\Big|_{\rho=0} = 0$.

To do this, let us consider the order $l^k$ part of the full equation of motion (\ref{EoMs by l}). This is not simply $l^k E^{(k)}_{I}[\Phi_J]$ because $\Phi_J$ is itself a series in $l$. However, it will certainly not depend on $E^{(n)}_{I}$ or $\Phi_J^{(n)}$ for $n>k$ because they come with too high powers of $l$. Let us introduce the notation
\be
    f^{[n]} = \sum_{m=0}^{n}l^m f^{(n)}
\ee
Then the order $l^k$ part of (\ref{EoMs by l}) will be a subset of the terms in $E_{I}^{[k]}[\Phi_J^{[k]}]$. Furthermore, since $\Phi_J^{(k)}$ already comes with a factor $l^k$, the only place $\Phi_J^{(k)}$ can appear is in $E^{(0)}_I[\Phi_J^{[k]}]$. In particular, it will appear as $E^{(0)}_I[\Phi_J^{(0)}+l^k \Phi_J^{(k)}]$ linearized around the background $\Phi_J^{(0)}$. Therefore we can write the order $l^k$ part of (\ref{EoMs by l}) as
\be \label{orderlk}
    \text{At order} \,\, l^k, \quad E_{I}[\Phi_J] = E_{I}^{[k]}[\Phi_J^{[k-1]}] + l^k \left( \Phi_J^{(k)} \frac{\delta E^{(0)}_I}{\delta \Phi_J}[\Phi_J^{(0)}] +  \partial_\mu \Phi_J^{(k)} \frac{\delta E^{(0)}_I}{\delta (\partial_\mu \Phi_J)}[\Phi_J^{(0)}] + ... \right) 
\ee
where it is given that we only take order $l^k$ terms in the first term, and the bracketed term is $E^{(0)}_I[\Phi_J^{(0)}+l^k \Phi_J^{(k)}]$ linearized around $\Phi^{(0)}_J$. Setting this to zero allows us to solve for the fields $\Phi_J$ order-by-order in $l$: once we have solved for $\Phi_J^{(0)}$ we can study (\ref{orderlk}) at order $l$ to solve for $\Phi_J^{(1)}$, then at order $l^2$ to solve for $\Phi_J^{(2)}$ and so on.

It is difficult to study (\ref{orderlk}) however, because we do not know the form of $E^{[k]}_I$. It comes from the variation of the higher derivative parts of our EFT Lagrangian, which in theory could take a variety of forms. The only part we do know the form of is $E^{(0)}_I$. Therefore we would like to find a scenario where the dependence on the unknown $E^{[k]}_I$ vanishes.

It turns out that on the horizon, certain components of $E_{I}^{[k]}[\Phi_J^{[k-1]}]$ do necessarily vanish by our inductive hypothesis. In particular, $E_{\tau A}^{[k]}[\Phi_J^{[k-1]}]\big|_{\NN}=0$ and $E_{\tau}^{[k]}[\Phi_J^{[k-1]}]\big|_{\NN}=0$. The proof of these are left to the next two sections. In short, it follows from the fact that they are proportional to positive boost weight components in affinely parameterized GNCs. It will be shown in Section \ref{PBWQH} that positive boost weight quantities vanish on the horizon if the Generalized 0th Law holds. But by our inductive hypothesis, the Generalized 0th Law holds for the fields $\Phi_J^{[k-1]}$, which are all that $E_{\tau A}^{[k]}[\Phi_J^{[k-1]}]$ and $E_{\tau}^{[k]}[\Phi_J^{[k-1]}]$ depend on. 

Let us assume for now that these components do indeed vanish on the horizon. Then at order $l^k$, $E_{\tau A}[\Phi_J]\big|_{\rho=0}$ and $E_{\tau}[\Phi_J]\big|_{\rho=0}$ are simply given by $E^{(0)}_{\tau A}[\Phi_J^{(0)}+l^k \Phi_J^{(k)}]\big|_{\rho=0}$ and $E^{(0)}_{\tau}[\Phi_J^{(0)}+l^k \Phi_J^{(k)}]\big|_{\rho=0}$ linearized around $\Phi_I^{(0)}$.

During the proof of the base case, we calculated
\be
    E_{\tau A}^{(0)}[\Phi_{J}]\Big|_{\rho=0} = -\frac{1}{2}\partial_{A}X - \frac{1}{2} c_1(\phi)\left( F_{A B} h^{B C} - F_{\tau \rho} \delta_{A}^{C} \right) F_{\tau C}
\ee
We can now replace the fields $X=X^{(0)}+l^k X^{(k)}$, $F_{\mu \nu}=F^{(0)}_{\mu \nu}+l^k F^{(k)}_{\mu \nu}$ etc., linearize around the order $l^0$ fields $X^{(0)}$, $F_{\mu \nu}^{(0)}$ etc., and use that $F^{(0)}_{\tau C}\big|_{\rho=0}=0$ to get that
\be
    \text{At order} \,\, l^k, \quad E_{\tau A}[\Phi_J]\Big|_{\rho=0} = -\frac{1}{2} l^k \partial_{A} X^{(k)} - \frac{1}{2} c_1(\phi^{(0)})\left( F^{(0)}_{A B} h^{(0) B C} - F^{(0)}_{\tau \rho} \delta_{A}^{C} \right) l^k F^{(k)}_{\tau C} = 0
\ee

Therefore,
\be
    \partial_{A} X^{(k)}\Big|_{\rho=0} = M_{A}^{\,\,\,\,C} F^{(k)}_{\tau C}\Big|_{\rho=0}
\ee
where $M_{A}^{\,\,\,\,C} = - c_1(\phi^{(0)})\left( F^{(0)}_{A B} h^{(0) B C} - F^{(0)}_{\tau \rho} \delta_{A}^{C} \right)$ is a function only of the lowest order fields $\Phi_I^{(0)}$. Once again we see that $\partial_{A} X^{(k)}\big|_{\rho=0} = 0$ if $F^{(k)}_{\tau C}\big|_{\rho=0} = 0$. 

We now turn to $E^{(0)}_{\tau}[\Phi_J]$ in order to prove $F^{(k)}_{\tau C}\big|_{\rho=0} = 0$. Rewriting from above, it is given by
\be
    E^{(0)}_\tau[\Phi_J] = \nabla^{\beta}{\Big[c_1(\phi)F_{\tau\beta} - 4 c_2(\phi) F^{\gamma \delta} \epsilon_{\tau \beta \gamma \delta}\Big]}
\ee
We can again evaluate this on the horizon in Killing vector GNCs. The calculation involves evaluating Christoffel symbols, and is given in Appendix \ref{EtauEvaluation} as an example of using Killing vector GNCs. The result is
\be
    E^{(0)}_\tau[\Phi_J]\Big|_{\rho=0} = h^{A B} \DD_{A}\Big[c_1(\phi)F_{\tau B} - 8 c_2(\phi) \epsilon_{B}^{\,\,\,\,C}F_{\tau C}\Big]
\ee
where $\DD_A$ is the covariant derivative with respect to $h_{A B}$. We again linearize all the fields around the background $\Phi_{J}^{(0)}$ and use $F^{(0)}_{\tau C}\big|_{\rho=0}=0$ to get
\be \label{Etauhorizon}
    \text{At order} \,\, l^k, \quad E_{\tau}[\Phi_J]\Big|_{\rho=0} = l^k h^{(0) A B} \DD^{(0)}_A \left[ c_1(\phi^{(0)}) F^{(k)}_{\tau B} - 8 c_2(\phi^{(0)}) \epsilon_{\,\,B}^{(0)\,C} F^{(k)}_{\tau C} \right] = 0
\ee
where $\DD^{(0)}$ is the covariant derivative with respect to $h^{(0)}_{A B}$. For each $\tau$, this is a differential equation for $F^{(k)}_{\tau C}\big|_{\rho=0}$ on the spacelike 2-slice $C(\tau)$. We will show that it has no non-trivial solutions.

Note that since $F_{\alpha \beta}$ is a Maxwell field, it satisfies $\partial_{[\alpha} F_{\beta \gamma]} = 0$. We can combine this with $\partial_{\tau} F_{\alpha \beta} = 0$ to get
\be
\partial_{[\tau} F_{A B]} = 0 \implies \partial_A F_{\tau B} - \partial_B F_{\tau A} = 0
\ee
This must hold to all orders in $l$, so
\be \label{Closed}
\partial_A F^{(k)}_{\tau B} - \partial_B F^{(k)}_{\tau A} = 0
\ee
Now, we can view $F^{(k)}_{\tau A}\big|_{\rho=0}$ as a 1-form on the submanifold $C(\tau)$. Call this 1-form $V_A=F^{(k)}_{\tau A}\big|_{\rho=0}$. Then equation (\ref{Closed}) becomes
\be
    \text{d}V=0
\ee
where $\text{d}$ is the exterior derivative on $C(\tau)$. We assumed that every closed 1-form on $C(\tau)$ is exact, and hence there exists some function $f$ on the whole of $C(\tau)$ such that 
\be
V = \text{d}f
\ee
$f$ is essentially just the order $l^k$ part of the electric potential. The crucial thing is that we know we can define $f$ on the whole of $C(\tau)$, whereas we could not necessarily define the electric potential globally.

We can substitute this into (\ref{Etauhorizon}) to get
\be
h^{(0) A B} \DD^{(0)}_A \left[ c_1(\phi^{(0)}) \DD^{(0)}_B f - 8 c_2(\phi^{(0)}) \epsilon_{\,\,B}^{(0)\,C} \DD^{(0)}_C f \right]=0
\ee
We now integrate this against $\sqrt{h^{(0)}} f$ over $C(\tau)$:
\be
\int_{C(\tau)} \text{d}^{d-2}x \sqrt{h^{(0)}} f h^{(0) A B} \DD^{(0)}_A \left[ c_1(\phi^{(0)}) \DD^{(0)}_B f - 8 c_2(\phi^{(0)}) \epsilon_{\,\,B}^{(0)\,C} \DD^{(0)}_C f \right]=0
\ee
Apply the divergence theorem to get
\be
-\int_{C(\tau)} \text{d}^{d-2}x \sqrt{h^{(0)}}\left[ c_1(\phi^{(0)}) h^{(0) A B} \DD^{(0)}_A f \DD^{(0)}_B f - 8 c_2(\phi^{(0)}) \epsilon^{(0) A B} \DD^{(0)}_A f \DD^{(0)}_B f \right]=0
\ee
where the boundary term vanished because we assumed $C(\tau)$ was compact. The second term in the brackets is 0 by the antisymmetry of $\epsilon^{(0) A B}$, which just leaves the first term. Since $h^{(0)}_{A B}$ is postive definite and $c_1>0$, for the integral to be 0 we must have $\DD^{(0)}_A f = 0$, or equivalently
\be
   F^{(k)}_{\tau A}\Big|_{\rho=0} = 0 
\ee
Therefore the inductive step is proven. 

Now all that remains is to fill the gap in our proof and show that $E_{\tau A}^{[k]}[\Phi_J^{[k-1]}]\big|_{\NN}=0$ and $E_{\tau}^{[k]}[\Phi_J^{[k-1]}]\big|_{\NN}=0$. To do this we will need to prove a statement about positive boost weight quantities on the horizon.

\subsection{Positive Boost Weight Quantities on the Horizon} \label{PBWQH}

Following the method of \cite{Bhat:2022}, we will prove that the Generalized 0th Law implies that all positive boost weight, electromagnetic gauge-independent quantities vanish on the horizon. Somewhat counter-intuitively, this will allow us to complete the inductive step and prove the Generalized 0th Law itself.

The boost weight of a quantity (defined in Section \ref{BW}) is determined in the Affinely Parametrized GNCs of Section \ref{APGNCs}. The most basic electromagnetic gauge-independent quantities we can make from the metric and matter fields in these coordinates are of the form $\partial_{A_1}{... \partial_{A_n}{ \partial_{r}^{p} \partial_{v}^q \varphi } }$ with $\varphi \in \{\alpha, \beta_{A}, \mu_{A B}, \mu^{A B}, F_{v r}, F_{A B}, F_{v A}, F_{r A}, \phi \}$. Call such terms \textit{building blocks}. On the horizon $r=0$, all quantities in our theory can be expanded out as expressions in building blocks\footnote{There is no explicit appearance of the coordinates $(v,x^A)$ because they do not appear explicitly in the metric.}, e.g. $\nabla_{v} F_{r A}\Big|_{\rho=0} = \partial_{v}{F_{r A}} - \frac{1}{2}F_{A B} \beta_{C} \mu^{B C}-\frac{1}{2}F_{r B} \partial_v \mu_{A C} \mu^{B C} - \frac{1}{2}F_{r v} \beta_{A}$.

The boost weights of these building blocks are

\begin{itemize}
    \item $F_{v A}$ has boost weight $+1$.
    \item $\alpha$, $\beta_{A}$, $\mu_{A B}$, $\mu^{A B}$, $F_{v r}$, $F_{A B}$, and $\phi$ have boost weight 0.
    \item $F_{r A}$ has boost weight $-1$.
    \item $\partial_v$ derivatives each add $+1$ to the boost weight, $\partial_r$ derivatives each add $-1$, and $\partial_A$ derivatives add $0$.
\end{itemize}

Therefore positive boost weight building blocks are of the form 
\begin{equation} \label{PBB}
    \partial_{A_1}{... \partial_{A_n}{ \partial_{r}^{p} \partial_{v}^{q} \varphi } }\,\, \text{with}\,\, \varphi \in \{F_{v A}, \partial_{v}\alpha, \partial_{v}\beta_{A}, \partial_{v}\mu_{A B}, \partial_{v}\mu^{A B}, \partial_{v}F_{v r}, \partial_{v}F_{A B}, \partial_{v}\phi, \partial_{v}^2 F_{r A} \} \,\, \text{and} \,\, q\geq p
\end{equation}

The terms in the expansion of a positive boost weight quantity on the horizon must all have at least one factor of the positive boost weight building blocks listed above. Therefore, if we can show that all positive boost weight building blocks vanish on the horizon, then we have shown that all positive boost weight quantities vanish on the horizon.

To do this we shall employ a relation between affinely parameterized GNCs and Killing vector GNCs.

Let us assume the Generalized 0th Law holds. Then $F_{\tau A}\Big|_{\rho=0}= 0$ and $X\Big|_{\rho=0}=2 \kappa$ with $\kappa$ constant. By smoothness, 
\begin{equation}
    X(\rho, x^C) = 2 \kappa + \rho f(\rho, x^C), \quad \quad F_{\tau A}(\rho, x^C) = \rho f_{A}(\rho, x^C)
\end{equation}
where $f(\rho, x^C)$ and $f_{A}(\rho, x^C)$ are regular on the horizon. Then, in Killing vector GNCs, $(g_{\alpha \beta}, F_{\alpha \beta}, \phi)$ are given by (with $x^C$ dependence suppressed)

\begin{equation}
    \begin{split}
        g &= 2 \text{d}\tau \text{d}\rho - \left[2 \kappa \rho + \rho^2 f(\rho)\right] \text{d}\tau^2 +2 \rho \omega_{A}(\rho) \text{d}x^A \text{d}\tau + h_{A B}(\rho) \text{d}x^A \text{d}x^B\\
        F &= F_{\tau \rho}(\rho) \text{d}\tau \wedge \text{d}\rho + F_{\rho A}(\rho) \text{d}\rho \wedge \text{d}x^A + \rho f_{A}(\rho) \text{d}\tau \wedge \text{d}x^A + F_{A B}(\rho) \text{d}x^A \wedge \text{d}x^B\\
        \phi &= \phi(\rho)
    \end{split}
\end{equation}

We now make the coordinate transformation\footnote{Note this is slightly different from the choice in \cite{Bhat:2022} in that we have $(\kappa v+1)$ where they have $\kappa v$. We have added the $1$ so that $v=0$ corresponds to $\tau=0$, and also to put it in such a form that if the black hole is extremal, i.e. $\kappa=0$, then the transformation is the identity $\rho=r, \tau=v$.}

\begin{equation}\label{transform}
    \rho = r( \kappa v+1), \quad \quad \tau = \frac{1}{\kappa} \log\left(\kappa v+1\right).
\end{equation}

with the $x^C$ co-ordinates unchanged. In these new co-ordinates, the horizon is $r=0$ and $C$ is $v=r=0$. The transformation also has the effect of putting the metric in affinely parameterized form:  

\begin{equation} \label{affine param form}
    \begin{split}
        g = & 2 \text{d}v \text{d}r - r^2 f\big(r( \kappa v+1)\big) \text{d}v^2 + 2 r \omega_{A}\big(r( \kappa v+1)\big) \text{d}v \text{d}x^A + h_{A B}\big(r( \kappa v+1)\big) \text{d}x^A \text{d}x^B\\
        F = &F_{\tau \rho}\big(r( \kappa v+1)\big) \text{d}v \wedge \text{d}r + (\kappa v+1)F_{\rho A}\big(r( \kappa v+1)\big) \text{d}r \wedge \text{d}x^A +\\
        & r\Big[ \kappa F_{\rho A}\big(r( \kappa v+1)\big) +f_{A}\big(r( \kappa v+1)\big)\Big] \text{d}v \wedge \text{d}x^A + F_{A B}\big(r( \kappa v+1)\big) \text{d}x^A \wedge \text{d}x^B\\
        \phi = &\phi\big(r(\kappa v+1)\big)
    \end{split}
\end{equation}

Thus the $(r,v,x^C)$ are a choice of affinely parameterized GNCs with (again suppressing $x^C$ dependence)

\begin{equation} \label{APGNCsSymmetry}
    \begin{split}
        \alpha(r,v)= f\big(r(\kappa v+1)\big), \quad & \quad \beta_A(r, v)= -\omega_{A}\big(r(\kappa v+1)\big),\\
        \mu_{A B}(r,v)=h_{A B}\big(r(\kappa v+1)\big), \quad & \quad \mu^{A B}(r,v)=h^{A B}\big(r(\kappa v+1)\big),\\
        F_{v r}(r,v) = F_{\tau \rho}\big(r(\kappa v+1)\big), \quad & \quad F_{A B}(r,v) = F_{A B}\big(r(\kappa v+1)\big),\\
        F_{r A}\big(r, v\big) = (\kappa v+1)&F_{\rho A}\big(r(\kappa v+1)\big),\\
        F_{v A}\big(r, v\big) = r\Big[ \kappa F_{\rho A}&\big(r(\kappa v+1)\big) +f_{A}\big(r(\kappa v+1)\big)\Big],\\
        \phi\big(r, v\big) = &\phi\big(r(\kappa v+1)\big)
    \end{split}
\end{equation}

The importance of this is that the $v$-dependence of these quantities is severely restricted by the $\tau$-independence of the original Killing vector GNC quantities. The zero boost weight quantities $\alpha$, $\beta_A$, $\mu_{A B}$, $\mu^{A B}$, $F_{v r}$, $F_{A B}$ and $\phi$ depend on $v$ strictly through the combination $rv$. Therefore, taking a $\partial_v$ of these quantities will produce an overall factor of $r$, which vanishes on the horizon. The positive boost weight quantity $F_{v A}$ already has a pre-factor of $r$, and also depends on $v$ strictly through $rv$. Finally,

\begin{equation}
    \partial_{v}^2 F_{r A} = r \left[2\kappa \partial_{\rho} F_{\rho A}\big(r(\kappa v+1)\big) + r \kappa (\kappa v+1)\partial^2_{\rho}F_{\rho A}\big(r(\kappa v+1)\big)\right]
\end{equation}

Thus we can write all of the quantities $\varphi \in \{F_{v A}, \partial_{v}\alpha, \partial_{v}\beta_{A}, \partial_{v}\mu_{A B}, \partial_{v}\mu^{A B}, \partial_{v}F_{v r}, \partial_{v}F_{A B}, \partial_{v}\phi, \partial_{v}^2 F_{r A} \}$ from (\ref{PBB}) in the form

\begin{equation}
    \varphi = r f_{\varphi}\big(r(\kappa v+1)\big)
\end{equation}

Taking a $(\partial_{r}\partial_{v})$ derivative preserves this form, as does taking $\partial_A$ derivatives. Thus, every positive boost weight building block satisfies 

\begin{equation}
\begin{split}
        \partial_{A_1}{... \partial_{A_n}{ \partial_{r}^{p} \partial_{v}^{q} \varphi } } &= \partial^{q-p}_v \left[ \partial_{A_1}{... \partial_{A_n}{ (\partial_{r}\partial_{v})^{p} \varphi } }\right]\\
        &= \partial^{q-p}_v \left[ r f_{\partial_{A_1}{... \partial_{A_n}{ (\partial_{r}\partial_{v})^{p} \varphi } }}\big(r(\kappa v+1)\big) \right]\\
        &\propto r^{1+q-p}
\end{split}
\end{equation}

with $q\geq p$. Therefore all positive boost weight building blocks vanish on the horizon $r=0$. 

This proves that all positive boost weight quantities vanish on the horizon in the choice of affinely parameterized GNCs given by the transformation (\ref{transform}). However, as discussed in Section \ref{APGNCs}, there are infinitely many choices of such GNCs, all related by $v'=v/a(x^A)$ on $\NN$ for some arbitrary function $a(x^A)>0$. To prove that positive boost weight quantities vanish on the horizon in all choices of affinely parameterized GNCs, we use Lemma 2.1 of \cite{Hollands:2022}, which states that on the horizon a quantity of certain boost weight in $(r, v, x^A)$ GNCs can be written as the sum of terms of the same boost weight in $(r', v', x^A)$ GNCs. This means that if all positive boost weight quantities vanish on the horizon in one choice of affinely parameterized GNCs, then they vanish in all choices of affinely parameterized GNCs.

\subsection{Completion of The Inductive Step}\label{CompleteInduct}

We will now use the statement about positive boost weight quantities on the horizon to complete our inductive step by proving $E_{\tau A}^{[k]}[\Phi_J^{[k-1]}]\big|_{\NN}=0$ and $E_{\tau}^{[k]}[\Phi_J^{[k-1]}]\big|_{\NN}=0$.

Our inductive hypothesis was that $\partial_{A} X^{(n)}\Big|_{\rho=0} = 0$ and $F^{(n)}_{\tau A}\Big|_{\rho=0} = 0$ for $n< k$. Therefore the fields $\Phi^{[k-1]}_I\equiv \Phi_I^{(0)} + l \Phi_I^{(1)} + ... + l^{k-1} \Phi_I^{(k-1)}$ satisfy the Generalized 0th Law. In particular, $X^{[k-1]}\big|_{\rho=0} = 2\kappa^{[k-1]}$ is constant, and we can make the coordinate transformation
\begin{equation}\label{transforminduct}
    \rho = r( \kappa^{[k-1]} v+1), \quad \quad \tau = \frac{1}{\kappa^{[k-1]}} \log\left(\kappa^{[k-1]} v+1\right).
\end{equation}
to bring the fields $\Phi^{[k-1]}_I = (g^{[k-1]}_{\mu \nu}, F^{[k-1]}_{\mu \nu}, \phi^{[k-1]})$ into the affinely parameterized form of (\ref{affine param form}). Then, by the above proof, any positive boost weight quantity made out of $\Phi^{[k-1]}_I$ will vanish on the horizon in these coordinates. In particular, 
\be
E_{v A}^{[k]}[\Phi_J^{[k-1]}]\Big|_{r=0}=0 \,\text{ and }\, E_{v}^{[k]}[\Phi_J^{[k-1]}]\Big|_{r=0}=0
\ee
because they have boost weight $+1$. But we also know how $E_{\mu \nu}^{[k]}[\Phi_J^{[k-1]}]$ and $E_{\mu}^{[k]}[\Phi_J^{[k-1]}]$ transform under the change of co-ordinates (\ref{transforminduct}) because they are tensors. The inverse co-ordinate transformation is
\begin{equation}\label{inversetransforminduct}
    r = \rho e^{-\kappa^{[k-1]} \tau} \quad \quad v = \frac{1}{\kappa^{[k-1]}} \left( e^{\kappa^{[k-1]}\tau} -1\right).
\end{equation}
So,
\be
\begin{split}
    E_{\tau A}^{[k]}[\Phi_J^{[k-1]}]\big|_{\rho=0}=&\frac{\partial \Tilde{x}^{\mu}}{\partial \tau}\frac{\partial \Tilde{x}^{\nu}}{\partial x^A} E_{\mu \nu}^{[k]}[\Phi_J^{[k-1]}]\big|_{r=0}\\
    =&e^{\kappa^{[k-1]}\tau} E_{v A}^{[k]}[\Phi_J^{[k-1]}]\big|_{r=0}\\
    =&0
\end{split}
\ee
Similarly,
\be
\begin{split}
    E_{\tau}^{[k]}[\Phi_J^{[k-1]}]\big|_{\rho=0}=&\frac{\partial \Tilde{x}^{\mu}}{\partial \tau} E_{\mu}^{[k]}[\Phi_J^{[k-1]}]\big|_{r=0}\\
    =&e^{\kappa^{[k-1]}\tau} E_{v}^{[k]}[\Phi_J^{[k-1]}]\big|_{r=0}\\
    =&0
\end{split}
\ee

This completes the proof of the Generalized 0th Law. 

\subsection{The Generalized 0th Law for a Charged Scalar Field}\label{ChargedScalar}

This proof of the Generalized 0th Law can be modified to apply to the EFT of gravity, electromagnetism and a \textit{charged} scalar field. In this scenario, we assume we have a global gauge potential $A_\mu$ with $F=\text{d}A$. The dynamical fields are $\Phi_I = (g_{\mu \nu}, A_\mu, \phi)$, the scalar field $\phi$ is complex with some charge $\lambda$, and $A_\mu$ and $\phi$ transform under an electromagnetic gauge transformation as
\be \label{gaugetransform}
A_\mu \rightarrow \Tilde{A}_\mu = A_\mu + \partial_\mu \chi, \quad \quad \phi \rightarrow \Tilde{\phi} = e^{i \lambda \chi} \phi
\ee
with $\chi$ an arbitrary real-valued function. We generalize our leading order Lagrangian to 
\begin{equation} \label{0and2-derivcharged}
    \LL_{2} = R - V(|\phi|^2) - g^{\alpha \beta} \left(\mathfrak{D}_{\alpha}{\phi}\right)^{*} \mathfrak{D}_{\beta}{\phi} - \frac{1}{4} c_1(|\phi|^2) F_{\alpha\beta} F^{\alpha\beta} + c_2(|\phi|^2) F_{\alpha \beta} F_{\gamma \delta} \epsilon^{\alpha \beta \gamma \delta}
\end{equation}
where $\mathfrak{D}_\alpha$ is the gauge covariant derivative $\fD_\alpha = \nabla_\alpha-i \lambda A_\mu$. $\LL_{2}$ is invariant under the gauge transform (\ref{gaugetransform}). Since the charge $\lambda$ adds a new scale into the theory, the EFT series is now a joint series in derivatives and powers of $\lambda$. We assume $\lambda \leq 1/L$ where $L$ is a typical length scale of the solution, so that $\lambda$ is comparable to a 1-derivative term. This is reasonable if we want the classical approximation to be valid. The EFT Lagrangian is
\be
\LL = \LL_{2} + \sum_{n=1}^{\infty} l^n \LL_{n+2}
\ee
where the $\LL_{n+2}$ contains all gauge-independent terms with $n+2$ derivatives or powers of $\lambda$. 

The equations of motion are 
\be
E_I[\Phi_J] \equiv E^{(0)}_I[\Phi_J] + \sum_{n=1}^{\infty}l^n E^{(n)}_I[\Phi_J]
\ee
where the parts arising from the leading order theory, $E^{(0)}_I[\Phi_J]$, are 
\begin{equation} \label{0th order EoM 1 charged}
    E^{(0)}_{\alpha\beta}= R_{\alpha \beta} - \left(\fD_{(\alpha}\phi\right)^{*} \fD_{\beta)} \phi -\frac{1}{2} c_1(|\phi|^2) F_{\alpha \delta} F_{\beta}^{\,\,\,\,\delta} - \frac{1}{2} g_{\alpha \beta} \left( R - V(|\phi|^2) - g^{\gamma \delta}\left(\fD_{\gamma}{\phi}\right)^{*} \fD_{\delta}{\phi} - \frac{1}{4} c_1(|\phi|^2) F_{\gamma \delta} F^{\gamma\delta} \right)
\end{equation}
\begin{equation}
    E^{(0)}_\alpha = i\lambda\left[\phi^* \fD_\alpha \phi - \phi \left(\fD_{\alpha}\phi\right)^{*} \right] + \nabla^{\beta}{\Big[c_1(|\phi|^2)F_{\alpha\beta} - 4 c_2(|\phi|^2) F^{\gamma \delta} \epsilon_{\alpha \beta \gamma \delta}\Big]}
\end{equation}
\begin{equation} \label{0th order EoM 2 charged}
    E^{(0)} = g^{\alpha\beta}\fD_{\alpha}{\fD_{\beta}{\phi}}-\phi V'(|\phi|^2)-\frac{1}{4} \phi c'_1(|\phi|^2) F_{\alpha \beta} F^{\alpha \beta} + \phi c'_2(|\phi|^2) F_{\alpha \beta} F_{\gamma \delta} \epsilon^{\alpha \beta \gamma \delta}
\end{equation}

We again assume that we have a stationary black hole solution to these equations, with a Killing horizon $\NN$. What needs more subtlety is our assumption of the invariance of the matter fields on the symmetry corresponding to the Killing vector $\xi = \frac{\partial}{\partial \tau}$. The assumption that $\phi$ and $A_\mu$ are independent of $\tau$ is no longer appropriate here because the conditions $\partial_\tau \phi = 0$ and $\partial_\tau A_\mu=0$ are not invariant under an electromagnetic gauge transformation. Instead, we need to modify our notion of a symmetry of the system.

For the metric, a Killing vector symmetry corresponds to invariance under the diffeomorphism $\tau \rightarrow \tau+s$ for all $s$, i.e. $g_{\mu \nu}(\tau + s) = g_{\mu \nu}(\tau)$. This diffeomorphism can be viewed as a one-parameter coordinate gauge transformation of the metric, labelled by $s$. A complete gauge transformation of our matter fields $\phi$ and $A_\mu$ would be a \textit{combined} diffeomorphism and electromagnetic gauge transformation. Hence we assume the notion of symmetry for $\phi$ and $A_\mu$ is their invariance under a one-parameter combined diffeomorphism and electromagnetic gauge transformation. More precisely, given any fixed gauge of $A_\mu$ and $\phi$, we assume there exists some one-parameter family of functions $\theta_s$ with $\theta_0 = 0$ such that
\be
    A_{\mu}(\tau+s)+ \partial_\mu \theta_s = A_{\mu}(\tau), \quad \quad e^{i\lambda \theta_s}\phi(\tau+s) = \phi(\tau)
\ee
for all $s$. We can take the derivative of this with respect to $s$ and set $s=0$ to obtain the conditions
\be \label{symmetryconditions}
    \partial_\tau A_\mu = -\partial_\mu \Theta, \quad \quad \partial_\tau \phi = -i \lambda \Theta \phi
\ee
where $\Theta = \frac{\text{d} \theta_s}{\text{d} s }\Big|_{s=0}$.
These are the conditions\footnote{These conditions can be proved to be equivalent to the conditions assumed in (2.9) of \cite{An:2023} (a recent paper discussing stationary black hole solutions with charged scalar hair). The formulation above avoids the need to define the phase of $\phi$ however.} which constrain the $\tau$-dependence of $A_\mu$ and $\phi$. Note the first condition implies $\partial_\tau F_{\mu \nu} = 0$, which was the condition we assumed in the real scalar field case.

Let us now make an electromagnetic gauge transformation of the form (\ref{gaugetransform}). Using (\ref{symmetryconditions}), the $\tau$-dependence of $\Tilde{A}_\mu$ and $\Tilde{\phi}$ can be found to be
\be
    \partial_\tau \Tilde{A}_\mu = - \partial_\mu ( \Theta - \partial_\tau \chi), \quad \quad \partial_\tau \Tilde{\phi} = - i \lambda ( \Theta - \partial_\tau \chi ) \Tilde{\phi}
\ee
From this we see that conditions (\ref{symmetryconditions}) are preserved under an electromagnetic gauge transformation, so long as we relabel $\Tilde{\Theta} = \Theta - \partial_\tau \chi$. In particular, we can take $\chi = \int^\tau \Theta(\tau') \text{d}\tau'$ to find a gauge in which 
\be
    \partial_\tau \Tilde{A}_\mu = 0, \quad \quad \partial_\tau \Tilde{\phi} = 0
\ee
We will drop the tildes and work in this gauge to prove the Generalized 0th Law. It must then hold in all gauges because the statements $\partial_{A} X\big|_{\rho=0} = 0$ and $F_{\tau A}\big|_{\rho=0} = 0$ are gauge-independent. Note that $F_{\tau A} = - \partial_A A_\tau$ in this gauge. To prove $F_{\tau A}\big|_{\rho=0} = 0$, we will actually show that $A_{\tau}\big|_{\rho=0} = 0$ in this gauge.

The proof follows in a similar fashion to that of Einstein-Maxwell-Scalar EFT above, with modifications to deal with the fact that $A_\mu$ can now appear outside of the gauge-independent combination $F_{\mu \nu}$. The relevant parts of the equation of motion will be $E^{(0)}_{\tau \tau}[\Phi_I]\big|_{\rho=0}$, $E^{(0)}_{\tau A}[\Phi_I]\big|_{\rho=0}$ and $E^{(0)}_{\tau}[\Phi_I]\big|_{\rho=0}$ as before. In this gauge they can be found to be:
\begin{align}
    E^{(0)}_{\tau \tau}[\Phi_I]\big|_{\rho=0} &= - \lambda^2 A_\tau^2 |\phi|^2 -\frac{1}{2} c_1(|\phi|^2) F_{\tau A} F_{\tau B} h^{A B}\\
    E^{(0)}_{\tau A}[\Phi_I]\big|_{\rho=0} &= -\frac{1}{2} \partial_A X - \frac{1}{2} i \lambda A_\tau \left(\phi^{*} \partial_A \phi - \phi \partial_A \phi^* - 2i\lambda A_A |\phi|^2\right) - \frac{1}{2} c_1(|\phi|^2)\left( F_{A B} h^{B C} - F_{\tau \rho} \delta_{A}^{C} \right) F_{\tau C}\\
    E^{(0)}_{\tau}[\Phi_I]\big|_{\rho=0} &= 2\lambda^2 A_\tau |\phi|^2 + h^{A B} \DD_{A}\Big[c_1(|\phi|^2)F_{\tau B} - 8 c_2(|\phi|^2) \epsilon_{B}^{\,\,\,\,C}F_{\tau C}\Big] \label{Etaucharge}
\end{align}

For simplicity, we will assume $\phi^{(0)}$ is not identically 0 on $\NN$ in the following proof. The case $\phi^{(0)}\big|_{\rho=0} \equiv 0$ adds a variety of technical difficulties that stray from the main argument, and is dealt with in Appendix \ref{0thLawChargedProof}. 

Let us first look at the equations of motion at order $l^0$. From $E^{(0)}_{\tau \tau}[\Phi^{(0)}_I]\big|_{\rho=0} = 0$, we get $A^{(0)}_\tau \phi^{(0)} \big|_{\rho = 0} = 0$ and $F^{(0)}_{\tau A}\big|_{\rho = 0} = 0$. But since $F^{(0)}_{\tau A} = - \partial_A A^{(0)}_\tau$, the latter condition means $A^{(0)}_\tau$ is constant on the horizon ($\partial_\tau A^{(0)}_\tau = 0$ in this gauge). Therefore we can extract $A^{(0)}_\tau\big|_{\rho = 0} = 0$ from the former condition because we are assuming $\phi^{(0)}$ is not identically 0 on the horizon. Plugging $A^{(0)}_\tau\big|_{\rho = 0} = 0$ into $E^{(0)}_{\tau A}[\Phi^{(0)}_I]\big|_{\rho=0} =0$, we get $\partial_A X^{(0)}\big|_{\rho = 0} = 0$, and so the Generalized 0th Law is proved at order $l^0$.

We now take our inductive hypothesis to be $\partial_A X^{(n)}\big|_{\rho = 0} = 0$ and $A^{(n)}_\tau \big|_{\rho = 0} = 0$ for $n<k$. We again decompose the order $l^k$ part of $E_I[\Phi_J] = 0$ into $E^{[k]}_I[\Phi^{[k-1]}_J]$ plus $E^{(0)}_I[\Phi_J^{(0)}+ l^k \Phi^{(k)}_J]$ linearized around $\Phi^{(0)}_J$.

$E^{[k]}_{\tau A}[\Phi^{[k-1]}_J]$ and $E^{[k]}_{\tau}[\Phi^{[k-1]}_J]$ can again be shown to vanish on the horizon by the statement that positive boost weight quantities vanish on the horizon if $\partial_A X\big|_{\rho=0} = 0$ and $A_\tau\big|_{\rho=0} = 0$ in this gauge. The proof of this statement follows exactly as in Section \ref{PBWQH}, except we must additionally show that positive boost weight quantities made from $A_\mu$, such as $A_v, \partial_v A_A$ and $\partial_{v v} A_r$, vanish on the horizon. To do this we again use the relation between Killing vector GNCs and affinely parameterized GNCs.

Using $\partial_\tau A_\mu =0$, we can write $A_\mu$ in Killing vector GNCs as (suppressing $x^C$ dependence):
\be \label{AKilling}
    A= A_\tau(\rho) \text{d}\tau + A_\rho(\rho) \text{d}\rho + A_A(\rho) \text{d}x^A
\ee
We transform to affinely parameterized GNCs, $\rho = r (\kappa v+1)$, $\tau = \frac{1}{\kappa}\log \left(\kappa v + 1\right)$, to get
\begin{equation} \label{Atransform}
    \begin{split}
        A_v(r,v)= \frac{1}{\kappa v+1} A_{\tau}\big(r\left(\kappa v+ 1\right)&\big)+ \kappa r A_\rho\big(r\left(\kappa v+1\right)\big),\\
        A_r(r,v)=(\kappa v+1) A_\rho\big(r\left(\kappa v+1\right)\big), \quad & \quad A_A(r,v)=A_A\big(r\left(\kappa v+1\right)\big)
    \end{split}
\end{equation}

Positive boost weight quantities involving $A_\mu$ are given by $\partial_{A_1}{... \partial_{A_n}{ \partial_{r}^{p} \partial_{v}^{q} \varphi } }$ with $\varphi \in \{ A_v, \partial_v A_A, \partial^2_v A_r\}$ and $q\geq p$. It is easy to show that $\partial_v A_A$ and $\partial^2_v A_r$ have the functional form $r f_{\varphi}\big(r(\kappa v + 1)\big)$ which, as shown in Section \ref{PBWQH}, means their positive boost weight derivatives vanish on the horizon. $\partial_r\partial_v A_v$ also has this functional form, so if $q\geq p \geq 1$ then  $\partial_{A_1}{... \partial_{A_n}{ \partial_{r}^{p} \partial_{v}^{q} A_v } }$ vanishes on the horizon. This leaves only terms with $p=0$, however one can show
\be \label{dvAv}
    \partial^q_v A_v\big|_{r=0} = \frac{(-\kappa)^q}{(\kappa v+1)^{q+1}} A_{\tau}\big|_{\rho=0}
\ee
Therefore, if $A_\tau\big|_{\rho=0} = 0$ then $\partial_{A_1}... \partial_{A_n} \partial_{v}^{q} A_v$ also vanishes on the horizon.

Therefore we can return to our inductive step and look at $E^{(0)}_I[\Phi_J^{(0)}+ l^k \Phi^{(k)}_J]\big|_{\rho=0}$ linearized around $\Phi^{(0)}_J$ for $I=(\tau A)$ and $I=\tau$. For $E^{(0)}_{\tau A}[\Phi_I]\big|_{\rho=0}$, use $A_\tau^{(0)}\big|_{\rho=0}=0$ to obtain
\begin{multline}
    \text{At order} \,\, l^k, \quad E_{\tau A}[\Phi_J]\Big|_{\rho=0} = -\frac{1}{2} l^k \partial_{A} X^{(k)} - \frac{1}{2} i l^k \lambda A^{(k)}_\tau \left(\phi^{(0)*} \partial_A \phi^{(0)} - \phi^{(0)} \partial_A \phi^{(0)*} - 2i\lambda A^{(0)}_A |\phi^{(0)}|^2\right) -\\
    \frac{1}{2} c_1(|\phi^{(0)}|^2)\left( F^{(0)}_{A B} h^{(0) B C} - F^{(0)}_{\tau \rho} \delta_{A}^{C} \right) l^k F^{(k)}_{\tau C} = 0
\end{multline}
which implies that $\partial_{A} X^{(k)}\big|_{\rho=0} = 0$ if $A^{(k)}_{\tau}\big|_{\rho=0} = 0$.

Finally, for $E^{(0)}_{\tau}[\Phi_I]\big|_{\rho=0}$ use $A_\tau^{(0)}\big|_{\rho=0}=0$ to get
\be
    \text{At order} \,\, l^k, \quad E_{\tau}[\Phi_J]\Big|_{\rho=0} = 2\lambda^2 l^k |\phi^{(0)}|^2 A^{(k)}_\tau + l^k h^{(0) A B} \DD^{(0)}_A \left[ c_1(|\phi^{(0)}|^2) F^{(k)}_{\tau B} - 8 c_2(|\phi^{(0)}|^2) \epsilon_{\,\,B}^{(0)\,C} F^{(k)}_{\tau C} \right] = 0
\ee
Plugging in $F^{(k)} = - \partial_A A^{(k)}_\tau$ gives
\be
2\lambda^2 |\phi^{(0)}|^2 A^{(k)}_\tau - h^{(0) A B} \DD^{(0)}_A \left[ c_1(|\phi^{(0)}|^2) \DD^{(0)}_B A^{(k)}_\tau - 8 c_2(|\phi^{(0)}|^2) \epsilon_{\,\,B}^{(0)\,C} \DD^{(0)}_C A^{(k)}_\tau \right]=0
\ee
Integrate this against $\sqrt{h^{(0)}}A_\tau^{(k)}$ over $C(\tau)$, integrate by parts, and again use the antisymmetry of $\epsilon^{(0) B C}$ to obtain
\be
\int_{C(\tau)} \text{d}^{d-2}x \sqrt{h^{(0)}}\left[ 2\lambda^2 l^k |\phi^{(0)}|^2 (A^{(k)}_\tau)^2+  c_1(|\phi^{(0)}|^2) h^{(0) A B} \left(\partial_A A^{(k)}_\tau \right) \left( \partial_B A^{(k)}_\tau \right) \right]=0
\ee
Both terms in the integrand are manifestly non-negative, and hence $A^{(k)}_\tau\big|_{\rho=0} = 0$, once again using our assumption that $\phi^{(0)}$ is not identically 0 on the horizon\footnote{$\partial_{\tau}\phi = 0$ in this gauge, so if $\phi^{(0)}$ is not identically 0 on the horizon then it is also not identically 0 on any individual spatial cross section $C(\tau)$.}. This completes the induction.

Therefore, $\partial_A X\big|_{\rho=0} = 0$ and $F_{\tau A}\big|_{\rho=0} = 0$ in this gauge. But since these statements are gauge-independent, the Generalized 0th Law holds in all gauges for this charged scalar field EFT.

\section{The Second Law}

The Second Law of Black Hole Mechanics is the statement that the entropy of dynamical (i.e. non-stationary) black hole solutions is non-decreasing in time. 

In standard 2-derivative GR coupled to matter satisfying the Null Energy Condition, it can be proved that the area $A(v)$ of a spacelike cross section of the horizon is always non-decreasing in $v$. This supports a natural interpretation of the entropy of a black hole as proportional to its area. However, when we include higher derivative terms in the metric or matter fields, $A(v)$ is no longer necessarily non-decreasing. Therefore we need a generalization of the definition of entropy in order to satisfy a Second Law. Whilst there has been no answer that applies to all situations, a fruitful avenue has been to study dynamical black holes that are close to being stationary in the regime of validity of EFT. 

Here we extend the recent work of \cite{DaviesReall:2023}, \cite{Hollands:2022} and \cite{Biswas:2022} to define an entropy that satisfies the Second Law up to an arbitrarily high order $l^N$ in our EFT for Einstein-Maxwell-Scalar theory.

\subsection{Perturbations Around Stationary Black Holes}

We consider the scenario of a black hole settling down to a stationary equilibrium. As such, we assume our dynamical black hole solution is close to some stationary black hole solution $\Phi^{st}_I$, and consider perturbations around it.

To make the order of perturbation precise, we use the statement proved in Section \ref{PBWQH}: positive boost weight quantities vanish on the horizon of a stationary black hole solution. Our construction of the entropy, $S(v)$, will consist of manipulating affinely parameterized GNC quantities evaluated on the horizon, and so the number of factors of positive boost weight quantities determines the order of perturbation. For example, $K_{A B}$ has boost weight $+1$ and so a term such as $K_{A B} K^{A B}$ is quadratic order. 

To zeroth order, our entropy $S(v)$ will be the Wald entropy of the stationary solution $\Phi^{st}_I$. This is constant in time, and so $\dot{S}(v) = 0$ to zeroth order. 

To linear order, our entropy will be the one defined by Biswas, Dhivakara and Kundu (BDK) in \cite{Biswas:2022}, where it is proved to be constant at linear order. Therefore, linearized around $\Phi^{st}_I$, we have $\delta \dot{S}(v) = 0$.

This paper extends the BDK entropy by adding terms quadratic in positive boost weight quantities in a similar fashion to how Hollands et al \cite{Hollands:2022} and then Davies and Reall \cite{DaviesReall:2023} extended the Iyer-Wald-Wall entropy \cite{Wall:2015} \cite{Bhattacharyya:2021jhr}. We aim to show that such an entropy satisfies the Second Law \textit{non-pertubatively}, i.e. $\dot{S}(v)$ is non-negative to all orders in perturbations around a black hole. However, this can only be done in the regime of validity of EFT.

\subsection{Regime of Validity of EFT}

We shall not be interested in arbitrary black hole solutions of our Einstein-Maxwell-Scalar EFT. In general, there will be pathological solutions that do not obey the 2nd Law. 

Instead, we shall consider only black hole solutions that lie within the \textit{regime of validity of the EFT}. This is defined in \cite{Hollands:2022} as follows. We assume we
have a 1-parameter family of dynamical black hole solutions labelled by a lengthscale $L$ (e.g. the size of the black hole or some other dynamical length/time scale) such that $\NN$ is the event
horizon for all members of the family (this is a gauge choice). We assume there exist affinely parameterized GNCs defined near $\NN$ such that any quantity constructed from $n$ derivatives of $\{\alpha$, $\beta_A$, $\mu_{AB}$, $\phi\}$ or $n-1$ derivatives of $F_{\mu \nu}$ is bounded
by $C_n$/$L^n$
for some constant $C_n$, and that $|V| L^2\leq 1$. Then the solution lies within the regime of
validity of EFT for sufficiently small $l/L$. This definition captures the notion of a solution “varying
over a length scale L” with L large compared to the UV scale l.

Note that we are no longer assuming the black hole solution is analytic in $l$, as we did in the proof of the 0th Law. This is not an applicable assumption in a dynamical situation because treating the solution as an expansion in $l$ can typically lead to secular growth. See Footnote 1 of \cite{Davies:2023} for an example of such a situation.

\subsection{Order of the EFT}

Our EFT action (\ref{action}) is made up of potentially infinitely many higher derivative terms. In practice, we will only know finitely many of the coefficients of these terms, and so there will be some $N$ for which we know all the terms with $N+1$ or fewer derivatives. In this case we only fully know part of the equations of motion, which satisfies
\be
E_I^{[N-1]} \equiv E_I^{(0)} + \sum_{n=1}^{N-1}l^{n}E^{(n)}_I = O(l^N)
\ee
Since we only know our theory up to some accuracy of order $l^N$, it is reasonable to expect our Second Law to only be provable up to order $l^N$ terms. This is indeed what we will show, i.e.
\be \label{2ndLawSense}
    \dot{S}(v) \geq -O(l^N)
\ee
where the RHS of the inequality signifies that $\dot{S}(v)$ might be negative but only by an $O(l^N)$ amount. This means that the better we know our EFT, the closer we can construct an entropy satisfying a complete Second Law. The entropy $S(v)$ will contain terms of up to $N-2$ derivatives.

\subsection{Review of Recent Progress on the 2nd Law in Vacuum Gravity EFT}

Before we jump into proving the 2nd Law for our Einstein-Maxwell-Scalar EFT, we shall briefly review the recent progress in Einstein-Scalar EFTs that we are building on. Firstly, the Iyer-Wald-Wall entropy, which satisfies the 2nd Law to linear order in perturbations around a stationary black hole. Secondly, the extension made by Hollands, Kovacs and Reall (HKR) to define an entropy that satisfies the 2nd Law to quadratic order in perturbations, up to order $l^N$ terms. Finally, the very recent work by Davies and Reall in the companion to this paper that adds extra terms to the HKR entropy which result in the 2nd Law being satisfied non-perturbatively, up to order $l^N$ terms.

The Iyer-Wald-Wall entropy was devised by Wall \cite{Wall:2015} as an improvement on the entropy defined by Iyer and Wald \cite{Iyer:1994ys}. It was formalised by Bhattacharyya et al. \cite{Bhattacharyya:2021jhr}. It applies to any theory of gravity and a scalar field with diffeomorphism invariant Lagrangian (under no EFT assumption). The approach is to use affinely parameterized GNCs and study the $E_{v v}$ equation of motion. They prove that it can always be manipulated into the following form on the horizon:
\begin{equation} \label{IWWdef}
    -E_{v v}\Big|_{\mathcal{N}} = \partial_{v}\left[\frac{1}{\sqrt{\mu}} \partial_{v}\left(\sqrt{\mu} s^{v}_{IWW}\right) + D_{A}{ s^A }\right] + ...
\end{equation}
where the ellipsis denotes terms that are quadratic or higher order in positive boost weight quantities and hence quadratic order in perturbations around a stationary black hole. $(s^{v}_{IWW}, s^A)$ is denoted the Iyer-Wald-Wall entropy current. They are only defined uniquely up to first order in positive boost weight quantities, as any higher order terms can be absorbed into the ellipsis. As proved in \cite{Hollands:2022}, the higher order terms can be fixed so that $s^v_{IWW}$ is invariant under a change of GNCs.

The Iyer-Wald-Wall entropy of the horizon cross-section $C(v)$ is then defined as
\begin{equation}
    S_{IWW}(v) = 4 \pi \int_{C(v)} \text{d}^{d-2}x \sqrt{\mu} \,s^{v}_{IWW} 
\end{equation}
Taking the $v$-derivative of this gives
\begin{equation} \label{vderiv}
\begin{split}
    \dot{S}_{IWW} =& 4 \pi\int_{C(v)} \text{d}^{d-2}x \sqrt{\mu}\left[\frac{1}{\sqrt{\mu}} \partial_{v}(\sqrt{\mu} s^{v}_{IWW}) + D_{A}{s^A}\right] \\
    =& - 4 \pi \int_{C(v)} \text{d}^{d-2}x \sqrt{\mu} \int_{v}^{\infty} \text{d}v' \, \partial_{v} \left[ \frac{1}{\sqrt{\mu}} \partial_{v}(\sqrt{\mu} s^{v}_{IWW}) + D_{A}{s^A} \right](v',x)
\end{split}
 \end{equation}
where in the first line we trivially added the total derivative $\sqrt{\mu} D_{A}{s^A}$ to the integrand, and in the second line we assumed the black hole settles to the stationary black hole solution, so positive boost weight quantities vanish on the horizon at late times. The integrand can then be swapped for terms that are quadratic or higher in positive boost weight quantities using (\ref{IWWdef}) and the equation of motion $E_{v v}=0$. Thus $\dot{S}_{IWW}$ is quadratic order in perturbations around a stationary black hole, and so $\dot{S}_{IWW}\big|_{\Phi^{st}_I} = 0$ and the first variation $\delta \dot{S}_{IWW} = 0$. Therefore $S_{IWW}$ satisfies the 2nd Law to linear order. Even stronger than that, its change in time vanishes to linear order rather than being non-negative.

To see a possible increase in the entropy, we must go to quadratic order, which is what the extension by Hollands, Kovacs and Reall achieves in \cite{Hollands:2022}. They show that if the theory and solution lie in the regime of EFT the ellipsis in (\ref{IWWdef}) can be manipulated into the following form:
\begin{multline}\label{HKRForm}
    -E_{v v}\Big|_{\mathcal{N}} = \partial_{v}\left[\frac{1}{\sqrt{\mu}} \partial_{v}\left(\sqrt{\mu} s^{v}_{IWW}\right) + D_{A}{ s^A }\right] +\\
    \partial_{v}\left[\frac{1}{\sqrt{\mu}} \partial_{v}\left(\sqrt{\mu} \varsigma^{v}\right)\right] + \left(K_{A B}+X_{A B}\right) \left(K^{A B}+X^{A B}\right)+ \frac{1}{2} \left( \partial_v \phi + X\right)^2 + D_{A}{Y^{A}} + O(l^{N})
\end{multline}
$X^{A B}$ and $X$ are linear in positive boost weight quantities, and $Y^A$ and the $O(l^N)$ terms are quadratic. To do this they go "on-shell", meaning they use the equations of motion to swap out various terms. The entropy density is then defined by $s^v_{HKR} = s^v_{IWW} + \zeta^v$, and the Hollands-Kovacs-Reall entropy is given by
\begin{equation} 
    S_{HKR}(v) = 4 \pi \int_{C(v)} \text{d}^{d-2}x \sqrt{\mu} \,s^{v}_{HKR} 
\end{equation}
Just as in (\ref{vderiv}), we can take the $v$-derivative of this and substitute in (\ref{HKRForm}) to get
\begin{equation} \label{dotSHKR}
        \dot{S}_{HKR}(v)= 4 \pi \int_{C(v)} \text{d}^{d-2} x \sqrt{\mu} \int_{v}^{\infty} \text{d}v' \left[W^2 + D_{A}{Y^{A}} + O(l^{N}) \right](v',x)
\end{equation}
where $W^2= \left(K_{A B} + X_{A B}\right) \left(K^{A B} + X^{A B}\right) + \frac{1}{2} \left( \partial_v \phi + X\right)^2$. Since $W^2$, $Y^A$ and the $O(l^N)$ terms are quadratic in positive boost weight, they vanish on the horizon along with their first variations, so once again we have $\dot{S}_{HKR}|_{\Phi^{st}_I} = 0$ and $\delta \dot{S}_{HKR} = 0$.

Turning to the second variation, $\delta^2 W^2 = (\delta W)^2$ is a positive definite form so must be non-negative. The second term is 
\begin{equation}
    \int_{C(v)} \text{d}^{d-2} x \sqrt{\mu}\bigg|_{\Phi^{st}_I} \int_{v}^{\infty} \text{d}v' D_{A}\bigg|_{\Phi^{st}_I} \delta^2 Y^A
\end{equation}
The induced metric $\mu_{A B}$ is independent of $v$ on the horizon for the stationary solution $\Phi^{st}_I$ (see (\ref{APGNCsSymmetry}) with $r=0$). Therefore, we can exchange the order of integrations and see the integrand is a total derivative on $C(v)$. Hence this integral vanishes and so $\dot{S}_{HKR}$ is non-negative to quadratic order, modulo $O(l^{N})$ terms. Thus it satisfies the 2nd Law to quadratic order in the sense of $\delta^2 \dot{S}_{HKR} \geq -O(l^N)$.

Finally, the recent work by Davies and Reall \cite{DaviesReall:2023} showed that for vacuum gravity EFTs (i.e. with no scalar field $\phi$) we can manipulate the terms in the RHS of (\ref{dotSHKR}) further into the form
\begin{multline}
        \dot{S}_{HKR}(v)= -\frac{d}{dv} \left(4\pi \int_{C(v)} \text{d}^{d-2}x \sqrt{\mu(v)} \sigma^{v}(v) \right) +\\
        4 \pi \int_{C(v)} \text{d}^{d-2} x \sqrt{\mu} \int_{v}^{\infty} \text{d}v' \left[(K_{A B} + Z_{A B})(K^{A B} + Z^{A B}) + O(l^{N}) \right](v,v',x)
\end{multline}
where $Z_{A B}(v,v')$ is made up of so-called "bilocal" quantities, meaning they depend on both on $v$ and the integration variable $v'$. The final integral is a positive definite form up to $O(l^N)$ terms, and hence the entropy defined by 
\begin{equation}
    S(v) = 4 \pi \int_{C(v)} \text{d}^{d-2}x \sqrt{\mu} \,s^{v} 
\end{equation}
with $s^v=s^v_{HKR}+\sigma^v$ satisfies $\dot{S}(v)\geq -O(l^N)$, i.e. it satisfies the 2nd Law non-perturbatively up to $O(l^N)$ terms.

We shall now show how the above methods can be extended to define an entropy for Einstein-Maxwell-Scalar EFT (with real, uncharged scalar field) that satisfies the 2nd Law in the same sense.

\section{Proof of the Second Law}

\subsection{The Desired Generalizations}

Throughout the proof, we work exclusively in affinely parameterized GNCs. Rewriting here for convenience,
\begin{equation}
    g = 2 \text{d}v \text{d}r - r^2 \alpha(r,v,x^C) \text{d}v^2 -2 r\beta_{A}(r,v,x^C) \text{d}v \text{d}x^A + \mu_{A B}(r,v,x^C) \text{d}x^A \text{d}x^B
\end{equation}
We raise and lower $A,B,C,...$ indices with $\mu_{A B}$ and denote the covariant derivative with respect to $\mu_{A B}$ by $D_{A}$. As well as $K_{A B} \equiv \frac{1}{2} \partial_{v}{\mu_{A B}}, \bar{K}_{A B} \equiv \frac{1}{2} \partial_{r}{ \mu_{A B} }$ defined previously, it will be useful to define
\begin{equation} \label{Maxwellquants}
    K_{A} \equiv F_{v A}, \quad \bar{K}_{A} \equiv F_{r A}, \quad \psi = F_{v r}
\end{equation}
$K_A$ has boost weight $+1$, $\bar{K_A}$ has boost weight $-1$ and $\psi$ has boost weight $0$.

The Iyer-Wald-Wall entropy has already been generalized to Einstein-Maxwell-Scalar EFT with real scalar field by Biswas et al. in \cite{Biswas:2022}, as discussed in Section \ref{BDKEntropy}.

The main body of the proof consists of generalizing the HKR entropy by studying the $E_{v v}$ component of the equations of motion in affinely parameterized GNCs on the horizon. Our generalization of equation (\ref{HKRForm}) is as follows. We will show that on-shell (i.e. by using the known part of the equations of motion $E^{[N-1]}_I = O(l^N)$) we can bring $E_{v v}\big|_{\NN}$ into the form
\begin{multline} \label{EvvForm}
    - E_{v v}\Big|_{\mathcal{N}} = \partial_{v}\left[\frac{1}{\sqrt{\mu}} \partial_{v}\left(\sqrt{\mu} s^{v}_{HKR}\right) + D_{A}{ s^A }\right] + \left(K_{A B} + X_{A B}\right) \left(K^{A B} + X^{A B}\right)\\
    + \frac{1}{2} c_1(\phi) \left( K_{A} + X_{A} \right)\left(K^{A} + X^{A} \right) + \frac{1}{2} \left( \partial_v \phi + X\right)^2 + D_A Y^A + O(l^N)
\end{multline}
where $X = \sum_{n=1}^{N-1} l^n X^{(n)}, X_A = \sum_{n=1}^{N-1} l^n X_A^{(n)}, X_{A B} = \sum_{n=1}^{N-1} l^n X^{(n)}_{A B}$ (boost weights +1) are linear or higher in positive boost weight quantities, and $Y^A = \sum_{n=1}^{N-1} l^n Y^{(n) A}$ (boost weight +2) and the $O(l^N)$ terms are quadratic. $s^v_{HKR} = \sum_{n=0}^\infty l^n s^{(n) v}_{HKR}$ has boost weight 0 and $s^A=\sum_{n=1}^\infty l^n s^{(n) A}$ has boost weight +1. They will be invariant upon change of electromagnetic gauge.

The generalization of the HKR entropy of the spacelike cross section $C(v)$ is then defined to be
\be
    S_{HKR}(v) = 4\pi \int_{C(v)}\text{d}^{d-2}x \sqrt{\mu} s^{v}_{HKR}
\ee

The proof that $\delta^2 \dot{S} \geq -O(l^N)$ follows in the same way as for $S_{HKR}$ detailed above, with the only change being $W^2= \left(K_{A B} + X_{A B}\right) \left(K^{A B} + X^{A B}\right) + \frac{1}{2} \left( \partial_v \phi + X\right)^2 + \frac{1}{2} c_1(\phi) \left( K_{A} + X_{A} \right)\left(K^{A} + X^{A} \right)$. The additional term is still a positive definite form and so the same method holds.

The algorithm to write $E_{v v}$ in the form (\ref{EvvForm}) is very similar to the algorithm devised by Hollands, Kovacs and Reall \cite{Hollands:2022} (and further detailed in \cite{Davies:2023}) for Einstein-Scalar EFT. We will emphasise where we need to extend the HKR algorithm to apply to our Einstein-Maxwell-Scalar EFT.

Finally, we generalize the entropy defined by Davies and Reall by proving we can write $\dot{S}_{HKR}(v)$ as
\begin{multline} \label{HKRGeneralization}
        \dot{S}_{HKR}(v)= -\frac{d}{dv} \left(4\pi \int_{C(v)} \text{d}^{d-2}x \sqrt{\mu(v)} \sigma^{v}(v) \right)\\
        +4 \pi \int_{C(v)} \text{d}^{d-2} x \sqrt{\mu} \int_{v}^{\infty} \text{d}v' \Big[(K_{A B} + Z_{A B})(K^{A B} + Z^{A B}) + \frac{1}{2} c_1(\phi) \left( K_{A} + Z_{A} \right)\left(K^{A} + Z^{A} \right)\\
        + \frac{1}{2} \left( \partial_v \phi + Z\right)^2 + O(l^{N}) \Big](v,v',x)
\end{multline}
for bilocal $Z_{A B}, Z_{A}$ and $Z$. Thus we can define an entropy
\be
    S(v) = 4\pi \int_{C(v)}\text{d}^{d-2}x \sqrt{\mu} s^v
\ee
with $s^v=s^{v}_{HKR}+\sigma^v$ that satisfies $\dot{S}_{HKR}(v)\geq -O(l^N)$ as desired.

\subsection{Leading Order Einstein-Maxwell-Scalar Theory}

Let us look at how this works for the leading order Einstein-Maxwell-Scalar terms arising from $\LL_{2}$. The leading order part of the $(\alpha \beta)$ equation of motion is
\begin{equation}
    E^{(0)}_{\alpha\beta}= R_{\alpha \beta} - \frac{1}{2} \nabla_{\alpha} \phi \nabla_{\beta} \phi -\frac{1}{2} c_1(\phi) F_{\alpha \delta} F_{\beta}^{\,\,\,\,\delta} - \frac{1}{2} g_{\alpha \beta} \left( R - V(\phi) -\frac{1}{2} \nabla_{\gamma}{\phi} \nabla^{\gamma}{\phi} - \frac{1}{4} c_1(\phi) F_{\gamma \delta} F^{\gamma\delta} \right)
\end{equation}
In affinely parameterized GNCs on the horizon, we have
\begin{equation}
    E^{(0)}_{v v}\big|_{\NN}= R_{v v} - \frac{1}{2} (\partial_{v} \phi)^2 -\frac{1}{2} c_1(\phi) K_A K^A
\end{equation}
Using $R_{v v}|_{\NN} = -\mu^{A B} \partial_{v} K_{A B} + K_{A B} K^{A B}$ and $\partial_{v} \sqrt{\mu} = \sqrt{\mu}\mu^{A B} K_{A B}$, we can write this as
\begin{equation} \label{Orderl^0}
    -E^{(0)}_{v v}\big|_{\NN}= \partial_{v}\left[\frac{1}{\sqrt{\mu}} \partial_{v}\left(\sqrt{\mu}\right) \right] + K_{A B} K^{A B} + \frac{1}{2} c_1(\phi) K_A K^A + \frac{1}{2} (\partial_{v} \phi)^2 
\end{equation}
This is of the form (\ref{EvvForm}) with $s^{(0)v}_{HKR} = 1$ and $s^{(0) A} = X^{(0)} = X^{(0)}_A = X^{(0)}_{A B} = Y_A^{(0)} = 0$. Because there is no total derivative term $D_A Y^{(0)A}$, there are no further manipulations needed to get to (\ref{HKRGeneralization}) with $\sigma^v=0$, $Z^{(0)}=X^{(0)}$, $Z_{A}^{(0)}=X_{A}^{(0)}$ and $Z_{A B}^{(0)}=X_{A B}^{(0)}$. Thus we have proved our theory satisfies the 2nd Law non-perturbatively at leading order $l^0$ (which of course can be proved by the usual proof of the 2nd Law on the 2-derivative theory).

We will ultimately work through the higher order terms order-by-order to mould them into the correct form. To get to that point however, we must start with the Biswas, Dhivakara and Kundu entropy.

\subsection{The Biswas, Dhivakara and Kundu Entropy}\label{BDKEntropy}

Our starting point is the generalization of the Iyer-Wald-Wall entropy by Biswas, Dhivakara and Kundu (BDK) defined in \cite{Biswas:2022}. They prove that for any theory of gravity, electromagnetism and a real (uncharged) scalar field with diffeomorphism-invariant and electromagnetic gauge-independent Lagrangian, the $E_{v v}$ component of the equations of motion can be brought into the following form on the horizon\footnote{In \cite{Biswas:2022}, they have an additional $T_{v v}$ in this defining equation, which is the part of the energy momentum tensor arising from the minimal coupling part of the matter sector Lagrangian. However they also show that $T_{v v}$ is quadratic in positive boost weight quantities if $T_{\mu \nu}$ satisfies the NEC, which is the case for our 2-derivative Einstein-Maxwell-Scalar theory and hence we can absorb $T_{v v}$ into the ellipsis.}:
\begin{equation} \label{IWW}
    -E_{v v}\Big|_{\mathcal{N}} = \partial_{v}\left[\frac{1}{\sqrt{\mu}} \partial_{v}\left(\sqrt{\mu} s^{v}_{BDK}\right) + D_{A}{ s^A }\right] + ...
\end{equation}
where the ellipsis denote terms at least quadratic in positive boost weight quantities. We will call the quantity $(s^{v}_{BDK}, s^A)$ the BDK entropy current. It is proved to only depend on the electromagnetic potential through $F_{\mu \nu}$, and is thus invariant upon a gauge transformation $A_\mu \rightarrow A_\mu + \partial_\mu \chi$. $s^A$ is a vector in $A, B,..$ indices on $C(v)$ whilst $s^v$ is a scalar. They are only defined uniquely up to linear order in positive boost weight quantities, as any higher order terms can be absorbed into the ellipsis. As discussed in Section \ref{gaugeinvariance}, we assume we can fix the higher order terms so that $s^v_{BDK}$ is invariant under a change of GNCs.

The BDK entropy is then defined by
\be
S_{BDK}(v) = 4 \pi \int_{C(v)}\text{d}^{d-2}x \sqrt{\mu} s^{v}_{BDK}
\ee
This can be proved to satisfy $\delta \dot{S}_{BDK} = 0$ in the same way as above, since the ellipsis is quadratic in positive boost weight quantities.

For our Einstein-Maxwell-Scalar EFT, we can calculate $s^v_{BDK}$ to all orders in $l$ and take (\ref{IWW}) as our starting point. We group all the remaining terms in the ellipsis and define
\begin{equation} \label{Qdef}
    H\equiv - E_{v v}\Big|_{\mathcal{N}} - \partial_{v}\left[\frac{1}{\sqrt{\mu}} \partial_{v}\left(\sqrt{\mu} s^{v}_{BDK}\right) + D_{A}{ s^A }\right]
\end{equation}

We will use the fact that $H$ is quadratic in positive boost weight terms to show we can manipulate it so that (\ref{IWW}) becomes (\ref{EvvForm}). The resulting generalization of the HKR entropy density $s^v_{HKR}$ will be 
\be
    s^v_{HKR} = s^v_{BDK} + \sum_{n=0}^{N-1}l^n \varsigma^{(n) v}
\ee
where the $\varsigma^{(n) v}$ are quadratic in positive boost weight quantities. We will not need to add any terms to $s^A$.

From (\ref{Orderl^0}), we can see that for the leading order theory $\LL_{2}$, the BDK entropy density is $s^{(0) v}_{BDK}=1$ and we need no correction, $\varsigma^{(0) v} = 0$. 

\subsection{Reducing to Allowed Terms}

To generalize the HKR entropy we study the possible quantities $H$ is made out of. $H$ comes from the equations of motion and is gauge invariant, so is made from the fields $g_{\mu \nu}, F_{\mu \nu}$, and $\phi$ and their derivatives. It is also evaluated in affinely parameterized GNCs on the horizon, and $H$ is a scalar with respect to $A,B,...$ indices. Therefore it is made from gauge invariant affinely parameterized GNC quantities of the metric and matter fields that are covariant in $A, B,...$ indices, namely
\be \label{covariant list}
    D^k{ \partial_{v}^{p} \partial_{r}^q \varphi } \,\,\, \text{for} \,\,\, \varphi \in \{\alpha, \beta_A, \mu_{A B}, R_{A B C D}[\mu], \epsilon_{A_1 ... A_{d-2}}, \phi, F_{A B}, K_A, \bar{K}_A, \psi\}
\ee
where $k,p,q\geq 0$ and we have suppressed the indices $D^k = D_{A_1}...D_{A_k}$. $K_A, \bar{K}_A$ and $\psi$ are defined in (\ref{Maxwellquants}). Section 3.3 of \cite{Davies:2023} gives commutation rules for commuting $D_A$ derivatives past $\partial_v$ and $\partial_r$ derivatives, which allow us to have all $D_A$ derivatives on the left. $R_{A B C D}[\mu]$ is the induced Riemann tensor with respect to $\mu_{A B}$. 

We will now show that we can reduce this set of possible objects that can appear on the horizon by using the equations of motion. It is worth emphasising, this reduction holds in an EFT sense, meaning \textit{it is only done up to $O(l^N)$ terms}. In the HKR procedure of \cite{Hollands:2022} for Einstein-Scalar EFT, they show how to reduce the metric and scalar field terms to the set $\mu_{A B}$, $\epsilon_{A_1 ... A_{d-2}}$, $D^k R_{A B C D}[\mu]$, $D^k \beta_{A}$, $D^k\partial_{v}^p K_{A B}$, $D^k\partial_{r}^p \bar{K}_{A B}$, $D^k\partial_{v}^p \phi$, $D^k\partial_{r}^p \phi$ with $p\geq 0$. This procedure still holds in our Einstein-Maxwell-Scalar EFT. To focus on where we need to generalize the HKR procedure, we only detail how to reduce the Maxwell terms.

We aim to reduce the set of Maxwell terms on the horizon to
\be
    D^k \psi,\,\, D^k F_{A B},\,\, D^k\partial_{v}^p K_{A},\,\, D^k\partial_{r}^q \bar{K}_{A}
\ee
To do this we must eliminate any $\partial_v$ and $\partial_r$ derivative of both $\psi$ and $F_{A B}$. We must also eliminate any $\partial_r$ derivative of $K_A$, and any $\partial_v$ derivative of $\bar{K}_A$. 

To begin we use the fact that
\be \label{Fantisym}
\partial_{\alpha}F_{\beta \gamma} + \partial_{\beta}F_{\gamma\alpha} + \partial_{\gamma}F_{\alpha \beta} = 0
\ee
which follows from $F=\text{d}A$. Taking $\alpha = v, \beta = A, \gamma = B$, we can rearrange this to\footnote{If we had explicitly picked a gauge $A_\mu$, then this relation would be trivially true and we would have fewer terms to eliminate. However, we would like to keep the entropy current manifestly gauge invariant, and hence we do not pick a gauge.}
\be
    \partial_v F_{A B} = D_A K_B - D_B K_A
\ee
Similarly, taking $\alpha = v, \beta = A, \gamma = B$ gives
\be
    \partial_r F_{A B} = D_A \bar{K}_B - D_B \bar{K}_A
\ee
These two relations allow us to eliminate all $\partial_v$ and $\partial_r$ derivatives of $F_{A B}$ in favour of other Maxwell and metric terms.

Furthermore, taking $\alpha = v, \beta = r, \gamma = A$, we get
\be \label{dvKA}
    \partial_v \bar{K}_{A} = \partial_r K_A - D_A \psi
\ee
which allows us to eliminate any $\partial_v$ derivative or mixed $\partial_v$ and $\partial_r$ derivative of $\bar{K}_A$.

To go further, we will have to use the equations of motion for the Maxwell field. In particular, we inspect the leading order part $E^{(0)}_{\alpha} = O(l)$:
\be
    \nabla^{\beta}{\Big[c_1(\phi)F_{\alpha\beta} - 4 c_2(\phi) F^{\gamma \delta} \epsilon_{\alpha \beta \gamma \delta}\Big]} = O(l)
\ee
where in theory we know all the terms on the right hand side up to $O(l^N)$. We can use $\epsilon_{\alpha \beta \gamma \delta} \nabla^{\beta} F^{\gamma \delta} = 0$ (which follows from (\ref{Fantisym})) to rewrite this as
\begin{equation} \label{Fswap}
    \nabla^{\beta}F_{\alpha\beta} = \frac{1}{c_1(\phi)}\left[ 4 c'_2(\phi) \nabla^{\beta}\phi F^{\gamma \delta} \epsilon_{\alpha \beta \gamma \delta} - c'_1(\phi) \nabla^{\beta}\phi F_{\alpha\beta} \right] + O(l)
\end{equation}
The order $l^0$ terms on the right hand only involve Maxwell terms that we are not trying to eliminate. Let us now evaluate the $v$ component of $\nabla^{\beta}F_{\alpha\beta}$ in affinely parameterized GNCs:
\be \label{Ev0}
    \nabla^{\beta}F_{v \beta} = \partial_{v} \psi + D^{A}{K_{A}} +\psi K + ...
\ee
where the ellipsis denotes terms that vanish on $\NN$. We can substitute this into (\ref{Fswap}) to get an expression for $\partial_v \psi$ on the horizon up to terms higher order in $l$:
\be \label{dvpsi}
    \partial_v \psi\big|_{\NN} = - D^{A}{K_{A}} - \psi K + \frac{1}{c_1(\phi)}\left[ 4c'_2(\phi)\epsilon^{A B}\left(2 D_{A} \phi K_B - \partial_v \phi F_{A B} \right) - c'_1(\phi)\left( \psi \partial_v \phi + K_A D^A \phi \right) \right] + O(l)
\ee
Therefore, wherever we find a $\partial_v \psi$ in $H$, we can swap it out order-by-order in $l$, pushing it to higher order with each step. Eventually it will only appear at $O(l^N)$, at which point it is not relevant to our analysis since we do not know the equations of motion at that order.

Similarly we can evaluate $\nabla^\beta F_{r \beta}$ in affinely parameterized GNCs:
\be \label{Er0}
    \nabla^{\beta}F_{r \beta} = -\partial_{r}{\psi} + D^{A}{\bar{K}_{A}} + \bar{K}^{A} \beta_{A}- \psi \bar{K} + ...
\ee
where, again, the terms in the ellipsis vanish on the horizon. We can substitute this into (\ref{Fswap}) to get an expression for $\partial_r \psi$ on the horizon:
\be \label{drpsi}
    \partial_r \psi\big|_{\NN} = D^{A}{\bar{K}_{A}} + \bar{K}^{A} \beta_{A}- \psi \bar{K} + \frac{1}{c_1(\phi)}\left[ c'_1(\phi)\left( \bar{K}_A D^A \phi - \psi \partial_r \phi  \right) - 4c'_2(\phi)\epsilon^{A B}\left( \partial_r \phi F_{A B} - 2 D_{A} \phi \bar{K}_B  \right)  \right] + O(l) 
\ee
This allows us to eliminate $\partial_r \psi$ up to $O(l^N)$ in a similar fashion.

We can take $\partial_v$ derivatives of  (\ref{dvpsi}) and (\ref{drpsi}) in order to eliminate $\partial^p_v \partial_r^q \psi$ for $p\geq 1$ and $q = 0,1$. However, we cannot naively take $\partial_r$ derivatives because these expressions are evaluated on the horizon $r=0$. Instead, we must take successive $\partial_r$ derivatives of (\ref{Fswap}) and (\ref{Er0}), and then evaluate them on $r=0$, possibly using substitution rules already calculated for lower order derivatives. This will involve taking care of the terms in the ellipsis in (\ref{drpsi}), which are given in full in Appendix \ref{dFAppendix}. However, these only ever involve lower order derivatives, for which we already have substitution rules and hence do not cause an issue. Therefore, we can eliminate all $\partial_v$ and $\partial_r$ derivatives of $\psi$ up to order $O(l^N)$.

This just leaves $\partial_r$ derivatives of $K_A$ to be eliminated, for which we look at $\nabla^\beta F_{A \beta}$:
\be \label{EA0}
    \nabla^{\beta}F_{A \beta} = -2\partial_{r}{K_{A}}+D_{A}{\psi} + D^{B}{F_{A B}}+2\bar{K}^{B} K_{A B} + 2K^{B} \bar{K}_{A B} - \psi \beta_{A} - \bar{K}_{A} K - K_{A} \bar{K} + F_{A B} \beta^{B} + ...
\ee
Substituting this into (\ref{Fswap}) gives us an expression which we can use to eliminate $\partial_r K_A$ on the horizon. Taking $\partial_r$ derivatives of (\ref{EA0}) again allows us to eliminate higher $\partial_r$ derivatives of $K_A$ because the terms in the ellipsis only involve lower order derivatives. This completes the reduction of Maxwell terms.

Combining the Maxwell terms with the metric and scalar field terms already reduced through the HKR procedure, we are left with a small set of "allowed terms":
\begin{empheq}[box=\fbox]{align}\label{AllowedTerms}
    \text{Allowed terms:} \,\, &\mu_{A B}, \,\, \mu^{A B}, \,\, \epsilon_{A_1 ... A_{d-2}}, \,\, D^k R_{A B C D}[\mu], \,\, D^k \beta_{A}, \,\, 
    D^k\partial_{v}^p K_{A B}, \,\, D^k\partial_{r}^q \bar{K}_{A B},\nonumber\\
    & D^k \psi, \,\, D^k F_{A B}, \,\, 
    D^k\partial_{v}^p K_{A}, \,\, D^k\partial_{r}^q \bar{K}_{A}, \,\,
    D^k\partial_{v}^p \phi, \,\, D^k\partial_{r}^q \phi 
\end{empheq}
In particular, the only allowed positive boost weight terms are of the form $D^k\partial_{v}^p K_{A B}$ and $ D^k\partial_{v}^p K_{A}$ with $p\geq 0$, and $D^k\partial_{v}^p \phi$ with $p\geq 1$. This will be the crucial fact that allows us to manipulate the terms in $H$.

\subsection{Manipulating Terms Order-by-Order} \label{ManipTermsObO}

Let us return to $H$. We use the above procedures to eliminate any non-allowed terms up to $O(l^N)$. Once doing so, we can write it as a series in $l$:  
\begin{equation}\label{Qseries}
    H = H^{(0)} + \sum_{n=1}^{N-1} l^n H^{(n)} + O(l^N)
\end{equation}
By construction, the $H^{(n)}$ are quadratic in positive boost weight terms. Furthermore, $H^{(0)}$ are the terms calculated from the leading order part of the equation of motion in (\ref{Orderl^0}):
\be \label{Q0}
    H^{(0)} = K_{A B} K^{A B} + \frac{1}{2} c_1(\phi) K_A K^A + \frac{1}{2} (\partial_{v} \phi)^2
\ee

We proceed by induction order-by-order in $l$. Our inductive hypothesis is that we have manipulated the terms in $H$ up to $O(l^m)$ into the form
\begin{multline}\label{Qinduction}
    H= \partial_{v}\left[\frac{1}{\sqrt{\mu}} \partial_{v}\left(\sqrt{\mu} \sum_{n=0}^{m-1} l^n \varsigma^{(n) v} \right)\right] + \left(K_{A B}+\sum_{n=0}^{m-1} l^n X^{(n)}_{A B}\right) \left(K^{A B}+\sum_{n=0}^{m-1} l^n X^{(n) A B}\right) +\\
     \frac{1}{2} c_1(\phi) \left(K_{A}+\sum_{n=0}^{m-1} l^n X^{(n)}_{A}\right) \left(K^{A}+\sum_{n=0}^{m-1} l^n X^{(n) A}\right) + \frac{1}{2}\left( \partial_v \phi + \sum_{n=0}^{m-1} l^n X^{(n)} \right)^2 +\\
     D_{A}{\sum_{n=0}^{m-1} l^n Y^{(n) A}} + \sum_{n=m}^{N-1} l^n H^{(n)} + O(l^{N})
\end{multline}
where the $H^{(n)}$ may have gained extra terms compared to (\ref{Qseries}) but are still quadratic in positive boost weight terms.

By (\ref{Q0}), this is true for $m=1$ with $\varsigma^{(0) v} = X^{(0)}_{A B} = X^{(0)}_{A} = X^{(0)} = Y^{(0) A} = 0$. So assume it is true for some $1\leq m\leq N-1$. 

We now consider $H^{(m)}$. It is quadratic in positive boost weight quantities. However, we have reduced the set of allowed positive boost weight quantities. Therefore we can write it as a sum 
\be
    H^{(m)} = \sum_{k_1, k_2, p_1, p_2, P_1, P_2} (D^{k_1}{\partial_{v}^{p_1} P_1}) \, (D^{k_2}{\partial_{v}^{p_2} P_2}) \, Q_{k_1, k_2, p_1, p_2, P_1, P_2} 
\ee
where $P_1, P_2 \in \{ K_{A B}, K_{A}, \partial_v \phi \}$ and $Q_{k_1, k_2, p_1, p_2, P_1, P_2}$ is some linear combination of allowed terms. Note that we have dropped $A, B,...$ indices here for notational ease, and they can be contracted in any way.

We now move the $D^k_1 \partial_v^{p_1}$ derivatives off the leading positive boost weight factor in each term in the sum. The method of doing so is identical to the HKR procedure detailed in Section 3.5 of \cite{Davies:2023} but with $P_1,P_2$ in the place of factors of $K$, so we shall not repeat it here. It produces extra total derivative terms, with the end result being
\be
    H^{(m)} = \sum_{k, p, P_1, P_2} P_1 \, (D^{k}{\partial_{v}^{p} P_2}) \, Q_{k, p,P_1,P_2} + \partial_{v}\left[\frac{1}{\sqrt{\mu}} \partial_{v}\left(\sqrt{\mu} \varsigma^{(m) v} \right)\right] + D_{A}Y^{(m) A}
\ee
with $\varsigma^{(m) v}$ and $Y^{(m) A}$ quadratic in positive boost weight quantities. It also produces terms that are higher order in $l$. These are still quadratic in positive boost weight quantities so can be absorbed into $\sum_{n=m+1}^{N-1} l^n H^{(n)}$. 

We now split the sum over $P_1 \in \{ K_{A B}, K_{A}, \partial_v \phi \}$, write the remaining sums as $2X^{(m) A B}$, $c_1(\phi) X^{(m) A}$ and $X^{(m)}$, and substitute this into (\ref{Qinduction}):
\begin{multline}\label{Qinductionfinal}
    H= \partial_{v}\left[\frac{1}{\sqrt{\mu}} \partial_{v}\left(\sqrt{\mu} \sum_{n=0}^{m} l^n \varsigma^{(n) v} \right)\right] + \left(K_{A B}+\sum_{n=0}^{m-1} l^n X^{(n)}_{A B}\right) \left(K^{A B}+\sum_{n=0}^{m-1} l^n X^{(n) A B}\right) + 2 l^m K_{A B} X^{(m) A B} + \\
     \frac{1}{2} c_1(\phi) \left(K_{A}+\sum_{n=0}^{m-1} l^n X^{(n)}_{A}\right) \left(K^{A}+\sum_{n=0}^{m-1} l^n X^{(n) A}\right) + l^m c_1(\phi) K_{A} X^{(m) A} + \\
     \frac{1}{2}\left( \partial_v \phi + \sum_{n=0}^{m-1} l^n X^{(n)} \right)^2 + l^m \partial_v \phi X^{(m)} +   D_{A}{\sum_{n=0}^{m} l^n Y^{(n) A}} + \sum_{n=m+1}^{N-1} l^n H^{(n)} + O(l^{N})
\end{multline}
We now complete the three squares to bring $l^m X^{(m) A B}$, $l^m X^{(m) A}$ and $l^m X^{(m)}$ into the sums. The extra terms produced are $O(l^{m+1})$ because $X^{(0)}_{A B} = X^{(0)}_A = X^{(0)} = 0$, and are quadratic in positive boost weight quantities so can be absorbed into $\sum_{n=m+1}^{N-1} l^n H^{(n)}$. This completes the inductive step. 

This can be repeated until all terms up to $O(l^N)$ are of the correct form. Substituting this back into the definition of $H$ in (\ref{Qdef}), we can now write $E_{v v}\big|_{\NN}$ in the desired form (\ref{EvvForm}) with 
\be
    s^v_{HKR} = s^v_{BDK} + \sum_{n=0}^{N-1}l^n \varsigma^{(n) v}
\ee
This completes the generalization of the HKR entropy
\be
    S_{HKR}(v) = 4\pi \int_{C(v)}\text{d}^{d-2}x \sqrt{\mu} s^{v}_{HKR}
\ee
which satisfies $\delta^2 \dot{S}_{HKR} \geq -O(l^N)$.

\subsection{Further Modification of the Entropy}\label{FurtherModifications}

We now further modify this to generalize the entropy defined by Davies and Reall in our companion paper. Performing on $S_{HKR}$ the same steps used to get to (\ref{dotSHKR}) we have

\begin{equation} \label{HKREMS}
        \dot{S}_{HKR}(v_0)= 4 \pi \int_{C(v_0)} \text{d}^{d-2} x \sqrt{\mu(v_0)} \int_{v_0}^{\infty} \text{d}v \left[W^2 + D_{A}{Y^{A}} + O(l^{N}) \right](v)
\end{equation}
where $W^2= \left(K_{A B} + X_{A B}\right) \left(K^{A B} + X^{A B}\right) + \frac{1}{2} c_1(\phi) \left(K_{A}+ X_{A}\right) \left(K^{A}+X^{A}\right) + \frac{1}{2} \left( \partial_v \phi + X\right)^2$. We have suppressed all $x$-dependence, and switched notation to $v_0$ and $v$ to match \cite{DaviesReall:2023}. The obstruction to this integral being non-negative up to $O(l^N)$ is $D_{A}{Y^{A}}(v)$. Despite being a divergence term, it does not integrate to zero because it is evaluated at the integration variable $v$, whereas the area element $\sqrt{\mu(v_0)}$ is evaluated at $v_0$. Define
\be
    a(v_0,v)= \sqrt{\frac{\mu(v)}{\mu(v_0)}}
\ee
which measures the change in the area element from $v_0$ to $v$. 
Then, if we try to integrate $D_{A}{Y^{A}}(v)$ by parts we get\footnote{All cross sections $C(v)$ are diffeomorphic to each other, and thus we write them all as $C$ for this section.}
\be \label{DYbyparts}
\begin{split}
    \int_{C} \text{d}^{d-2} x \sqrt{\mu(v_0)} \int_{v_0}^{\infty} \text{d}v D_{A}{Y^{A}}(v) =& \int_{v_0}^{\infty} \text{d}v \int_{C}  \text{d}^{d-2} x \sqrt{\mu(v)} a^{-1}(v_0,v) D_{A}{Y^{A}}(v)\\
    =& -\int_{v_0}^{\infty} \text{d}v \int_{C}  \text{d}^{d-2} x \sqrt{\mu(v)} Y^{A}(v) D_A a^{-1}(v_0,v)\\
    =& \int_{C} \text{d}^{d-2} x \sqrt{\mu(v_0)} \int_{v_0}^{\infty} \text{d}v Y^{A}(v) D_A \log a(v_0,v)
\end{split}
\ee
Now, $Y^A(v)$ is quadratic in positive boost weight quantities and so is a sum of terms of the form $(D^{k_1}{\partial_{v}^{p_1} P_1}) $ $ (D^{k_2}{\partial_{v}^{p_2} P_2}) $ $ Q(v)$ where, as before, $P_1, P_2 \in \{ K_{A B}, K_{A}, \partial_v \phi \}$ and $Q(v)$ is some linear combination of allowed terms. Therefore this integrand closely resembles the terms we manipulated in the previous section, except with factors of $D_A \log a(v_0,v)$. We will show these terms can still be absorbed into the positive definite terms in (\ref{HKREMS}). We will do this via a similar induction over powers of $l$.

Our inductive hypothesis is that we have manipulated $\dot{S}_{HKR}(v_0)$ up to $O(l^m)$ into the form
\begin{multline} \label{Hypothesis}
    \dot{S}_{HKR}(v_0) = -\frac{d}{dv} \left(4\pi \int_{C(v)} \text{d}^{d-2}x \sqrt{\mu(v)} \sigma_m^{v}(v) \right)\\
    +4\pi \int_{C} \text{d}^{d-2} x \sqrt{\mu(v_0)} \int_{v_0}^{\infty} \text{d}v  \Big[ \left(K^{A B} + Z_{m}^{A B}\right) \left( K_{A B} + Z_{m A B} \right)+ \frac{1}{2} c_1(\phi) \left(K_{A}+ Z_{m A}\right) \left(K^{A}+Z^{A}_m\right)\\
    + \frac{1}{2}\left( \partial_v \phi + Z_m \right)^2 + R_m  + O(l^N) \Big](v_0,v)
\end{multline}
where $Z^{A B}_{m}(v_0,v)$, $Z^{A}_{m}(v_0,v)$ and $Z_{m}(v_0,v)$ are $O(l)$ and at least linear in positive boost weight quantities, and $R_m(v_0,v)$ is of the form
\be
    R_m(v_0,v) = \sum_{n=m}^{N-1} l^n \sum_{k_1, k_2, p_1, p_2, P_1, P_2} (D^{k_1}{\partial_{v}^{p_1} P_1}) \, (D^{k_2}{\partial_{v}^{p_2} P_2}) \, Q_{k_1, k_2, p_1, p_2,P_1,P_2,m, n} (v_0,v)
\ee
and, in particular, $Z_{m}^{A B}(v_0,v)$, $Z_{m}^{A}(v_0,v)$, $Z_{m}(v_0,v)$ and $Q_{k_1, k_2, p_1, p_2,P_1,P_2, m, n}(v_0,v)$ is each a linear combination of terms, where each term is a product of factors of two possible types: (i) allowed terms evaluated at $v$ and (ii) $D^q \log a(v_0,v)$ with $q\ge 1$ ($D_A$ evaluated at time $v$). If a factor of type (ii) is present then the term is bilocal, otherwise it is local. All covariant derivatives $D$ are constructed from $\mu_{A B}(v)$, and all $P_1, P_2$ terms are evaluated at $v$. 

By (\ref{HKREMS}) and (\ref{DYbyparts}), the base case $m=0$ is satisfied with $Z^{A B}_{0} = X^{A B}$, $Z^{A}_{0} = X^{A}$, $Z_{0} = X$ and $R_0= Y^A D_A \log a$. Assuming true for $m$, the obstruction to proceeding is the order $l^{m}$ terms in the sum in $R_m$, which are of the form $l^m (D^{k_1}{\partial_{v}^{p_1} P_1}) $ $ (D^{k_2}{\partial_{v}^{p_2} P_2}) $ $ Q(v_0,v)$. We aim to remove the $D^{k_1}\partial_{v}^{p_1}$ from each term and then complete the square.

We first reduce $k_1$ by 1 in each term via a spatial integration by parts:
\begin{equation} \label{spatialibp}
    \begin{split}
        &\int_{C} \text{d}^{d-2} x \sqrt{\mu(v_0)} \int_{v_0}^{\infty} \text{d}v (D^{k_1}{\partial_{v}^{p_1} P_1}) \, (D^{k_2}{\partial_{v}^{p_2} P_2}) \, Q\\
        = & - \int_{v_0}^{\infty} \text{d}v \int_{C}  \text{d}^{d-2} x \sqrt{\mu(v)} \, (D^{k_1-1}{\partial_{v}^{p_1} P_1}) \, D \left[ a^{-1} (D^{k_2}{\partial_{v}^{p_2} P_2}) Q \right]\\
        = & - \int_{C} \text{d}^{d-2} x \sqrt{\mu(v_0)} \int_{v_0}^{\infty} \text{d}v \, (D^{k_1-1}{\partial_{v}^{p_1} P_1}) \, \Big[ (D^{k_2+1}{\partial_{v}^{p_2} P_2}) \, Q \\
        & + (D^{k_2}{\partial_{v}^{p_2} P_2}) \, DQ - (D^{k_2}{\partial_{v}^{p_2} P_2}) \, Q\, D \log{a} \Big]
    \end{split}
\end{equation}
where in the last step we used $a D(a^{-1}) = -D \log a$. We repeat to bring $k_1$ to 0 in all terms, leaving us with terms of the form $l^m (\partial_{v}^{p_1} P_1) \, (D^{k}{\partial_{v}^{p_2} P_2}) \, Q$, with $Q$ still made exclusively from local allowed terms and factors of $D^q \log a(v_0,v)$.

We now aim to reduce $p_1$ to 0 by $v$-integration by parts. However, to avoid surface terms we must treat local and bilocal terms separately. 

\subsubsection{Bilocal Terms}

Bilocal terms have at least one factor of $D^q \log a(v_0,v)$. Their $v$-integration by parts follows simply:
\begin{equation} \label{vint}
\begin{split}
    \int_{v_0}^{\infty} \text{d}v (\partial_{v}^{p_1} P_1) \, (D^{k}{\partial_{v}^{p_2} P_2}) \, Q D^{q} \log a(v_0,v) =& \left[(\partial_{v}^{p_1-1} P_1) (D^{k}{\partial_{v}^{p_2} P_2}) Q D^{q} \log a(v_0,v) \right]^{\infty}_{v_0}\\
    &- \int_{v_0}^{\infty} \text{d}v (\partial_{v}^{p_1-1} P_1) \, \partial_v \left[(D^{k}{\partial_{v}^{p_2} P_2}) \, Q D^{q} \log a(v_0,v) \right]
\end{split}
\end{equation}
The boundary term vanishes at $v=\infty$ because we assume the black hole settles down to stationarity, and so positive boost weight quantities vanish. The boundary term also vanishes at $v=v_0$ because $a(v_0,v_0) \equiv 1$ and hence $D^q \log a = 0$. 
In the remaining $v$-integral, we can commute the $\partial_v$ past any $D$ derivatives using the formula 
\be
    [\partial_{v}, D_{A}]t_{B_1 ... B_n} = \sum_{i=1}^{n} \mu^{C D} ( D_{D}{ K_{A B_i} } - D_{A}{ K_{D B_i} } - D_{B_i}{ K_{A D} } ) t_{B_1... B_{i-1} C B_{i+1} ... B_n}
\ee
which will produce additional terms proportional to some $D^{k'} K$. Commuting $\partial_v$ past $D^q$ will leave $D^q \partial_v \log a$, which initially looks like a new type of bilocal term, however one can calculate that
\begin{equation} \label{dvloga}
    \partial_v \log a = \mu^{A B} K_{A B}
\end{equation}
and so this term is actually proportional to $D^q K$. Similarly in $\partial_v Q$, any $v$ derivative of $D^{q'} \log a$ can be dealt with by commuting and then using $(\ref{dvloga})$, and any non-allowed terms such as $\partial_v \beta$ or $\partial_{v r} \phi$ can be swapped out to $O(l^N)$ using the equations of motion, which will generate additional terms in $R_m$ of $O(l^{m+1})$. 

Therefore we are left with two types of terms at order $l^m$: i) terms that retain their factor of $D^q \log a$, which will be of the form $(\partial_{v}^{p_1-1} P_1) (D^{k}{\partial_{v}^{p_2} P_2}) Q D^{q} \log a(v_0,v)$ (with possibly changed $k$,$p_2$ and $Q$), and ii) terms that had $D^q \log a$ hit by $\partial_v$, which will be of the form $(D^{k'}{ K }) \,(\partial_{v}^{p_1-1} P_1) Q$ for some $k'$ and $Q$. This second type of term can potentially be local.

The $v$-integration by parts can be repeated on terms of type (i) until $p_1$ is reduced to 0, producing more terms of type (ii) along the way (which will have varying $p_1$'s). To terms of type (ii) we move the $D^{k'}$ derivatives off of $K$ via the same spatial integration by parts as in (\ref{spatialibp}). This brings them proportional to $K$, and hence, after relabelling this $K$ as $P_1$ and the old $P_1$ as $P_2$, they also effectively have $p_1$ reduced to 0. 

\subsubsection{Local Terms}

Local terms are of the form $(\partial_{v}^{p_1} P_1) \, (D^{k}{\partial_{v}^{p_2} P_2}) \, Q(v)$ with $Q(v)$ made exclusively from allowed terms evaluated at $v$. We can no longer simply do a $v$-integration by parts on this because there is no $D^q \log a$ to make the boundary term vanish at $v=v_0$. However, we can manipulate these terms in the same fashion as in the HKR procedure, namely by noting there exist unique numbers $a_j$ such that
\be
    (\partial_{v}^{p_1} P_1) \, (D^{k}{\partial_{v}^{p_2}P_2}) Q = \partial_{v}\left\{\frac{1}{\sqrt{\mu}} \partial_{v}\left[\sqrt{\mu} \sum_{j=1}^{\mathclap{p_1+p_2-1}} a_j (\partial_{v}^{p_1+p_2-1-j} P_1) \, (D^{k}{\partial_{v}^{j-1}P_2 }) Q \right]\right\}\\
    + ...
\ee
where the ellipsis denotes terms of the form
$(\partial_{v}^{\Bar{p}_1} P_1)$ $(D^{\bar{k}}{\partial_{v}^{\Bar{p}_2}P_2}) \tilde{Q}$ with $\Bar{p}_1+\Bar{p}_2 < p_1+p_2$ or $\Bar{p}_1=0$ or $\Bar{p}_2=0$. The proof follows identically to Appendix A.2 of \cite{Davies:2023} but with $P_1,P_2$ in the place of factors of $K$. The new $\tilde{Q}$ are still local, but do include terms like $\partial_{v}{ Q }$ which will involve non-allowed terms. However, these can be swapped out to $O(l^N)$ using the equations of motion, generating more $O(l^{m+1})$ in $R_m$.

We repeat this procedure on the terms in the ellipsis with $\Bar{p}_1+\Bar{p}_2 < p_1+p_2$ until eventually $\Bar{p}_1=0$ or $\Bar{p}_2=0$ for all terms. This must eventually happen because $\Bar{p}_1+\Bar{p}_2$ must decrease by at least 1 if the new $\bar{p}_1\neq 0$ and $\bar{p}_2\neq 0$, and hence $\Bar{p}_1+\Bar{p}_2$ eventually falls below 2, meaning one of $\bar{p}_1$ and $\bar{p}_2$ must be 0. Therefore we can write all local terms as a sum of terms of the form i) $\partial_{v}\left\{\frac{1}{\sqrt{\mu}} \partial_{v}\left[\sqrt{\mu} \sigma^{v}(v) \right] \right\}$ with $\sigma^v$ local and quadratic in positive boost weight quantities, ii) $P_1 \, (D^{\bar{k}}{\partial_{v}^{\bar{p}_2} P_2}) \, \tilde{Q}(v)$, and iii) $(\partial_{v}^{\bar{p}_1} P_1) \, (D^{\bar{k}} P_2 ) \, \tilde{Q}(v)$.

We have successfully reduced $k_1=p_1=0$ to 0 in terms of type (ii). For terms of type (iii), we can relabel $P_1 \leftrightarrow P_2$ and then remove the $D^{\bar{k}}$ derivatives from $P_1$ by using spatial integration by parts, as in (\ref{spatialibp}). This will introduce bilocal factors of $D^q \log a$, but this is fine: the resulting terms will all be of the desired form $P_1 \, (D^{k}{\partial_{v}^{\tilde{p}} P_2}) \, Q(v_0,v)$, i.e. they also have $k_1=p_1=0$.

Let us look at what happens to terms of type (i) when they are placed in the integral:
\be
\begin{split}
    \int_{C} \text{d}^{d-2} x \sqrt{\mu(v_0)} \int_{v_0}^{\infty} \text{d}v \,  \partial_{v}\left\{\frac{1}{\sqrt{\mu}}\partial_{v}\left[\sqrt{\mu} \sigma^{v}(v) \right] \right\} =& - \int_{C} \text{d}^{d-2}x \partial_{v}\left(\sqrt{\mu(v)} \sigma^{v}(v) \right)\Big|_{v=v_0}\\
    =& - \frac{d}{dv} \left(\int_{C} \text{d}^{d-2}x \sqrt{\mu(v)} \sigma^{v}(v) \right)\Big|_{v=v_0}
\end{split}
\ee
where in the first line we set the boundary term at $v=\infty$ to zero because we assume the black hole settles down to stationarity. These are the terms which will modify our definition of the entropy. 

\subsubsection{Completion of the Induction}

To summarise, we have rewritten all order $l^{m}$ terms in $R_m$ as
\be
    - \frac{d}{dv} \left(l^{m} \int_{C} \text{d}^{d-2}x \sqrt{\mu(v)} \sigma^{v}(v) \right)\Big|_{v=v_0}
    + l^{m} \int_{C} \text{d}^{d-2}x \sqrt{\mu(v_0)} \int_{v_0}^{\infty} \text{d}v \sum_{k, p,P_1,P_2} \, P_1 \, (D^{k}{\partial_{v}^{p} P_2}) \, Q_{k, p,P_1,P_2}(v_0,v)
\ee
where $Q_{k, p,P_1,P_2}(v_0,v)$ is a linear combination of allowed terms evaluated at $v$ and factors of $D^q \log a(v_0,v)$ with $q\geq 1$. 

Similarly to Section \ref{ManipTermsObO}, the final step in the induction is to split the sum over $P_1\in\{K_{A B}, K_A, \partial_v \phi\}$ and write the remaining sums as $2 l^{m} \tilde{Z}^{A B}$, $l^m c_1(\phi) \tilde{Z}^{A}$ and $l^m \tilde{Z}$. We then absorb them into the positive definite terms in (\ref{Hypothesis}) by completing the squares and setting $Z^{A B}_{m+1} = Z^{A B}_m + l^m \tilde{Z}^{A B}$, $Z^{A}_{m+1} = Z^{A}_m + l^m \tilde{Z}^{A}$ and $Z_{m+1} = Z_m + l^m \tilde{Z}$. The remainder terms will be $O(l^{m+1})$ (because $Z^{A B}_m$ etc. are $O(l)$), quadratic in positive boost weight (because $Z^{A B}_m$ etc. are linear), and linear combinations of local allowed terms and factors of $D^q \log a$ (because $Z^{A B}_m$ etc. are). Furthermore, $Z^{A B}_{m+1}$, $Z^{A}_{m+1}$ and $Z_{m+1}$ retain these properties. Finally we label $\sigma_{m+1}^v=\sigma_m^v+l^m\sigma^v$. Thus the induction proceeds.

We continue the induction until $m=N$, at which point $R_N$ is $O(l^N)$. Therefore, the entropy defined by
\be
    S(v) := \int_{C} \text{d}^{d-2}x \sqrt{\mu(v)} s^v(v)
\ee
with $s^v = s^{v}_{HKR} + \sigma^{v}_{N}$ satisfies a non-perturbative 2nd Law up to $O(l^N)$. 

\subsection{Gauge (Non-)Invariance of Entropy} \label{gaugeinvariance}

Through the above procedure we have constructed an entropy $S(v)$ that depends on the local geometry of the "constant time" slice $C(v)$ and satisfies a non-perturbative 2nd Law for Einstein-Maxwell-Scalar EFT. Furthermore, its entropy density $s^v$ differs from the BDK entropy density $s_{BDK}^v$ defined in Section \ref{BDKEntropy} by terms that are quadratic in perturbations around a stationary black hole. Thus the facts that the BDK entropy reduces to the Wald entropy in equilibrium and satisfies the 1st Law \cite{Biswas:2022} imply they also hold for $S(v)$. Therefore $S(v)$ satisfies many of the properties we should expect in a definition of the entropy of a black hole.

However, we should ask, is this definition of the entropy gauge invariant? There are two types of gauge in our theory: the choice of electromagnetic gauge, and our choice of coordinates.

By construction, $s^v$ only depends on Maxwell quantities through $F_{\mu \nu}$, which is invariant under a change of electromagnetic gauge. Therefore the entropy $S(v)$ is independent of electromagnetic gauge.

As for coordinate independence, our procedure was performed in affinely parameterized GNCs with $r=v=0$ on a given spacelike cross section $C$ of $\NN$ (the GNCs can be defined starting from any horizon cross-section, so the restriction $r=v=0$ is not restricting the choice of cross-section considered). However, as discussed in Section \ref{APGNCs}, such affinely parameterized GNCs are not unique: we can reparameterize the affine parameter on each horizon generator by $v' = v/a(x^A)$. This will produce a new foliation $C'(v')$ of the horizon. We should not expect $S'(v') = S(v)$ for all $v$, because $S'(v')$ and $S(v)$ measure the entropy of the different surfaces $C'(v')$ and $C(v)$. However, we should hope that $S'(0) = S(0)$ because $C'(0) = C(0) = C$. Therefore we should investigate how our entropy density $s^v$ transforms under such a gauge transformation at $r=v=0$.

By construction, $s^v$ can be split into two parts: $s^v_{BDK}$ and the modification terms that are quadratic or higher order in positive boost weight terms. A proof that $s^v_{BDK}$ is gauge invariant on $C$ is beyond the scope of this paper, and we will just assume it holds here. Why should we expect it to be gauge invariant? Well, it is the generalization of the Iyer-Wald-Wall entropy density from Einstein-Scalar EFT to Einstein-Maxwell-Scalar EFT. It is proved in \cite{Hollands:2022} that the Iyer-Wald-Wall entropy density is gauge invariant on $C$ to linear order, and can be made gauge invariant non-perturbatively by adjusting the non-unique higher order terms. We expect the proof can be extended to the Einstein-Maxwell-Scalar EFT case. However, to delve into the covariant phase space formalism of the proof would divert somewhat from the material here.

Thus, we will solely concern ourselves with the quadratic or higher order modification terms. The gauge invariance of these terms for the HKR entropy in the Einstein-Scalar case was discussed in Section 4 of \cite{Davies:2023}, which found they are gauge invariant on $C$ up to and including order $l^4$. This was done by noting that, by the HKR construction, $\varsigma^{(n) v}$ consists of terms with $n$ derivatives that are of the form $\partial_{v}^{p_1} P_1 \, (D^{k}{\partial_{v}^{p_2} P_2}) \, Q_{n, k, p_1, p_2}$ with $P_1, P_2 \in \{K_{A B}, \partial_v \phi \}$. Using that the overall boost weight is 0, we can classify the allowed terms that can appear up to 4 total derivatives. The result is that only $K_{A B}, \bar{K}_{A B}, \partial_r \bar{K}_{A B}, \partial_v \phi, \partial_r \phi, \partial_r^2 \phi, \mu^{A B}$ and $\epsilon^{A_1 ... A_{d-2}}$ can appear, all of which are gauge invariant on $r=v=0$ using the transformation rules given in Section 2.1 of \cite{Hollands:2022}. 

The same analysis follows in the Einstein-Maxwell-Scalar EFT here, with the differences being $P_1, P_2 \in \{K_{A B}, K_A, \partial_v \phi \}$ and $Q_{n, k, p_1, p_2}$ can additionally consist of allowed Maxwell terms. The result is that $K_A, \bar{K}_{A}$ and $\partial_r \bar{K}_{A}$ can appear up to and including order $l^4$, all of which are still gauge invariant on $r=v=0$. Therefore, $s^v$ is gauge invariant to the same order as in the Einstein-Scalar EFT case. As in that case, there are non-gauge invariant terms like $\beta_A, D_A \partial_v \phi$ and $D_A K_B$ that can appear at higher orders in $l$.

\subsection{Discussion of The 2nd Law for a Charged Scalar Field}

We can ask, can we generalize our proof of the 2nd Law to the EFT of gravity, electromagnetism and a \textit{charged} scalar field as defined in Section \ref{ChargedScalar}? Our starting point in the above was the BDK entropy defined in Section \ref{BDKEntropy}, which satisfies a linearized 2nd Law. However, such an entropy is only defined for a real uncharged scalar, and its generalization to a charged scalar does not exist in the literature. Proving such a generalization exists is beyond the scope of this paper as it would involve delving into phase space formalism, and therefore this section is merely a discussion. However, it seems reasonable that such a generalization would exist, in which case the following completes the generalization of the proof of the 2nd Law.

In the analysis of the real scalar field EFT, we could use positive boost weight quantities as a proxy for order of perturbation around a stationary black hole because in Section \ref{PBWQH} we proved all such quantities vanish on the horizon in equilibrium. However, in the charged scalar case things are more subtle because whilst e.g. $\partial_{v}{\phi}$ may vanish in one electromagnetic gauge, it does not in another. 

In our proof of the Generalized 0th Law for a charged scalar in Section \ref{ChargedScalar}, we were able to prove \textit{in a particular choice of gauge} that all positive boost weight quantities vanish on the horizon in equilibrium. However, that gauge was defined by the Killing vector symmetry which is no longer present in the dynamical setting of the 2nd Law, so we cannot use it directly. What we can infer however, is that positive boost weight 
quantities made from gauge invariant quantities like $F_{\mu \nu}$ vanish on the horizon in equilibrium \textit{in all gauges}. Similarly, positive boost weight components of the gauged derivatives
\be \label{gaugephiderivs}
    (\partial_{\mu_1}-i\lambda A_{\mu_1})...(\partial_{\mu_n}-i\lambda A_{\mu_n})\phi
\ee
vanish because if they vanish in one gauge then they vanish in all gauges.

We can apply these facts to a choice of gauge particularly suited to our affinely parameterized GNCs. By a suitable gauge transformation, we can always achieve \cite{Hollands:2022}
\be
    A= r \eta \text{d}v + A_A \text{d}x^A
\ee
for some function $\eta(r,v,x^A)$ regular on the horizon. $\eta$ and $A_A$ have boost weight 0. In this gauge 
\be
    \partial_{r}^p \eta \big|_{\NN} = \partial_r^p F_{r v}\big|_{\NN}, \quad \partial_r^q A_A\big|_{\NN} = \partial_{r}^{q-1}F_{r A}\big|_{\NN}, \quad \partial_v A_A\big|_{\NN} = F_{v A}\big|_{\NN}
\ee
for $p\geq 0, q\geq 1$, and hence all positive boost weight derivatives of $\eta$ and $A_A$ can be written as positive boost weight derivatives of $F_{\mu \nu}$ on the horizon. Similarly
\be
    \partial^p_v \partial^q_r \phi\big|_{\NN} = (\partial_{v}-i\lambda A_{v})^p (\partial_{r}-i\lambda A_{r})^q\phi\big|_{\NN}
\ee
and hence all positive boost weight derivatives of $\phi$ can be written as positive boost weight components of (\ref{gaugephiderivs}) (or their $\partial_A$ derivatives) on the horizon. Therefore, in this gauge all positive boost weight quantities still vanish on the horizon in equilibrium and hence can still be used as a proxy for perturbations around a stationary black hole. 

In this gauge, the leading order 2-derivative part of $E_{v v}|_{\NN}$ can be written as
\begin{equation}
    -E^{(0)}_{v v}\big|_{\NN}= \partial_{v}\left[\frac{1}{\sqrt{\mu}} \partial_{v}\left(\sqrt{\mu}\right) \right] + K_{A B} K^{A B} + \frac{1}{2} c_1(|\phi|^2) h^{A B} \partial_v A_A \partial_v A_B + |\partial_{v} \phi|^2 
\end{equation}

For the higher derivative terms, let us now assume that we can generalize the BDK entropy to the charged scalar case. I.e. we assume we can write 
\begin{equation} \label{BDKCharged}
    -E_{v v}\Big|_{\mathcal{N}} = \partial_{v}\left[\frac{1}{\sqrt{\mu}} \partial_{v}\left(\sqrt{\mu} s^{v}_{BDK}\right) + D_{A}{ s^A }\right] + ...
\end{equation}
for some real entropy current $(s^v_{BDK}, s^A)$ and where the ellipsis denotes terms that are quadratic in positive boost weight quantities. 

We can now generalize the HKR procedure as follows. We first reduce, up to $O(l^N)$, to a set of "allowed terms" given by
\begin{empheq}[box=\fbox]{align}\label{AllowedTermsCharged}
    \text{Allowed terms:} \,\, &\mu_{A B}, \,\, \mu^{A B}, \,\, \epsilon_{A_1 ... A_{d-2}}, \,\, D^k R_{A B C D}[\mu], \,\, D^k \beta_{A}, \,\, 
    D^k\partial_{v}^p K_{A B}, \,\, D^k\partial_{r}^q \bar{K}_{A B},\nonumber\\
    & D^k \eta, \,\, D^k\partial_{v}^p A_{A}, \,\, D^k\partial_{r}^q A_{A}, \,\,
    D^k\partial_{v}^p \phi, \,\, D^k\partial_{r}^q \phi, \,\, D^k\partial_{v}^p \phi^*, \,\, D^k\partial_{r}^q \phi^* 
\end{empheq}
The reduction of the metric terms follows straightforwardly in the same way as vacuum gravity by using the $E^{(0)}_{\mu\nu}=O(l)$ equations of motion. We can eliminate mixed $v$ and $r$ derivatives of $\phi$ by using $E^{(0)}=O(l)$ and evaluating $E^{(0)}$ in affinely parameterized GNCs in this gauge:
\begin{multline}
    E^{(0)}=2\partial_{r}\partial_v{\phi}+ D^{A}{D_{A}{\phi}} +K^{A}_{A} \partial_{r}{\phi}+\beta^A D_{A}{\phi} +\bar{K}^{A}_{A} \partial_{v}{\phi}\\
    -2i \lambda A^{A} D_{A}{\phi} - i \lambda \phi  D^{A}{A_{A}} - i \lambda \eta \phi - i \lambda \beta^{A} A_{A}  \phi - \lambda^2 A_{A} A^{A} \phi+...
\end{multline}
where the ellipsis denotes terms that vanish on the horizon.
The reduction of the Maxwell terms is achieved by using the equation of motion
\be
    E^{(0)}_\mu = c_1(|\phi|^2) \nabla^{\nu} F_{\mu \nu} + F_{\mu \nu} \nabla^{\nu}[c_1(|\phi|^2)] - 4 \epsilon_{\mu\nu\alpha\beta}F^{\alpha\beta} \nabla^{\nu}[c_2(|\phi|^2)] + i\lambda \left[ \phi^* \fD_\mu \phi-\phi(\fD_\mu \phi)^* \right] = O(l) 
\ee
and by substituting our choice of gauge $\psi=\eta+r\partial_r \eta$, $K_A=\partial_v A_A-r D_A \eta$, $\bar{K}_A = \partial_r A_A$ into our affinely parameterized GNC expressions for $\nabla^{\nu} F_{\mu\nu}$ given in Appendix \ref{dFAppendix}. These allow us to eliminate $v$ and $r$ derivatives of $\eta$, and mixed $v$ and $r$ derivatives of $A_A$.

In particular, the only positive boost weight allowed terms are $D^k\partial_{v}^p K_{A B}$ with $p\geq 0$, and $D^k\partial_{v}^p A_{A}$, $D^k\partial_{v}^p \phi$ and $D^k\partial_{v}^p \phi^*$ with $p\geq 1$. Therefore we can rewrite all the terms in the ellipsis in (\ref{BDKCharged}), which we label $H$, up to $O(l^N)$ as a sum of terms of the form 
\be
    (D^{k_1}{\partial_{v}^{p_1} P_1}) \, (D^{k_2}{\partial_{v}^{p_2} P_2}) \, Q
\ee
with $P_1,P_2\in \{K_{A B}, \partial_{v} A_{A}, \partial_{v} \phi, \partial_{v} \phi^*  \}$, and where $Q$ is made from allowed terms. 

Now, since $(s^v_{BDK}, s^A)$ is real, the overall sum of these terms, $H$, is real. Hence we can pair each of these terms up with its complex conjugate (or itself if it is real) and write $H$ as
 \be
    H = \sum_{k_1, k_2, p_1, p_2, P_1, P_2}\left[ (D^{k_1}{\partial_{v}^{p_1} P_1}) \, (D^{k_2}{\partial_{v}^{p_2} P_2}) \, Q_{k_1, k_2, p_1, p_2, P_1, P_2} + (D^{k_1}{\partial_{v}^{p_1} P_1^*}) \, (D^{k_2}{\partial_{v}^{p_2} P_2^*}) \, Q^*_{k_1, k_2, p_1, p_2, P_1, P_2} \right]
\ee
with $P_1,P_2\in \{K_{A B}, \partial_{v} A_{A}, \partial_{v} \phi\}$.

We now generalise our inductive hypothesis (\ref{Qinduction}) to 
\begin{multline}\label{Qinductioncharged}
    H= \partial_{v}\left[\frac{1}{\sqrt{\mu}} \partial_{v}\left(\sqrt{\mu} \sum_{n=0}^{m-1} l^n \varsigma^{(n) v} \right)\right] + \left(K_{A B}+\sum_{n=0}^{m-1} l^n X^{(n)}_{A B}\right) \left(K^{A B}+\sum_{n=0}^{m-1} l^n X^{(n) A B}\right) +\\
     \frac{1}{2} c_1(|\phi|^2) \left(\partial_v A_{A}+\sum_{n=0}^{m-1} l^n X^{(n)}_{A}\right) \left(\partial_v A^{A}+\sum_{n=0}^{m-1} l^n X^{(n) A}\right) + \left( \partial_v \phi + \sum_{n=0}^{m-1} l^n X^{(n)} \right)\left( \partial_v \phi + \sum_{n=0}^{m-1} l^n X^{(n)} \right)^* +\\
     D_{A}{\sum_{n=0}^{m-1} l^n Y^{(n) A}} + \sum_{n=m}^{N-1} l^n H^{(n)} + O(l^{N})
\end{multline}
where the $H^{(n)}$, $X^{(n)}_{A B}$ etc., are real. To proceed the induction we manipulate the terms in $H^m$ exactly as in Section \ref{ManipTermsObO}, except we always keep complex conjugates paired up and perform identical operations on them. This will ensure that when we get to the equivalent of (\ref{Qinductionfinal}) we can split the sum over $P_1\in \{K_{A B}, \partial_{v} A_{A}, \partial_{v} \phi\}$ and get
\begin{multline}\label{Qinductionchargedfinal}
    H= \partial_{v}\left[\frac{1}{\sqrt{\mu}} \partial_{v}\left(\sqrt{\mu} \sum_{n=0}^{m-1} l^n \varsigma^{(n) v} \right)\right] + \left(K_{A B}+\sum_{n=0}^{m-1} l^n X^{(n)}_{A B}\right) \left(K^{A B}+\sum_{n=0}^{m-1} l^n X^{(n) A B}\right) + 2l^m K_{A B} X^{(m) A B}\\
     +\frac{1}{2} c_1(|\phi|^2) \left(\partial_v A_{A}+\sum_{n=0}^{m-1} l^n X^{(n)}_{A}\right) \left(\partial_v A^{A}+\sum_{n=0}^{m-1} l^n X^{(n) A}\right) +l^m c_1(|\phi|^2) \partial_v A_A X^{(m) A}\\
     +\left( \partial_v \phi + \sum_{n=0}^{m-1} l^n X^{(n)} \right)\left( \partial_v \phi + \sum_{n=0}^{m-1} l^n X^{(n)} \right)^* +l^m \partial_v\phi X^{(m)*}+l^m\partial_v \phi^*X^{(m)}\\
     +D_{A}{\sum_{n=0}^{m} l^n Y^{(n) A}} + \sum_{n=m+1}^{N-1} l^n H^{(n)} + O(l^{N})
\end{multline}
and thus we can still absorb the order $l^m$ terms into the positive definite terms by completing the squares. The remainder terms are real, and thus the induction can proceed.

Generalizing the further modifications of Section \ref{FurtherModifications} would follow similarly.

Thus we can get a non-perturbative 2nd Law for a charged scalar field if we assume a BDK entropy exists in such a scenario. The procedure outlined here does not produce an entropy that is manifestly electromagnetic gauge-independent like in the real scalar field case. However, it seems reasonable this could be achieved if the hypothesized BDK entropy was gauge invariant. One could take a more careful approach to the gauge field, for example by keeping derivatives of $\phi$ in terms of gauged derivatives $(D_A -i\lambda A_A)$, $(\partial_v-i\lambda A_v)$ etc.

\section{Discussion}

This paper adds another brick in the wall of proving the Laws of Black Hole mechanics for higher derivative theories of gravity. To summarise where it leaves us, we have a 0th Law, 1st Law and 2nd Law for the EFT regime of higher derivative theories of gravity, electromagnetism and a real scalar field. The dynamical black hole entropy which is constructed along the way is independent of electromagnetic gauge for theories with any number of derivatives, and is purely geometric for theories with up to 6 derivatives (order $l^4$). It reduces to the standard factor of the area in 2-derivative GR, and reduces to the Wald entropy in equilibrium for any number of derivatives. In addition, we have shown the 0th Law continues to hold if the scalar is charged, and there is strong motivation to think the 2nd Law would hold. This suggests a more general result involving theories of gravity with any matter fields that satisfy the NEC at 2-derivative level may be provable. For example, it would be interesting to extend the result to Yang-Mills fields.

Our proofs of the 0th and 2nd Laws are perhaps not as general as we would like them to be. For the 0th Law we required our solution to be analytic in $l$ and excluded certain horizon topologies. For the 2nd Law we required our horizon to be smooth. Recent work \cite{Gadioux:2023} has considered the case of non-smooth horizons and suggested there may be additional contributions to black hole entropy motivated by quantum entanglement entropy. They also demonstrate that certain terms in the entropy current defined above can diverge when integrated over non-smooth features on the horizon. Furthermore, our definition of the entropy is dependent on our choice of GNCs above order $l^4$, which raises question about the uniqueness of black hole entropy. Therefore there is still work to be done in this area.

\section{Acknowledgements}
I thank my supervisor H.S. Reall for many invaluable comments and suggestions on this paper. I would also like to thank J. Santos and S. Bhattacharyya for helpful discussions, particularly around the 0th Law for the charged scalar field. I am supported by an STFC studentship.

\section{Appendix}

\subsection{Evaluation of $E^{(0)}_\tau$ on the Horizon}\label{EtauEvaluation}

We would like to evaluate

\be
    E^{(0)}_\tau[\Phi_J] = g^{\alpha \beta} \nabla_{\alpha}{\Big[c_1(\phi)F_{\tau\beta} - 4 c_2(\phi) F^{\gamma \delta} \epsilon_{\tau \beta \gamma \delta}\Big]}
\ee
in Killing vector GNCs on the horizon. The metric in Killing vector GNCs is given by
\begin{equation}
    g = 2 \text{d}\tau \text{d}\rho - \rho X(\rho,x^C) \text{d}\tau^2 +2 \rho \omega_{A}(\rho,x^C) \text{d}\tau \text{d}x^A + h_{A B}(\rho,x^C) \text{d}x^A \text{d}x^B
\end{equation}
On the horizon it is simply
\begin{equation}
    g\big|_{\rho=0} = 2 \text{d}\tau \text{d}\rho + h_{A B}\text{d}x^A \text{d}x^B
\end{equation}
We can calculate the Christoffel symbols on the horizon in this metric. The non-zero components are
\be
\begin{split}
    \Gamma^{\tau}_{\tau \tau} = \frac{1}{2} X, \quad\quad \Gamma^{\tau}_{\tau A} = -\frac{1}{2} \omega_A,  \quad&\quad \Gamma^{\tau}_{A B} = -\frac{1}{2} \partial_\rho h_{A B}, \quad\quad \Gamma^{\rho}_{\rho \tau} = - \frac{1}{2} X,\\
    \Gamma^{\rho}_{\rho A} = \frac{1}{2} \omega_A, \quad\quad \Gamma^{A}_{\rho\tau} = \frac{1}{2} \omega_B h^{A B} \quad&\quad \Gamma^{A}_{\rho B} = \frac{1}{2} \partial_\rho h_{B C} h^{A C} \quad\quad \Gamma^{A}_{B C} = \Gamma^{A}_{B C}[h]
\end{split}
\ee
where $\Gamma^{A}_{B C}[h]$ is the Christoffel symbol built out of the induced metric $h_{A B}$.

Now, for notational convenience, let
\be
    H_{\alpha \beta} = c_1(\phi)F_{\alpha\beta} - 4 c_2(\phi) F^{\gamma \delta} \epsilon_{\alpha \beta \gamma \delta}
\ee
Note this is antisymmetric. Then we can evaluate
\be
\begin{split}
     E^{(0)}_\tau[\Phi_J]\Big|_{\rho=0} =& g^{\alpha \beta} \nabla_{\alpha} H_{\tau \beta}\\
     =& \nabla_{\tau} H_{\tau \rho} + h^{A B} \nabla_{A} H_{\tau B}\\
     =& - \Gamma^{\mu}_{\tau \tau} H_{\mu \rho} - \Gamma^{\mu}_{\tau \rho} H_{\tau \mu} + h^{A B} \left( \partial_{A} H_{\tau B} - \Gamma^{\mu}_{A \tau} H_{\mu B}- \Gamma^{\mu}_{A B} H_{\tau \mu} \right)
\end{split}
\ee
where in the last line we used the fact that everything is independent of $\tau$. Substituting in the Christoffel symbols computed above, we get some cancellations with the result being
\be
\begin{split}
     E^{(0)}_\tau[\Phi_J]\Big|_{\rho=0} =& h^{A B} \left( \partial_{A} H_{\tau B} - \Gamma^{C}_{A B} H_{\tau C} \right)\\
     =& h^{A B} \DD_{A} H_{\tau B}\\
\end{split}
\ee
where $\DD_A$ is the covariant derivative with respect to $h_{A B}$ and only acts on $A, B, ...$ indices. Finally,
\be
\begin{split}
    H_{\tau A}\Big|_{\rho=0} = & c_1(\phi)F_{\tau A} - 4 c_2(\phi) F^{\gamma \delta} \epsilon_{\tau A \gamma \delta}\\
    = & c_1(\phi)F_{\tau A} - 8 c_2(\phi) \epsilon_{A}^{\,\,\,\,B}F_{\tau B}
\end{split}
\ee
where we used our convention $\epsilon_{A B} = \epsilon_{\rho \tau A B}$. Therefore
\be
    E^{(0)}_\tau[\Phi_J]\Big|_{\rho=0} = h^{A B} \DD_{A}\Big[c_1(\phi)F_{\tau B} - 8 c_2(\phi) \epsilon_{B}^{\,\,\,\,C}F_{\tau C}\Big]
\ee 

\subsection{0th Law for a Charged Scalar Field in the Case $\phi^{(0)}\big|_{\rho=0} \equiv 0$} \label{0thLawChargedProof}

We now deal with the case excluded in Section \ref{ChargedScalar}, in which $\phi^{(0)}$ vanishes identically on the horizon. Moreover, we assume $\phi$ vanishes on the horizon up to and including order $l^{m}$ for some $m\geq0$, i.e. $\phi^{[m]}\big|_{\rho=0}\equiv 0$.

The proof in Section \ref{ChargedScalar} breaks down for the following reasons: in the base case of our induction we can no longer extract $A^{(0)}_{\tau}\big|_{\rho=0} = 0$ from the equation $A^{(0)}_{\tau}\phi^{(0)}\big|_{\rho=0} = 0$, and similarly in the inductive step, we can no longer prove $A_{\tau}^{(k)}\big|_{\rho=0} = 0$. These were essential steps for the induction to proceed because they were used to prove the positive boost weight quantities $\partial_{A_1}...\partial_{A_n}\partial^q_v A_v$ vanish on the horizon at each order (see equation (\ref{dvAv})). Without this fact, we can't ignore the higher derivative parts of the equations of motion $E^{[k]}_{\tau A}[\Phi^{[k-1]}_J]$ and $E^{[k]}_{\tau}[\Phi^{[k-1]}_J]$ at each order in $l$. 

The solution comes from the fact that $E_{\mu \nu}$ and $E_{\mu}$ are electromagnetic gauge invariant. This means any appearance of $A_{\mu}$ is either inside a $F_{\mu \nu}$, or arises from a term of schematic form $(\fD^p \phi)^* \fD^q \phi$, in which case it will appear in the combination $\partial^a \phi^* \partial^b A_\mu \partial^c \phi$. In the former case, we can use the methods from Section \ref{PBWQH} to show it vanishes at each order in the induction. In the latter case, we will show that the vanishing of $\phi^{[m]}$ on the horizon implies positive boost weight quantities involving $\partial^a \phi^* \partial^b A_\mu \partial^c \phi$ also vanish to sufficiently high order for the induction to proceed.

From $E^{(0)}_{\tau \tau}[\Phi^{(0)}_J]|_{\rho=0}=0$ and $E^{(0)}_{\tau A}[\Phi^{(0)}_J]|_{\rho=0}=0$ we can still deduce $F^{(0)}_{\tau A}|_{\rho=0} = 0$ and $\partial_{A} X^{(0)}|_{\rho=0}=0$ respectively, and so we still have the Generalized 0th Law holding at order $l^0$. This means $\kappa^{(0)}$ is constant. We split the analysis into two cases: 1) $\kappa^{(0)}\neq 0$, and 2) $\kappa^{(0)}=0$.

\subsubsection{Case 1: $\kappa^{(0)}\neq 0$}

To proceed, we prove a lemma:

\begin{lemma} \label{philemma}
    If $\phi$ vanishes on the horizon up to and including order $l^{m}$, and $\kappa^{(0)}\neq 0$, then all derivatives of $\phi$ vanish on the horizon up to and including order $l^m$. 
\end{lemma}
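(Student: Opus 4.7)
The plan is to induct on $n \le m$ (with a sub-induction on the number of $\rho$-derivatives $k$) and show that $\partial_\rho^k \phi^{(n)}\big|_{\rho=0}=0$ for every $k\ge 0$. Because we work in the gauge of Section \ref{ChargedScalar}, where $\partial_\tau \phi = 0$, this together with the tangential condition $\phi^{(n)}\big|_{\rho=0}\equiv 0$ (which kills all $\partial_A^j$ derivatives on the horizon) implies that every derivative $\partial_\tau^a\partial_A^b \partial_\rho^k \phi^{(n)}\big|_{\rho=0}$ vanishes. So the real work is to establish the pure $\rho$-derivatives on the horizon.

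The computational engine is the scalar equation of motion. In Killing vector GNCs the inverse metric satisfies $g^{\rho\rho}|_{\rho=0}=0$ but $\partial_\rho g^{\rho\rho}|_{\rho=0}=X|_{\rho=0}=2\kappa$, while $A^\rho|_{\rho=0}=A_\tau|_{\rho=0}$. A short calculation shows that at leading order in $l$, after using $\phi^{(0)}|_{\rho=0}=0$ and its vanishing tangential derivatives to kill every other term,
\be
E^{(0)(0)}\big|_{\rho=0} = 2\bigl(\kappa^{(0)} - i\lambda A_\tau^{(0)}\bigr)\,\partial_\rho\phi^{(0)}\big|_{\rho=0} = 0.
\ee
Since $\kappa^{(0)}\neq 0$ and $A_\tau^{(0)}$ is real, this coefficient has non-zero real part and hence $\partial_\rho\phi^{(0)}|_{\rho=0}=0$. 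Applying $\partial_\rho^{k-1}$ to the equation and evaluating on the horizon, the coefficient of $\partial_\rho^k \phi^{(0)}|_{\rho=0}$ remains a non-zero (complex) multiple of $\kappa^{(0)} - i\lambda A_\tau^{(0)}$, and all remaining terms involve only lower $\rho$-derivatives, which vanish by the sub-induction on $k$. This settles the base case $n=0$.

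For the inductive step, suppose all derivatives of $\phi^{(j)}$ vanish on the horizon for every $j<n$. Decompose the order-$l^n$ part of $E=0$ into (i) $E^{(0)}$ expanded to order $l^n$ on the background $\Phi^{(0)}+\ldots+l^n\Phi^{(n)}$, and (ii) the higher-derivative pieces $l^k E^{(k)}[\Phi^{[n-k]}]$ for $k\ge 1$. The key structural observation is that electromagnetic gauge invariance forces $\phi$ and $\phi^*$ to appear in equal numbers in every term of the Lagrangian, so every term of $E = (1/\sqrt{-g})\,\delta S/\delta\phi^*$ carries at least one factor of $\phi$, $\phi^*$, $\fD\phi$, or $\fD\phi^*$. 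Any such factor evaluated at an order $l^j$ with $j<n$ vanishes on the horizon by the inductive hypothesis, so every contribution that is not proportional to $\phi^{(n)}$ or its derivatives automatically vanishes. What remains is the linearization of $E^{(0)}$ about $\Phi^{(0)}$ acting on $\phi^{(n)}$, whose horizon expansion is identical in form to the base case and yields
\be
2\bigl(\kappa^{(0)} - i\lambda A_\tau^{(0)}\bigr)\,\partial_\rho\phi^{(n)}\big|_{\rho=0} = 0.
\ee
Hence $\partial_\rho\phi^{(n)}|_{\rho=0}=0$, and the sub-induction on $k$ closes exactly as before.

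The main obstacle is the bookkeeping in step (ii): one must systematically verify that every mixed term appearing in a higher-derivative Lagrangian contributes to $E$ only through factors that, after the field expansion, involve some $\phi^{(j)}$-piece with $j<n$. The gauge-invariance observation above makes this transparent term by term, and combined with the non-vanishing of $\kappa^{(0)} - i\lambda A_\tau^{(0)}$ that drives the recursion, closes the nested induction and establishes the lemma.
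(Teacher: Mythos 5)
Your proposal is correct and follows essentially the same route as the paper: evaluate the scalar equation of motion and its $\partial_\rho^{s}$ derivatives in Killing vector GNCs on the horizon, extract $\partial_\rho^{s+1}\phi^{(n)}\big|_{\rho=0}=0$ from the non-vanishing of the coefficient, and use electromagnetic gauge invariance to argue that every term of $E$ carries a factor of $\phi$ or its derivatives, so the higher-derivative pieces drop out by the induction in $l$. The only quibble is that the coefficient at $s$ $\rho$-derivatives is $(s+1)X^{(0)}-2i\lambda A_\tau^{(0)}$ rather than a multiple of $\kappa^{(0)}-i\lambda A_\tau^{(0)}$, but your reason for its non-vanishing (non-zero real part since $\kappa^{(0)}\neq 0$ and $A_\tau^{(0)}$ is real) applies equally to the correct coefficient.
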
 

\begin{proof}
    Clearly all tangential derivatives $\partial_{\tau}^{p} \partial^q_{A_1 ... A_q} \phi$ vanish on the horizon up to and including order $l^m$. To investigate the remaining $\rho$ derivatives, we will inspect the scalar field equation of motion, $E[\Phi_J] = 0$. 
    At order $l^0$, this is $E^{(0)}[\Phi^{(0)}_J]=0$. Equation (\ref{0th order EoM 2 charged}) has the explicit form for $E^{(0)}[\Phi_J]$. We can evaluate it in Killing vector GNCs in our choice of electromagnetic gauge and find
    \be \label{E0Phi}
        E^{(0)}[\Phi_J] = (X-2 i \lambda A_{\tau} ) \partial_{\rho}\phi + \rho X \partial_{\rho}^2 \phi + \rho^2 h^{A B} \omega_{A} \omega_{B} \partial_{\rho}^2 \phi + ...
    \ee
    where the ellipsis denotes terms that are proportional to $\phi$ or its spatial derivatives $\partial^q_{A_1 ... A_q} \phi$. Therefore, plugging this and $\phi^{(0)}\big|_{\rho=0} \equiv 0$ into $E^{(0)}[\Phi^{(0)}_J]\big|_{\rho=0}=0$ gives
    \be \label{drhophi}
        (X^{(0)}-2 i \lambda A^{(0)}_{\tau} ) \partial_{\rho}\phi^{(0)}\big|_{\rho=0} = 0
    \ee
    We have that $X^{(0)}|_{\rho = 0} = 2 \kappa^{(0)}$, which is a constant we have assumed is non-zero. Therefore, $\partial_{\rho}\phi^{(0)}\big|_{\rho=0}\equiv0$. Inductively assuming $\partial^k_{\rho} \phi^{(0)}\big|_{\rho=0}\equiv 0$ for all $k\leq s$ for some $s\geq 1$, we substitute (\ref{E0Phi}) into $\partial^{s}_{\rho} E^{(0)}[\Phi^{(0)}_J]\big|_{\rho=0}=0$ to get
    \be \label{drhos1phi}
        \left[(s+1) X^{(0)}-2 i \lambda A^{(0)}_{\tau} \right] \partial^{s+1}_{\rho}\phi^{(0)}\big|_{\rho=0} = 0
    \ee
    and so $\partial^{s+1}_{\rho} \phi^{(0)}\big|_{\rho=0}\equiv 0$. Hence all derivatives of $\phi^{(0)}$ vanish on $\NN$, which proves the case $m=0$. This means any appearance of $\phi$ is at least order $l$ on the horizon. 
    
    Now assume $m\geq 1$. Inductively, let us assume all derivatives of $\phi$ vanish on the horizon up to and including order $l^n$ for some $n$, $0\leq n < m$. Therefore any appearance of $\phi$ or its derivatives in $E_{I}[\Phi_J]\big|_{\rho=0}$ is at least order $l^{n+1}$.

    The Lagrangian $\LL$ is electromagnetic gauge invariant, therefore wherever a $\phi$ or its derivatives appears, it must be multiplied by $\phi^{*}$ or its derivatives, and vice versa. This means that every term in $E[\Phi_J]$, which is the equation of motion arising from varying $\phi^*$, must be at least linear in $\phi$ or its derivatives. 

    Therefore, $E[\Phi_J]\big|_{\rho=0}$ is already $O(l^{n+1})$, and its order $l^{n+1}$ part is $E^{(0)}[g^{(0)}_{\mu \nu}, A^{(0)}_{\mu}, l^{n+1}\phi^{(n+1)}]\big|_{\rho=0}$. This is 0 by the equations of motion, and using (\ref{E0Phi}) and $\phi^{(n+1)}\big|_{\rho=0} \equiv 0$ similarly to above, we get
    \be \label{drhophin1}
        (X^{(0)}-2 i \lambda A^{(0)}_{\tau} ) \partial_{\rho}\phi^{(n+1)}\big|_{\rho=0} = 0
    \ee
    This implies $\partial_{\rho}\phi^{(n+1)}\big|_{\rho=0}\equiv0$, and similarly we can successively look at $\partial^{s}_{\rho} E^{(0)}[g^{(0)}_{\mu \nu}, A^{(0)}_{\mu}, l^{n+1}\phi^{(n+1)}]\big|_{\rho=0}=0$ to deduce $\partial^{s+1}_{\rho} \phi^{(n+1)}\big|_{\rho=0}\equiv 0$. Therefore all derivatives of $\phi$ vanish on $\NN$ up to and including $l^{n+1}$ and so the induction proceeds.
\end{proof}

This lemma implies that if $\phi$ vanishes to \textit{all} orders on the horizon, then all its derivatives vanish on the horizon. Therefore a term of the form $\partial^a \phi^* \partial^b A_\mu \partial^c \phi$ would identically vanish on the horizon, and so $A_\mu$ could only appear inside an $F_{\mu\nu}$. But in this case there would be no new positive boost weight quantities to deal with over the real scalar field case, and the equations of motion would look identical. Hence, the Generalized 0th Law would follow trivially from the real scalar field proof above.

Therefore let us assume that $\phi^{(m+1)}$ is the lowest order at which $\phi$ does not identically vanish on the horizon, i.e. that $\phi^{[m]}|_{\rho=0}\equiv 0$ and $\phi^{(m+1)}(x^A)|_{\rho=0}$ is non-zero for some $x^A$. Then by the above lemma, all derivatives of $\phi$ vanish on the horizon up to and including order $l^{m}$, and so any appearance of $\phi$ is $O(l^{m+1})$ on the horizon. But this means the problematic terms of the form $\partial^a \phi^* \partial^b A_\mu \partial^c \phi$ are already at least order $l^{2m+2}$, and so won't appear in our induction until that order!

To make this precise, we take our inductive hypothesis to be $\partial_A X^{[k-1]}\big|_{\rho=0} = 0$, $F^{[k-1]}_{\tau A}\big|_{\rho=0} = 0$ and $A^{[k-2m-3]}_{\tau}\big|_{\rho=0}=0$. This is true for the base case $k=1$ because we proved above that $\partial_A X^{(0)}\big|_{\rho=0} = 0$ and $F^{(0)}_{\tau A}\big|_{\rho=0} = 0$, and trivially $A^{(n)}_{\tau}\big|_{\rho=0}=0$ for $n<0$ by analyticity in $l$.

Assuming the hypothesis holds, we would like to study $E_{\tau A}|_{\rho=0}$ and $E_{\tau}|_{\rho=0}$ at order $l^{k}$. By gauge invariance, any appearance of $A_{\mu}$ not inside an $F_{\mu\nu}$ will come multiplied by $\partial^a \phi^* \partial^b \phi$ and so can only involve $A^{[k-2m-2]}_{\mu}$. Therefore, separating out the dependence on $A_\mu$ and $F_{\mu \nu}$, we have 
\be
    \text{At order} \,\, l^{k}, \quad E_{I}[\Phi_J]\big|_{\rho=0} = E^{[k]}_{I}[g_{\mu\nu}^{[k]}, F_{\mu \nu}^{[k]}, \phi^{[k]}, A^{[k-2m-2]}_\mu]|_{\rho=0}
\ee
for $I=(\tau A)$ or $I=\tau$. Additionally, the highest order pieces $g_{\mu\nu}^{(k)}, F_{\mu \nu}^{(k)}, \phi^{(k)}, A^{(k-2m-2)}_\mu$ can only appear in $E^{(0)}_I$ because they will already come with $l^k$:
\begin{multline}
    \text{At order} \,\, l^{k}, \quad E_{I}[\Phi_J]\big|_{\rho=0} = E^{(0)}_{I}[g_{\mu\nu}^{[k]}, F_{\mu \nu}^{[k]}, \phi^{[k]}, A^{[k-2m-2]}_\mu]|_{\rho=0} \\
    +\sum_{s=1}^k l^s E^{(s)}_{I}[g_{\mu\nu}^{[k-1]}, F_{\mu \nu}^{[k-1]}, \phi^{[k-1]}, A^{[k-2m-3]}_\mu]|_{\rho=0}
\end{multline}

The first two inductive hypotheses $\partial_A X^{[k-1]}\big|_{\rho=0} = 0$ and $F^{[k-1]}_{\tau A}\big|_{\rho=0} = 0$ imply positive boost weight quantities involving $g^{[k-1]}_{\mu \nu}, F^{[k-1]}_{\mu \nu}$ and $\phi^{[k-1]}$ vanish on the horizon by Sections \ref{PBWQH} and \ref{CompleteInduct}. Furthermore, as discussed around equations (\ref{AKilling}-\ref{dvAv}), combining them with the third hypothesis $A^{[k-2m-3]}_{\tau}\big|_{\rho=0}=0$ will imply all positive boost weight quantities involving $A^{[k-2m-3]}_\mu$ vanish on the horizon.

Therefore, for the components $I=(\tau A)$ and $I=\tau$ we see that the higher derivative parts still vanish on the horizon because they are proportional to positive boost weight components when we make the co-ordinate transformation $\rho = r(\kappa^{[k-1]}v+1)$, $\tau = \frac{1}{\kappa^{[k-1]}}\log (\kappa^{[k-1]}v+1)$, and only involve the fields $g_{\mu\nu}^{[k-1]}, F_{\mu \nu}^{[k-1]}, \phi^{[k-1]}, A^{[k-2m-3]}_\mu$. Thus we need only look at $E^{(0)}_{I}\big|_{\rho=0}$ for these components.

First up, $E_{\tau}[\Phi_J]\big|_{\rho=0}$, is
\begin{multline} \label{Etauorderk}
    \text{At order} \,\, l^{k}, \quad E_{\tau}[\Phi_J]\Big|_{\rho=0} = 2\lambda^2 l^k |\phi^{(m+1)}|^2 A^{(k-2m-2)}_\tau\\
    + l^k h^{(0) A B} \DD^{(0)}_A \left[ c_1(0) F^{(k)}_{\tau B} - 8 c_2(0) \epsilon_{\,\,B}^{(0)\,C} F^{(k)}_{\tau C} \right] = 0
\end{multline}
Integrate this against $\sqrt{h^{(0)}}$ over $C(\tau)$ to get
\be
     2\lambda^2 l^k \int_{C(\tau)} \text{d}^{d-2}x \sqrt{h^{(0)}} |\phi^{(m+1)}|^2 A^{(k-2m-2)}_\tau =0 
\ee
where the integral over the second term vanished because it was a total derivative. We have $\partial_{A} A^{(k-2m-2)}_\tau\big|_{\rho=0} = -F^{(k-2m-2)}_{\tau A}\big|_{\rho=0} = 0$ by our inductive hypothesis, so $A^{(k-2m-2)}_\tau\big|_{\rho=0}$ is a constant. Hence we can take it out of the integral to get
\be
     A^{(k-2m-2)}_\tau\big|_{\rho=0} \int_{C(\tau)} \text{d}^{d-2}x \sqrt{h^{(0)}} |\phi^{(m+1)}|^2 =0 
\ee
However $\phi^{(m+1)}$ is the lowest order piece of $\phi$ that does not identically vanish on the horizon, and so the integral is non-zero. Therefore $A^{(k-2m-2)}_\tau\big|_{\rho=0}=0$. 

Plugging this back into (\ref{Etauorderk}), let us now integrate it against $\sqrt{h^{(0)}} A^{(k)}_\tau$ over $C(\tau)$. In a similar fashion to the $\phi^{(0)}|_{\rho=0}\not\equiv 0$ case, we get 
\be
\int_{C(\tau)} \text{d}^{d-2}x \sqrt{h^{(0)}} c_1(0) h^{(0) A B} \left(\partial_A A^{(k)}_\tau \right) \left( \partial_B A^{(k)}_\tau \right) =0
\ee
and thus $F^{(k)}_{\tau A}\big|_{\rho=0} = 0$. 

Finally, we look at $E_{\tau A}[\Phi_J]\big|_{\rho=0}$ at order $l^k$. This is
\begin{multline}
    \text{At order} \,\, l^k, \quad E_{\tau A}[\Phi_J]\Big|_{\rho=0} = -\frac{1}{2} l^k \partial_{A} X^{(k)}
    - \frac{1}{2} i l^k \lambda A^{(k-2m-2)}_\tau \Big(\phi^{(m+1)*} \partial_A \phi^{(m+1)} -
    \phi^{(m+1)} \partial_A \phi^{(m+1)*}\\
    - 2i\lambda A^{(0)}_A |\phi^{(m+1)}|^2\Big) -
    \frac{1}{2} c_1(0)\left( F^{(0)}_{A B} h^{(0) B C} - F^{(0)}_{\tau \rho} \delta_{A}^{C} \right) l^k F^{(k)}_{\tau C} = 0
\end{multline}
Substituting in $A^{(k-2m-2)}_\tau\big|_{\rho=0}=0$ and $F^{(k)}_{\tau A}\big|_{\rho=0} = 0$ we get $\partial_{A} X^{(k)}\big|_{\rho=0}=0$, and thus the induction proceeds and we have proved the Generalized 0th Law.

\subsubsection{Case 2: $\kappa^{(0)}=0$}

Finally we deal with the case $\kappa^{(0)}=0$. Moreover, we assume $\kappa^{[n]}=0$ for some $n\geq 0$. The proof now gets rather technical and mostly involves chasing powers of $l$. The physical relevance of this proof is questionable as we are heavily relying on analyticity in $l$, however we include it for completeness.

In this case we cannot apply Lemma \ref{philemma}. Our aim is still to prove $A_\tau|_{\rho=0} = 0$. Suppose we have an obstruction to this, in that $A_\tau^{[N]}|_{\rho=0} = 0$ but $A_\tau^{(N+1)}|_{\rho=0} \neq 0$ for some $N$. We will try to find a contradiction. In the previous cases we used the $2\lambda^2 A_\tau |\phi|^2 $ term in $E^{(0)}_{\tau}|_{\rho=0}$ to prove $A_\tau$ vanished at each order. However it is order $l^{2m+N+3}$, therefore we first need to be able to run the induction all the way up to that order before we can hope to conclude $A_\tau^{(N+1)}|_{\rho=0} = 0$. If $N<n$ it turns out we can prove a lemma that allows us to do this. 

Before stating and proving the lemma, it is worth emphasising some logic we will use repeatedly below. Suppose we have proved $\partial_A X^{[s]}|_{\rho=0}=0$ and $F_{\tau A}^{[s]}|_{\rho=0}=0$ for some $s\geq 0$. Then we can change to affinely parameterized co-ordinates via $\rho = r(\kappa^{[s]}v+1)$, $\tau = \frac{1}{\kappa^{[s]}}\log (\kappa^{[s]}v+1)$ in which all positive boost weight quantities involving $g^{[s]}_{\mu \nu}$, $F^{[s]}_{\mu\nu}$, $\phi^{[s]}$ and $A^{[s]}_\mu$ vanish on the horizon except from $\partial_{A_1}{... \partial_{A_n}{ \partial_{v}^{q} A^{[s]}_v } }$. But earlier we calculated
\be \label{orderdvAv}
    \partial^q_v A^{[s]}_v\big|_{r=0} = \frac{(-\kappa^{[s]})^q}{(\kappa^{[s]} v+1)^{q+1}} A^{[s]}_{\tau}\big|_{\rho=0}
\ee
Therefore if $\kappa^{[n]}=0$ and $A_\tau^{[N]}|_{\rho=0} = 0$ then $\partial^q_v A^{[s]}_v\big|_{r=0}$ is at least order $l^{q(n+1)+N+1}$ Taking $\partial_A$ derivatives does not change the order on the horizon, and so we won't explicitly mention them in the analysis going forward. Furthermore, we can calculate that the non-positive boost weight quantities made from $\phi^{[s]}$, namely $\partial^a_v \partial_r^b \phi^{[s]}$ with $b\geq a$ have the following form on the horizon:
\be \label{orderdrphi}
    \partial^a_v \partial_r^b \phi^{[s]}\big|_{r=0} = \frac{b!}{(b-a)!} \left(\kappa^{[q]}\right)^a \left(\kappa^{[q]}v+1\right)^{b-a} \partial_{\rho}^b \phi^{[s]}\big|_{\rho=0}
\ee
Therefore if we also happen to know that $\partial_{\rho}^b \phi^{[N_b]}\big|_{\rho=0}=0$ then $\partial^a_v \partial_r^b \phi^{[s]}\big|_{r=0}$ is at least order $l^{a(n+1)+N_b+1}$.

Onto the lemma: 

\begin{lemma} \label{Lemma2}
    If $\phi^{[m]}|_{\rho=0}=0$, $\kappa^{[n]}=0$, $A_\tau^{[N]}|_{\rho=0} = 0$ and $A_\tau^{(N+1)}|_{\rho=0} \neq 0$ with $N<n$ then 
    \be \label{kappa0lemma}
        \partial_A X^{[2m+N+2]}\big|_{\rho=0} = 0, \quad F^{[2m+N+2]}_{\tau A}\big|_{\rho=0} = 0,\,\,\text{ and }\,\,\forall p\geq 1\,\,  \partial_\rho^p \phi^{[m-p(N+1)]}\big|_{\rho=0} = 0 
    \ee
\end{lemma}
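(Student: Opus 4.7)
The plan is to establish all three conclusions simultaneously through an outer induction on $k$, running from $k = 0$ up to $k = 2m+N+2$. At step $k$ I will prove $\partial_A X^{(k)}|_{\rho=0} = 0$, $F^{(k)}_{\tau A}|_{\rho=0} = 0$, and additionally $\partial_\rho^p \phi^{(k)}|_{\rho=0} = 0$ for every $p \geq 1$ such that $k \leq m - p(N+1)$. The base case $k = 0$ reduces to the leading-order content of Section \ref{ChargedScalar}: because $\phi^{(0)}|_{\rho=0} \equiv 0$ kills the $|\phi|^2 A_\tau$ source in $E^{(0)}_{\tau\tau}$ and $E^{(0)}_{\tau A}$, one directly reads off $F^{(0)}_{\tau A}|_{\rho=0} = 0$ and $\partial_A X^{(0)}|_{\rho=0} = 0$, and the $\partial_\rho^p \phi^{(0)}$ claims are either vacuous or trivially satisfied since the coefficient in the Lemma \ref{philemma} iteration, $X^{(0)} - 2i\lambda A^{(0)}_\tau$, vanishes at order $l^0$.

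At the inductive step, the prior data $\partial_A X^{[k-1]}|_{\rho=0} = 0$ and $F^{[k-1]}_{\tau A}|_{\rho=0} = 0$ license the coordinate transformation $\rho = r(\kappa^{[k-1]} v + 1)$, $\tau = (\kappa^{[k-1]})^{-1} \log(\kappa^{[k-1]} v + 1)$ (reducing to the identity when $\kappa^{[k-1]} = 0$), bringing the fields $\Phi^{[k-1]}_I$ into affine GNCs. By Section \ref{PBWQH} and the discussion following (\ref{dvAv}), every positive-boost-weight quantity built from $g^{[k-1]}_{\mu\nu}$, $F^{[k-1]}_{\mu\nu}$, and the gauge-invariant $(\partial - i\lambda A)^s \phi^{[k-1]}$ vanishes strictly on $\NN$. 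The only positive-boost-weight building blocks that do not are the gauge-dependent $\partial_v^q A_v^{[k-1]}|_{r=0}$ and the non-gauged $\partial_v^a \partial_r^b \phi^{[k-1]}|_{r=0}$. Using $\kappa^{[n]} = 0$ and $A_\tau^{[N]}|_{\rho=0} = 0$, (\ref{orderdvAv}) bounds the former by $O(l^{q(n+1) + N + 1})$; (\ref{orderdrphi}) with the inductive hypothesis bounds the latter by $O(l^{a(n+1) + m - b(N+1) + 1})$ for $b \geq a$ (they vanish identically for $a > b$). Each positive-boost-weight factor thus contributes at least $l^{N+1}$, and since every term in the boost-weight-$+1$ scalars $E^{[k]}_{v A}$ and $E^{[k]}_v$ (the affine-GNC counterparts of $E^{[k]}_{\tau A}$ and $E^{[k]}_\tau$) must contain at least one such factor, the $l^s$ prefactor from higher derivative terms forces $E^{[k]}_{\tau A}[\Phi^{[k-1]}_J]|_{\rho=0}$ and $E^{[k]}_\tau[\Phi^{[k-1]}_J]|_{\rho=0}$ to be $O(l^{k+1})$ throughout the range $k \leq 2m+N+2$. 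So at order $l^k$ only the linearization of $E^{(0)}_{\tau A}$ and $E^{(0)}_\tau$ in $\Phi^{(k)}_J$ contributes, and the integration-by-parts argument on $C(\tau)$ of Section \ref{ChargedScalar} — with $\phi^{(0)}|_{\rho=0} \equiv 0$ eliminating the would-be $|\phi|^2 A_\tau^{(k)}$ source — delivers $F^{(k)}_{\tau A}|_{\rho=0} = 0$ and hence $\partial_A X^{(k)}|_{\rho=0} = 0$.

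To propagate the third conclusion, I apply $\partial_\rho^{p-1}$ to the expansion (\ref{E0Phi}) of $E^{(0)}[\Phi_J]$ at $\rho = 0$ and add the higher derivative contributions. Gauge invariance of the action forces every term in $E^{(s)}[\Phi_J]$ (for $s \geq 1$) to contain at least one factor of $\phi$, $\phi^*$, or a gauged derivative thereof, so by the inductive hypothesis and the affine-GNC estimates above the contributions $l^s \partial_\rho^{p-1} E^{(s)}[\Phi^{[k-1]}_J]|_{\rho=0}$ vanish to sufficient order. The leading piece of $\partial_\rho^{p-1} E^{(0)}[\Phi_J]|_{\rho=0}$ takes the schematic form $[pX - 2i\lambda A_\tau]|_{\rho=0} \partial_\rho^p \phi|_{\rho=0}$ plus terms in $\phi$ and $\partial_\rho^{p'} \phi$ with $p' < p$, which vanish to sufficient order by the inductive hypothesis. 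The leading coefficient has lowest non-vanishing piece $-2i\lambda A^{(N+1)}_\tau$ of order $l^{N+1}$, so solving for $\partial_\rho^p \phi^{(k)}|_{\rho=0}$ yields the claimed vanishing whenever $k \leq m - p(N+1)$, closing the induction.

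The main obstacle is the combinatorial bookkeeping of $l$-orders: one must verify for each $p$ that the cross-contributions of the different inductive pieces do not conspire to lower the effective order below what the cut-off $k = 2m+N+2$ tolerates. This calibration is exactly what makes the cut-off tight enough to enable the subsequent contradiction, in the main body of the appendix, between $A^{(N+1)}_\tau|_{\rho=0} \neq 0$ and the $2\lambda^2 |\phi|^2 A_\tau$ source in $E_\tau$ at order $l^{2m+N+3}$.
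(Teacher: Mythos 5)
Your proposal follows the paper's overall strategy (coordinate change adapted to $\kappa^{[k]}$, gauge invariance forcing bare $A_\mu$'s into $\phi^*\cdots\phi$ combinations, the order estimates (\ref{orderdvAv}) and (\ref{orderdrphi}), extraction of $\partial_\rho^p\phi$ from the scalar equation of motion via the non-vanishing coefficient $A_\tau^{(N+1)}$), but the way you index the induction leaves a genuine gap in the third conclusion. You run a single loop in which, at step $k$, all three statements are established at order $l^k$. The problem is that extracting $\partial_\rho\phi^{(k)}\big|_{\rho=0}=0$ requires examining $E[\Phi_J]\big|_{\rho=0}$ at order $l^{k+N+1}$, because the coefficient multiplying $\partial_\rho\phi$ in (\ref{E0Phi}) is $X-2i\lambda A_\tau = O(l^{N+1})$ on the horizon. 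At that order the higher-derivative pieces $l^s E^{(s)}$ are evaluated on $\Phi^{[k+N+1-s]}\subseteq\Phi^{[k+N]}$, and the positive-boost-weight vanishing argument only applies to the truncation $\Phi^{[j]}$ for those $j$ with $\partial_A X^{[j]}\big|_{\rho=0}=0$ and $F^{[j]}_{\tau A}\big|_{\rho=0}=0$ — which at step $k$ of your loop is only $j\le k$. The residual contributions (e.g.\ a term like $l\,(\nabla^\beta F_{r\beta})\,\fD_v\phi$, whose $\fD_v\phi$ factor is only known to be $O(l^{k+1})$ once the fields beyond order $k$ are included) are $O(l^{k+2})$ and therefore pollute the equation at order $l^{k+N+1}$ whenever $N\ge 1$. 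Your sentence asserting that ``$l^s\partial_\rho^{p-1}E^{(s)}[\Phi^{[k-1]}_J]\big|_{\rho=0}$ vanish to sufficient order'' is precisely the step that fails: the relevant argument of $E^{(s)}$ is not $\Phi^{[k-1]}_J$ but $\Phi^{[k+N+1-s]}_J$.

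You cannot escape this by first proving the $X,F$ statements to all orders up to $2m+N+2$ and only afterwards the $\phi$-derivative statements, because the bound on the problematic terms (\ref{phiAphi}) in $E_{\tau A}$ and $E_\tau$ at order $l^j$ itself relies on the inductive control $\partial_\rho^b\phi^{[\,\cdot\,-b(N+1)]}\big|_{\rho=0}=0$; the two families of statements genuinely feed each other. The resolution in the paper is a staggered double induction: the hypothesis at step $k$ carries $\partial_A X^{[2k+N+1]}\big|_{\rho=0}=0$ and $F^{[2k+N+1]}_{\tau A}\big|_{\rho=0}=0$ together with $\partial_\rho^p\phi^{[k-p(N+1)]}\big|_{\rho=0}=0$, so that when the scalar equation is probed at order $l^{k+1}$ (enough to reach $\partial_\rho\phi^{(k-N)}$, and then higher $\rho$-derivatives at the shifted orders $l^{k+1-s(N+1)}$), the needed control of $\Phi^{[k]}$ is already guaranteed by the much stronger metric/Maxwell hypothesis, while the factor of two in $2k+N+1$ is exactly what the two $\phi$-factors in (\ref{phiAphi}) supply. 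This calibration is not ``combinatorial bookkeeping'' that can be deferred — with your uniform indexing the loop does not close for $N\ge 1$. (You also do not address the case $N=-1$, i.e.\ $A_\tau^{(0)}\big|_{\rho=0}\neq 0$, which the paper treats separately at the start of the proof, though that is a minor omission by comparison.)
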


\begin{proof}
    If $A_\tau^{(0)}|_{\rho=0} \neq 0$, i.e. $N=-1$, then the proof follows in the same way as Lemma \ref{philemma}. This is because equation (\ref{drhophi}) becomes $-2i\lambda A^{(0)}_\tau \partial_{\rho}\phi^{(0)}|_{\rho=0}=0$, from which we can still conclude $\partial_{\rho}\phi^{(0)}|_{\rho=0}=0$. Similarly in (\ref{drhos1phi}) and (\ref{drhophin1}) we still find the $\rho$ derivatives of $\phi$ vanish, and so can conclude $\partial_\rho^p \phi^{[m]}\big|_{\rho=0} = 0$ $\forall p$ as (\ref{kappa0lemma}) requires. We then trivially get $\partial_A X^{[2m+2]}\big|_{\rho=0} = 0$ and $F^{[2m+2]}_{\tau A}\big|_{\rho=0} = 0$ from the induction detailed in the rest of Case 1 (and we get our contradiction that $A_\tau^{(0)}|_{\rho=0} = 0$ if there is some $m$ such that $\phi^{(m+1)}\not\equiv 0$).

    Therefore, let $N\geq 0$. We now proceed by induction on $0\leq k <m$ with hypothesis
    \be \label{inducthypo}
        \partial_A X^{[2k+N+1]}\big|_{\rho=0} = 0, \quad F^{[2k+N+1]}_{\tau A}\big|_{\rho=0} = 0,\,\,\text{ and }\,\,\forall p\geq 1\,\,  \partial_\rho^p \phi^{[k-p(N+1)]}\big|_{\rho=0} = 0 
    \ee
    For the base case $k=0$, we note that $A_\tau^{[N]}|_{\rho=0} = 0$ and $\partial_A X^{[N]}\big|_{\rho=0} = 0$ because $\kappa^{[n]}=0$ and $N<n$. This implies all positive boost weight quantities made from $\Phi^{[N]}$ vanish on the horizon. Therefore at order $l^{N+1}$, the higher derivative pieces in $E_{\tau}|_{\rho=0}$ and $E_{\tau A}|_{\rho=0}$ vanish, and using the same method as the real scalar field case we can prove $\partial_A X^{[N+1]}\big|_{\rho=0} = 0$ and $F^{[N+1]}_{\tau A}\big|_{\rho=0} = 0$. Furthermore $\partial_\rho^p \phi^{[-p(N+1)]}\big|_{\rho=0} = 0$ $\forall p\geq 1$ trivially.

    So let us assume the hypothesis holds for $k$. For $I=(\tau A)$ and $I=\tau$ we study
    \begin{equation}\label{order2kN2}
        \text{At order} \,\, l^{2k+N+2}, \quad E_{I}[\Phi_J]\big|_{\rho=0} = E^{(0)}_{I}[\Phi^{[2k+N+2]}_J]|_{\rho=0} + l \sum_{s=1}^{2k+N+2} l^{s-1} E^{(s)}_{I}[\Phi_J^{[2k+N+1]}]|_{\rho=0}
    \end{equation}
    We again change co-ordinates via $\rho = r(\kappa^{[2k+N+1]}v+1)$, $\tau = \frac{1}{\kappa^{[2k+N+1]}}\log (\kappa^{[2k+N+1]}v+1)$, in which $I=(\tau A)$ and $I=\tau$ are proportional to positive boost weight components. As discussed above, the only positive boost weight quantity made from $\Phi_J^{[2k+N+1]}$ that doesn't vanish on the horizon by the induction hypotheses is $\partial_v^q A_v^{[2k+N+1]}$. By gauge invariance, if an $A_\mu$ appears, so must a $\partial^a \phi^* \partial^b \phi$. Therefore the only positive boost weight terms left must have the combination
    \be \label{phiAphi}
        (\partial^{a_1}_v \partial^{b_1}_r \phi^*) \, (\partial_v^{q_1} A_v) ... (\partial_{v}^{q_M} A_v)\, (\partial^{a_2}_v \partial^{b_2}_r \phi)
    \ee
    for some $M\geq 1$, with $b_1\geq a_1$, $b_2\geq a_2$ and overall boost weight $a_1+a_2-b_1-b_2 + M + \sum q_i \geq 1$. But by (\ref{orderdvAv}), (\ref{orderdrphi}) and the inductive hypothesis (which implies $N_b=k-b(N+1)$), on the horizon this is of order
    \be
    \begin{split}
        2k+2 + \left(a_1+a_2+\sum q_i \right)(n+1)+&(M-b_1-b_2)(N+1)\\
        &\geq 2k + 2 + \left(a_1+a_2-b_1-b_2 + M + \sum q_i\right)(N+1)\\
        &\geq 2k+2 + N+1
    \end{split}
    \ee
    Therefore, since the higher derivative terms in (\ref{order2kN2}) come with at least one extra power of $l$, the remaining positive boost weight quantities (\ref{phiAphi}) cannot appear until order $l^{2k+N+4}$. Hence the higher derivative terms can be safely ignored at order $l^{2k+N+2}$ and we can get $\partial_A X^{[2k+N+2]}\big|_{\rho=0} = 0$ and $F^{[2k+N+2]}_{\tau A}\big|_{\rho=0} = 0$ using the same method as the real scalar field case ($k<m$ so the $A_\tau \phi^* \phi$ terms don't appear in $E_I^{(0)}$ at this order). Furthermore, we can repeat this at order $l^{2k+N+3}$ to get $\partial_A X^{[2k+N+3]}\big|_{\rho=0} = 0$ and $F^{[2k+N+3]}_{\tau A}\big|_{\rho=0} = 0$, which lets the first two inductive hypotheses proceed.

    Turning to the third hypothesis, we start with $p=1$, i.e. we show $\partial_\rho \phi^{(k+1-(N+1))}\big|_{\rho=0} = 0$. To do this we study
    \begin{equation}\label{Eorderk1}
        \text{At order} \,\, l^{k+1}, \quad E[\Phi_J]\big|_{\rho=0} = E^{(0)}[\Phi^{[k+1]}_J]|_{\rho=0} + l \sum_{s=1}^{k+1} l^{s-1} E^{(s)}[\Phi_J^{[k]}]|_{\rho=0}
    \end{equation}
    As ever, make the co-ordinate change $\rho = r(\kappa^{[k]}v+1)$, $\tau = \frac{1}{\kappa^{[k]}}\log (\kappa^{[k]}v+1)$ in the higher derivative terms. As discussed in Lemma \ref{philemma}, every term in $E[\Phi_J]$ is at least linear in $\phi$ or its derivatives. $\phi^{[m]}|_{\rho=0}=0$ so it cannot appear undifferentiated in the above. By (\ref{orderdrphi}) and the inductive hypothesis, zero boost weight derivatives $(\partial_v \partial_r)^a \phi^{[k]}$ are at least order $a(n+1) + k -a(N+1)+1\geq k+1$ on the horizon, and so cannot appear in the higher derivative terms due to the extra factor of $l$. Furthermore, all positive boost weight derivatives of $\phi^{[k]}$ also vanish on the horizon because $\partial_A X^{[k]}\big|_{\rho=0} = 0$. This leaves negative boost weight derivatives of $\phi$, however since $E[\Phi_J]$ is overall zero boost weight, these must come multiplied by positive boost weight factors. The only $\Phi^{[k]}_J$ positive boost weight quantities that are non-vanishing on the horizon are $\partial_v^q A_v^{[k]}$, hence we are left with combinations of the form
    \be \label{phiAv}
        (\partial^{a}_v \partial^{b}_r \phi) \, (\partial_v^{q_1} A_v) ... (\partial_{v}^{q_M} A_v)
    \ee
    with $b>a$ and $M\geq 1$ and overall non-negative boost weight $a-b + M + \sum q_i \geq 0$. But on the horizon this is of order
    \be
    \begin{split}
        k+1 + \left(a+\sum q_i \right)(n+1)+&(M-b)(N+1)\\
        &\geq k + 1 + \left(a-b + M + \sum q_i\right)(N+1)\\
        &\geq k+1
    \end{split}
    \ee
    and so once again cannot appear in the higher derivative terms at order $l^{k+1}$ due to the extra factor of $l$. Therefore we can safely ignore these terms, and just look at $E^{(0)}[\Phi^{[k+1]}_J]|_{\rho=0}$, which gives 
    \begin{equation} \label{E0orderk1}
        \text{At order} \,\, l^{k+1}, \quad E[\Phi_J]\big|_{\rho=0} = -2 i \lambda l^{k+1} A^{(N+1)}_{\tau} \partial_{\rho}\phi^{(k+1-(N+1))}\big|_{\rho=0} = 0
    \end{equation}
    therefore $\partial_{\rho}\phi^{(k+1-(N+1))}\big|_{\rho=0}=0$ as desired. Now induct on the number of $\rho$ derivatives, i.e. assume $\partial_\rho^p \phi^{[k+1-p(N+1)]}\big|_{\rho=0} = 0$ for all $p\leq s$ for some $s\geq 1$. Then look at $(\nabla_\rho)^s E[\Phi_J]=0$ on the horizon at order $l^{k+1-s(N+1)}$. Change co-ordinates and note that the result has overall boost weight is $-s$. Once again, it can be shown the higher derivative terms can be ignored by calculating the order of the terms like (\ref{phiAv}) that can appear and using the inductive hypotheses. We end up with
    \begin{equation}
        \text{At order} \,\, l^{k+1-s(N+1)}, \quad (\nabla_\rho)^s E[\Phi_J]\big|_{\rho=0} = -2 i \lambda l^{k+1-s(N+1)} A^{(N+1)}_{\tau} \partial^{s+1}_{\rho}\phi^{(k+1-(s+1)(N+1))}\big|_{\rho=0} = 0
    \end{equation}
    and therefore $\partial^{s+1}_{\rho}\phi^{(k+1-(s+1)(N+1))}\big|_{\rho=0} = 0$. This completes the induction over $\rho$ derivatives, which completes the overall induction. 

    We can run the induction until $k=m$ to get $\partial_A X^{[2m+N+1]}\big|_{\rho=0} = 0$, $F^{[2m+N+1]}_{\tau A}\big|_{\rho=0} = 0$, and $\forall p\geq 1\,\,  \partial_\rho^p \phi^{[m-p(N+1)]}\big|_{\rho=0} = 0$. We cannot go a full step further because $\phi^{(m+1)}|_{\rho=0}$ is not necessarily vanishing in $E^{(0)}[\Phi^{[m+1]}_J]|_{\rho=0}$ at order $l^{m+1}$ and so equation (\ref{E0orderk1}) would be much more complicated. However we can go one order further in $E_{I}[\Phi_J]|_{\rho=0}$ for $I=\tau$ and $I=(\tau A)$ because the problem terms (\ref{phiAphi}) now don't appear until order $l^{2m+N+4}$ in the higher derivative terms, and the $A_\tau \phi^* \phi$ terms don't appear in $E_I^{(0)}$ until order $l^{2m+N+3}$. Hence we can prove $\partial_A X^{[2m+N+2]}\big|_{\rho=0} = 0$, $F^{[2m+N+2]}_{\tau A}\big|_{\rho=0} = 0$ which completes the lemma.
\end{proof}

If $\phi$ vanishes to all orders on the horizon (i.e. we can take $m\rightarrow \infty$) then we see $\partial_A X\big|_{\rho=0}$ and $F_{\tau A}\big|_{\rho=0}$ also vanish to all orders and so the Generalized 0th Law holds. Therefore let us assume that $\phi^{(m+1)}(x^A)|_{\rho=0}$ is non-zero for some $x^A$. Let us try to go one step further in the induction and look at $E_{\tau}[\Phi_J]|_{\rho=0}$ at order $l^{2m+N+3}$. The higher derivative terms still vanish because terms of the form (\ref{phiAphi}) don't appear until order $l^{2m+N+4}$ as discussed above. However $A_\tau |\phi|^2\big|_{\rho=0} = l^{2m+N+3}A^{(N+1)}_\tau |\phi^{(m+1)}|^2\big|_{\rho=0}$ and so appears in $E^{(0)}_{\tau}[\Phi_J]|_{\rho=0}$:
\begin{multline}
    \text{At order} \,\, l^{2m+N+3}, \quad E_{\tau}[\Phi_J]\Big|_{\rho=0} = 2\lambda^2 l^{2m+N+3} A^{(N+1)}_\tau |\phi^{(m+1)}|^2\\
    + l^{2m+N+3} h^{(0) A B} \DD^{(0)}_A \left[ c_1(0) F^{(2m+N+3)}_{\tau B} - 8 c_2(0) \epsilon_{\,\,B}^{(0)\,C} F^{(2m+N+3)}_{\tau C} \right] = 0
\end{multline}
However, we can now repeat the same analysis as Case 1 by integrating against $\sqrt{h^{(0)}}$ over $C(\tau)$ to get $A^{(N+1)}_\tau|_{\rho=0}=0$, which contradicts our assumption that $A^{(N+1)}_\tau|_{\rho=0}\neq 0$! 

Lemma \ref{Lemma2} only holds for $N<n$, and hence this contradiction only applies up to $N=n-1$. Therefore we can conclude one of the following must be true from the logic so far: a) $A^{[n]}_\tau|_{\rho=0} = 0$, or b) $A^{[n]}_\tau|_{\rho=0} \neq 0$ and $\phi|_{\rho=0}\equiv 0$ to all orders in which case the Generalized 0th Law holds. 

Taking forward (a), if $\kappa$ vanishes to all orders (i.e. we can take $n\rightarrow \infty$) then so does $A_\tau$, which would prove the Generalized 0th Law. Therefore, we assume $\kappa^{(n+1)}\neq 0$, in which case we can prove another lemma:

\begin{lemma} \label{Lemma3}
    If $\phi^{[m]}|_{\rho=0}=0$, $A_\tau^{[n]}|_{\rho=0} = 0$, $\kappa^{[n]}=0$, and $\kappa^{(n+1)}|_{\rho=0} \neq 0$ then 
    \be \label{kappan1lemma}
        \partial_A X^{[2m+n+2]}\big|_{\rho=0} = 0, \quad F^{[2m+n+2]}_{\tau A}\big|_{\rho=0} = 0,\,\,\text{ and }\,\,\forall p\geq 1\,\,  \partial_\rho^p \phi^{[m-p(n+1)]}\big|_{\rho=0} = 0 
    \ee
\end{lemma}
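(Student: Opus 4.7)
The plan is to mimic the proof of Lemma \ref{Lemma2} almost verbatim, with the role of the non-vanishing driving quantity $A_\tau^{(N+1)}$ there played here by $\kappa^{(n+1)}$. The crucial algebraic observation is that, under the hypotheses of Lemma \ref{Lemma3}, the coefficient of $\partial_\rho \phi$ appearing in the leading-order scalar equation (\ref{E0Phi}) evaluated on the horizon is
\be
    (X - 2i\lambda A_\tau)\big|_{\rho=0} = l^{n+1}\left(2\kappa^{(n+1)} - 2i\lambda A_\tau^{(n+1)}\big|_{\rho=0}\right) + O(l^{n+2}),
\ee
and this leading piece is non-zero because its real part $2\kappa^{(n+1)}$ is non-zero. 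This plays the same role here as the non-vanishing $-2i\lambda A_\tau^{(N+1)}$ did in Lemma \ref{Lemma2} and drives the extraction of $\partial_\rho^p \phi$ conditions at each stage.

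I would set up the induction on $0\leq k < m$ with hypothesis $\partial_A X^{[2k+n+1]}\big|_{\rho=0} = 0$, $F_{\tau A}^{[2k+n+1]}\big|_{\rho=0} = 0$, and $\partial_\rho^p \phi^{[k - p(n+1)]}\big|_{\rho=0} = 0$ for all $p\geq 1$. For the base case $k=0$, the assumptions $A_\tau^{[n]}\big|_{\rho=0} = 0$ and $\kappa^{[n]} = 0$, together with the Generalized 0th Law already established at orders below $n+1$, imply via the arguments of Section \ref{PBWQH} (the co-ordinate transformation being the identity when $\kappa$ vanishes) that all positive boost weight quantities built from $\Phi_J^{[n]}$ vanish on the horizon. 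At order $l^{n+1}$ the higher-derivative contributions to $E_{\tau A}\big|_{\rho=0}$ and $E_\tau\big|_{\rho=0}$ therefore drop out, and the real scalar field analysis yields $\partial_A X^{[n+1]}\big|_{\rho=0} = 0$ and $F^{[n+1]}_{\tau A}\big|_{\rho=0} = 0$; the third condition is vacuous.

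For the inductive step, I would repeat the order-counting argument of Lemma \ref{Lemma2} with improved estimates coming from the stronger hypotheses. Because $\kappa^{[n]} = 0$ and $A_\tau^{[n]}\big|_{\rho=0} = 0$, the formula (\ref{orderdvAv}) shows $\partial_v^q A_v^{[s]}\big|_{r=0}$ vanishes at least to order $l^{(q+1)(n+1)}$, and (\ref{orderdrphi}) combined with the $\phi$-part of the induction hypothesis shows $\partial_v^a \partial_r^b \phi^{[s]}\big|_{r=0}$ vanishes at least to order $l^{a(n+1) + k - b(n+1) + 1}$. Substituting these into the dangerous positive boost weight combinations analogous to (\ref{phiAphi}), and into the corresponding combinations in $E$, confirms that all unknown higher-derivative contributions can be ignored throughout the inductive step. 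The metric and Maxwell statements at orders $l^{2k+n+2}$ and $l^{2k+n+3}$ then follow from the real scalar field arguments, and iterating $\partial_\rho^p$ in the scalar equation $E[\Phi_J]\big|_{\rho=0} = 0$ at the successively lower orders $l^{k+1 - p(n+1)}$ yields $\partial_\rho^p \phi^{[k+1 - p(n+1)]}\big|_{\rho=0} = 0$ for all $p\geq 1$, the coefficient multiplying the newly-extracted $\partial_\rho^{p+1}\phi$ at each stage being a non-zero multiple of $2\kappa^{(n+1)} - 2i\lambda A_\tau^{(n+1)}\big|_{\rho=0}$.

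The induction is run up to $k = m$. One further step to order $l^{2m+n+2}$ for the metric and Maxwell statements is possible because the charged potential contributions of schematic form $\lambda^2 A_\tau |\phi|^2$ in $E^{(0)}_\tau$ first appear only at order $l^{2m+n+3}$ (since $|\phi|^2\big|_{\rho=0}$ is $O(l^{2m+2})$ and $A_\tau\big|_{\rho=0}$ is $O(l^{n+1})$), while the problematic higher-derivative combinations like (\ref{phiAphi}) are pushed to even higher order by the same counting. The principal obstacle I expect is keeping the order-counting bookkeeping straight across all three pieces of the inductive hypothesis: every bound has to track the combined suppression from $\kappa^{[n]}=0$, $A_\tau^{[n]}\big|_{\rho=0}=0$ and the cumulative $\phi$-vanishing, and the interplay among these three sources of suppression makes verifying that all unknown higher-derivative terms are pushed out of the way at each order appreciably more delicate than in Lemma \ref{Lemma2}.
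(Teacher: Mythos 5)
Your proposal is correct and follows essentially the same route as the paper, which likewise proves Lemma \ref{Lemma3} by re-running the induction of Lemma \ref{Lemma2} with $N=n$ and replacing the driving non-vanishing quantity: the coefficient of $\partial_\rho\phi$ becomes $X^{(n+1)}-2i\lambda A_\tau^{(n+1)}$, non-zero because its real part $2\kappa^{(n+1)}$ is non-zero. Your order-counting for $\partial_v^q A_v$, $\partial_v^a\partial_r^b\phi$ and the $\lambda^2 A_\tau|\phi|^2$ terms matches the paper's bookkeeping.
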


\begin{proof}
    Follows using the same steps as the proof of Lemma \ref{Lemma2} with $N=n$ except we use the non-vanishing of $\kappa^{(n+1)}$ rather than $A^{(N+1)}$ to conclude the vanishing of $\rho$-derivatives of $\phi$. E.g. (\ref{E0orderk1}) becomes
    \begin{equation}
        \text{At order} \,\, l^{k+1}, \quad E[\Phi_J]\big|_{\rho=0} =l^{k+1} (X^{(n+1)} -2 i \lambda  A^{(n+1)}_{\tau}) \partial_{\rho}\phi^{(k+1-(n+1))}\big|_{\rho=0} = 0
    \end{equation}
    $X^{(n+1)}|_{\rho=0}$ is proved to be constant in the base case of the main induction, and is non-zero because $\kappa^{(n+1)}|_{\rho=0} \neq 0$. Therefore we can conclude $\partial_{\rho}\phi^{(k+1-(n+1))}\big|_{\rho=0}=0$. Higher $\rho$-derivatives follow similarly.
\end{proof}

We see once again that if $\phi$ vanishes to all orders on the horizon then the Generalized 0th Law holds. Therefore we are left with the final case to deal with: $\phi^{[m]}|_{\rho=0}=0$, $\phi^{(m+1)}(x^A)|_{\rho=0}$ is non-zero for some $x^A$, $\kappa^{[n]}=0$, $\kappa^{(n+1)}|_{\rho=0} \neq 0$, and $A_\tau^{[n]}|_{\rho=0} = 0$. We perform our final induction on this case, which has hypothesis
\be
    \partial_A X^{[k+2m+1]}\big|_{\rho=0} = 0, \quad F^{[k+2m+1]}_{\tau A}\big|_{\rho=0} = 0, \quad A_\tau^{[k-1]}|_{\rho=0} = 0
\ee
for $k\geq n+1$. The base case $k=n+1$ follows from Lemma \ref{Lemma3}, from which we also have $\forall p\geq 1,\,\,  \partial_\rho^p \phi^{[m-p(n+1)]}\big|_{\rho=0} = 0$.

We now assume true for $k$. For $I=(\tau A)$ and $I=\tau$ we study
\begin{equation} \label{finalinductorder}
       \text{At order} \,\, l^{k+2m+2}, \quad E_{I}[\Phi_J]\big|_{\rho=0} = E^{(0)}_{I}[\Phi^{[k+2m+2]}_J]|_{\rho=0} + l \sum_{s=1}^{k+2m+2} l^{s-1} E^{(s)}_{I}[\Phi_J^{[k+2m+1]}]|_{\rho=0}
\end{equation}
Change co-ordinates via $\rho = r(\kappa^{[k+2m+1]}v+1)$, $\tau = \frac{1}{\kappa^{[k+2m+1]}}\log (\kappa^{[k+2m+1]}v+1)$, in which $I=(\tau A)$ and $I=\tau$ are proportional to positive boost weight components. As in Lemma \ref{Lemma2}, the only positive boost weight quantities made from $\Phi_J^{[k+2m+1]}$ that don't necessarily vanish on the horizon by the induction hypotheses are in the combination
\be
    (\partial^{a_1}_v \partial^{b_1}_r \phi^*) \, (\partial_v^{q_1} A_v) ... (\partial_{v}^{q_M} A_v)\, (\partial^{a_2}_v \partial^{b_2}_r \phi)
\ee
for some $M\geq 1$, with $b_1\geq a_1$, $b_2\geq a_2$ and overall boost weight $a_1+a_2-b_1-b_2 + M + \sum q_i \geq 1$. But by (\ref{orderdvAv}), (\ref{orderdrphi}) and the inductive hypotheses, on the horizon this is of order
\be
\begin{split}
    2m+2 + \left(a_1+a_2-b_1-b_2+\sum q_i \right)(n+1)+Mk &\geq 2m + 2 + n+1+M(k-n-1) \\
    &\geq 2m+2 + k
\end{split}
\ee
where in the last step we used $M\geq 1$ and $k\geq n+1$. Therefore we can once again ignore the higher derivative terms in (\ref{finalinductorder}) because they come with an additional power of $l$. 

Hence, 
\begin{multline}
    \text{At order} \,\, l^{k+2m+2}, \quad E_{\tau}[\Phi_J]\Big|_{\rho=0} = 2\lambda^2 l^{k+2m+2} A^{(k)}_\tau |\phi^{(m+1)}|^2\\
    + l^{k+2m+2} h^{(0) A B} \DD^{(0)}_A \left[ c_1(0) F^{(k+2m+2)}_{\tau B} - 8 c_2(0) \epsilon_{\,\,B}^{(0)\,C} F^{(k+2m+2)}_{\tau C} \right] = 0
\end{multline}
Identically to Case 1, we integrate this against 
$\sqrt{h^{(0)}}$ and then against $\sqrt{h^{(0)}} A^{(k)}_\tau$ to get $A_\tau^{(k)}|_{\rho=0} = 0$ and $F^{(k+2m+2)}_{\tau A}\big|_{\rho=0} = 0$ respectively.

Finally,
\begin{multline}
    \text{At order} \,\, l^{k+2m+2}, \quad E_{\tau A}[\Phi_J]\Big|_{\rho=0} = -\frac{1}{2} l^{k+2m+2} \partial_{A} X^{(k+2m+2)}
    - \frac{1}{2} i l^{k+2m+2} \lambda A^{(k)}_\tau \Big(\phi^{(m+1)*} \partial_A \phi^{(m+1)}\\
    - \phi^{(m+1)} \partial_A \phi^{(m+1)*} - 2i\lambda A^{(0)}_A |\phi^{(m+1)}|^2\Big) -
    \frac{1}{2} c_1(0)\left( F^{(0)}_{A B} h^{(0) B C} - F^{(0)}_{\tau \rho} \delta_{A}^{C} \right) l^{k+2m+2} F^{(k+2m+2)}_{\tau C} = 0
\end{multline}
into which we substitute $A_\tau^{(k)}|_{\rho=0} = 0$ and $F^{(k+2m+2)}_{\tau A}\big|_{\rho=0} = 0$ to get $\partial_{A} X^{(k+2m+2)}\big|_{\rho=0} = 0$. Hence the induction proceeds and we have proved the Generalized 0th Law.

\subsection{Full Expressions for $\nabla^{\beta} F_{\alpha \beta}$ in affinely parameterized GNCs} \label{dFAppendix}

The components of $\nabla^{\beta} F_{\alpha \beta}$ in affinely parameterized GNCs are as follows:

\be
\begin{split}
    \nabla^{\beta}F_{v \beta} = &\partial_{v}{\psi} + D^{A}{K_{A}}+K \psi+ r \big(-D^{A}{\psi} \beta_{A}+\beta^{A} \partial_{r}{K_{A}}-D^{A}{\beta^{B}} F_{A B}-\beta^{A} \beta_{A} \psi-\bar{K}^{A} \partial_{v}{\beta_{A}}+\\
    &K^{A} \bar{K} \beta_{A}-D^{A}{\beta_{A}} \psi-2K^{A} \bar{K}_{A}\,^{B} \beta_{B}\big) +{r}^{2} \big(-\alpha \partial_{r}{\psi}-\beta^{A} \beta_{A} \partial_{r}{\psi}-F^{A B} \beta_{A} \partial_{r}{\beta_{B}}+\\
    &D^{A}{\beta^{B}} \bar{K}_{A} \beta_{B}+D^{A}{\alpha} \bar{K}_{A}-D^{A}{\beta^{B}} \bar{K}_{B} \beta_{A}+\bar{K}^{A} \alpha \beta_{A}-\beta^{A} \partial_{r}{\beta_{A}} \psi-\bar{K} \alpha \psi-\bar{K} \beta^{A} \beta_{A} \psi+\\
    &2\bar{K}^{A B} \beta_{A} \beta_{B} \psi\big) +{r}^{3} \left(-\bar{K}^{A} \alpha \partial_{r}{\beta_{A}}-\bar{K}^{A} \beta^{B} \beta_{B} \partial_{r}{\beta_{A}}+\bar{K}^{A} \beta_{A} \beta^{B} \partial_{r}{\beta_{B}}+\bar{K}^{A} \beta_{A} \partial_{r}{\alpha}\right)
\end{split}
\ee

\be 
    \nabla^{\beta}F_{r \beta} = -\partial_{r}{\psi}+D^{A}{\bar{K}_{A}}+\bar{K}^{A} \beta_{A}-\bar{K} \psi+ r \left(\beta^{A} \partial_{r}{\bar{K}_{A}}-2\bar{K}^{A} \bar{K}_{A}\,^{B} \beta_{B}+\bar{K}^{A} \partial_{r}{\beta_{A}}+\bar{K} \bar{K}^{A} \beta_{A}\right)
\ee

\be \label{FullFA}
\begin{split}
    \nabla^{\beta}F_{A \beta} = &-\partial_{r}{K_{A}}-\partial_{v}{\bar{K}_{A}} + D^{B}{F_{A B}}+2K_{A B} \bar{K}^{B}+2K^{B} \bar{K}_{A B}-\beta_{A} \psi-K \bar{K}_{A}-K_{A} \bar{K}+F_{A}\,^{B} \beta_{B}+\\
    &r \big(-F^{B C} \bar{K}_{A B} \beta_{C}+D^{B}{\beta_{A}} \bar{K}_{B}+\bar{K}_{A}\,^{B} \beta_{B} \psi-\partial_{r}{\beta_{A}} \psi+\frac{1}{2}\bar{K}^{B} \beta_{A} \beta_{B}+F_{A}\,^{B} \bar{K} \beta_{B}-\\
    &D^{B}{\beta_{B}} \bar{K}_{A}+F_{A}\,^{B} \partial_{r}{\beta_{B}}-\bar{K}_{A} \beta^{B} \beta_{B}-2\bar{K}_{A} \alpha\big) +{r}^{2} \big(\bar{K}^{B} \bar{K}_{A B} \alpha+\bar{K}^{B} \bar{K}_{A B} \beta^{C} \beta_{C}+\\
    &\frac{1}{2}\bar{K}^{B} \beta_{B} \partial_{r}{\beta_{A}}-\bar{K} \bar{K}_{A} \alpha-\bar{K} \bar{K}_{A} \beta^{B} \beta_{B}-\bar{K}_{A} \beta^{B} \partial_{r}{\beta_{B}}-\bar{K}_{A} \partial_{r}{\alpha}\big)
\end{split}
\ee

These were calculated using the symbolic algebra programme Cadabra \cite{Cadabra1}\cite{Cadabra2} \cite{Cadabra3}. To get equation (\ref{EA0}), we use (\ref{dvKA}) to eliminate $\partial_{v}{\bar{K}_{A}}$ in (\ref{FullFA}).

\end{document}